\providecommand{\U}[1]{\protect\rule{.1in}{.1in}}
\theoremstyle{plain}
\newtheorem{theorem}{Theorem}
\newtheorem{corollary}{Corollary}
\newtheorem{lemma}{Lemma}
\theoremstyle{definition}
\newtheorem{remark}{Remark}
\title{On Multiuser Gain and the Constant-Gap Sum Capacity of the Gaussian Interfering Multiple Access Channel}
\author{
\IEEEauthorblockN{Rick Fritschek and Gerhard Wunder\\}
\IEEEauthorblockA{Heisenberg Communications and Information Theory Group\\
    Freie Universit\"at Berlin, \\
    Takustr. 9,
    D--14195 Berlin, Germany\\
    Email: rick.fritschek@fu-berlin.de, g.wunder@fu-berlin.de
    \thanks{This paper was presented in part at the ICC 2014 \cite{Fritschek2014a}, ITW 2014\cite{Fritschek2014b}, ICC 2015\cite{Fritschek2015a} and ISIT 2015\cite{Fritschek2015b}}}
}
\begin{document}
\maketitle
\begin{abstract}
Recent investigations have shown sum capacity results within a constant bit-gap for several channel models, e.g. the two-user Gaussian interference channel (G-IC), k-user G-IC or the Gaussian X-channel. This has motivated investigations of interference-limited multi-user channels, for example, the Gaussian interfering multiple access channel (G-IMAC). Networks with interference usually require the use of interference alignment (IA) as a technique to achieve the upper bounds of a network. A promising approach in view of constant-gap capacity results is a special form of IA called signal-scale alignment, which works for time-invariant, frequency-flat, single-antenna networks. However, until now, results were limited to the many-to-one interference channel and the Gaussian X-channel. To make progress on this front, we investigate signal-scale IA schemes for the G-IMAC and aim to show a constant-gap capacity result for the G-IMAC. We derive a constant-gap sum capacity approximation for the lower triangular deterministic (LTD)-IMAC and see that the LTD model can overcome difficulties of the linear deterministic model. We show that the schemes can be translated to the Gaussian IMAC and that they achieve capacity within a constant gap. We show that multi-user gain is possible in the whole regime and provide a new look at cellular interference channels.
\end{abstract}
\begin{IEEEkeywords}
Interfering multiple access channel, deterministic model, interference alignment, capacity approximation.
\end{IEEEkeywords}

\date{June 13, 2014}

\section{Introduction}
One of the main limiting factors of cellular networks is interference. Throughout the last decades, research was conducted to investigate the role of interference in information theory. Several channel models were proposed, in which interference is one of the main limiting factors. But even for the simplest one, the two-user inference channel (IC), the capacity characterisation is an unsolved problem for more than 40 years. However, Etkin, Tse and Wang \cite{etkin2008} have achieved a major breakthrough, a capacity result to within one bit for the Gaussian IC. They used a fundamental principle: if something is too hard to solve as one, you need to divide it in smaller simpler parts. And in the course of their investigation, defined the concept of {\it generalized degrees of freedom }(GDoF)\begin{equation*}
d_{\text{sym}}(\alpha):=\lim_{\text{SNR,INR} \rightarrow \infty;\tfrac{\log \text{INR}}{\log \text{SNR}}=\alpha} \tfrac{C_{\text{sym}}(\text{INR,SNR})}{C_{\text{awgn}}(\text{SNR})},
\end{equation*}
which can be viewed as an interference dependent notion of degrees of freedom (DoF). This achievement motivated a series of investigations of more complex channel models (e.g. \cite{Bresler2010}, \cite{Niesen-Ali}), towards constant-gap capacity approximations. However, even constant-gap capacity results were hard to obtain, due to noise properties of the Gaussian channel models. A branch of research therefore investigated deterministic models. In particular, the so-called linear deterministic model (LDM) emerged with the work of Avestimehr, Diggavi and Tse \cite{Avestimehr2007}. The LDM approximates the channel by the binary expansion of the real signals. The coefficients of this binary expansion are viewed as bit-vectors and positions within these vectors are called levels. These bit-vectors are truncated at noise level, such that the noise corrupted bits are removed from the channel model, resulting in a deterministic approximation. Moreover, channel gain is approximated by a power of two $2^n$, $n=\left\lceil \log \text{SNR}\right\rceil$, which results in a downshift of the bit-vector. Superposition is modelled as level-wise binary addition. Surprisingly, this rather simple model results in capacity approximations to within a constant bit-gap of the corresponding Gaussian channel models, e.g. a 42-bit gap for the deterministic two-user IC \cite{Bresler2008}. Another research branch investigated the concept of interference alignment, introduced in \cite{4567589}, \cite{4418479} and \cite{4567443}. Interference alignment methods align interfering signal parts in some dimension and therefore make more interference-free dimensions available. These methods can be broadly categorized into two classes: vector-space alignment and Signal-space alignment \cite{Niesen-Ali}. In vector-space alignment methods, the dimensions of multiple antennas, time and frequency are used to align interfering signal parts into some sub-spaces. However, in single-antenna, time-invariant, frequency-flat channel models, these methods fail and the class of signal-scale alignment methods needs to be used. In these methods, techniques such as lattice coding, split the channel in several power layers, allowing for alignment of interference. It was shown in recent investigations, that LDM solutions are the basis for the corresponding signal-scale alignment methods and therefore provide a stepping-stone for constant bit-gap capacity results, see for example \cite{Sridharan08,Suvarup2011,Chaaban16}. An interesting application of interference alignment is its use for the Gaussian interfering multiple access channel (G-IMAC) in \cite{Suh2008}. In this investigation, the G-IMAC serves as a general simple model for cellular networks. It was shown that multiuser gain, in form of additional DoF, can be enabled by vector-space alignment using frequency and delay properties of the channel. The question is now, if multiuser gain is still present in the single-antenna, time-invariant, frequency-flat cellular networks, especially the G-IMAC as simplest model. Notable contributions towards an understanding of the G-IMAC were done in \cite{chaaban2011}, with a result for the strong interference regime.

\subsection{Contributions}
We investigate the single-antenna, time-invariant, frequency-flat Gaussian IMAC (G-IMAC). To make progress on this front, we start by investigating the linear deterministic approximation of the G-IMAC named LD-IMAC. We show that basic achievablity schemes from the linear deterministic MAC-P2P \cite{Buhler2012} can be extended towards the LD-IMAC. In those schemes, the orthogonality of bit-levels in the LDM bit-vectors is used to exploit the signal-scale shift between two cells. This results in the alignment of the interference in half of its bit-levels, effectively reducing the interference by half in the weak interference regime. Due to a coupling of both cells, the achievable schemes are limited to the weak interference regime. Moreover, due to dependence on the ratio of interference-to-direct signal strength $\alpha$, the achievable sum-rate has a step-like curve. However, converse proofs cannot assume orthogonality of bit-levels, which would mean a uniform distribution of the real signals and therefore results in a loose upper bound for certain values of $\alpha$. This yields a constant-gap sum capacity for just certain discrete points depending on $\alpha$. Using signal-scale alignment methods, i.e. layered lattice codes, we transfer the achievable scheme and the result to the Gaussian IMAC. Moreover, one can show that the LD-IMAC bounds are actually within a constant bit-gap of the Gaussian IMAC bounds. This yields a constant bit-gap sum capacity result for the G-IMAC at discrete points, again depending on the channel gains. This shows that the schemes stemming from the LD-IMAC and the LDM itself, is a sub-optimal approximation, which is also sub-optimal as an approximation of the sum-rate of the G-IMAC. As we later show, this sub-optimality stems from the LD models property of just allowing orthogonal bit-level assignment schemes. Interestingly, a new deterministic model was introduced in \cite{Niesen-Ali}, coined the lower triangular deterministic model (LTDM). In this model, the channel gain is not approximated by $2^n$ any more, but the remainder incorporated as binary expansion. This results in a discrete convolution between the bits of the channel gain and the bits of the real signal, and therefore yields a dependence between the bit-levels. This means that the model allows a broader form of achievable schemes, were the bit-levels do not need to be orthogonal. Since the problems with the LDM stem exactly from this necessity, an LTDM scheme could improve upon prior results. We show LTD-IMAC schemes for the whole interference range, which (completely) reach the LDM upper bounds in the weak interference case and hence improve upon prior results. Moreover, we transform the bounds towards the LTDM channel and develop new upper bounds for the remaining interference ranges. We therefore show the deterministic approximation of the sum capacity of the LTD-IMAC to within a constant bit-gap. Extending the proof methods of \cite{Niesen-Ali} towards the structure of our achievable scheme, we can use them to transfer the constant-gap results from the LTD-IMAC to the G-IMAC. This yields a constant-gap sum capacity approximation of the symmetric G-IMAC in the whole interference range, see Figure~\ref{GDOF_IMAC}.
\begin{figure}
\centering
\includegraphics[scale=0.4]{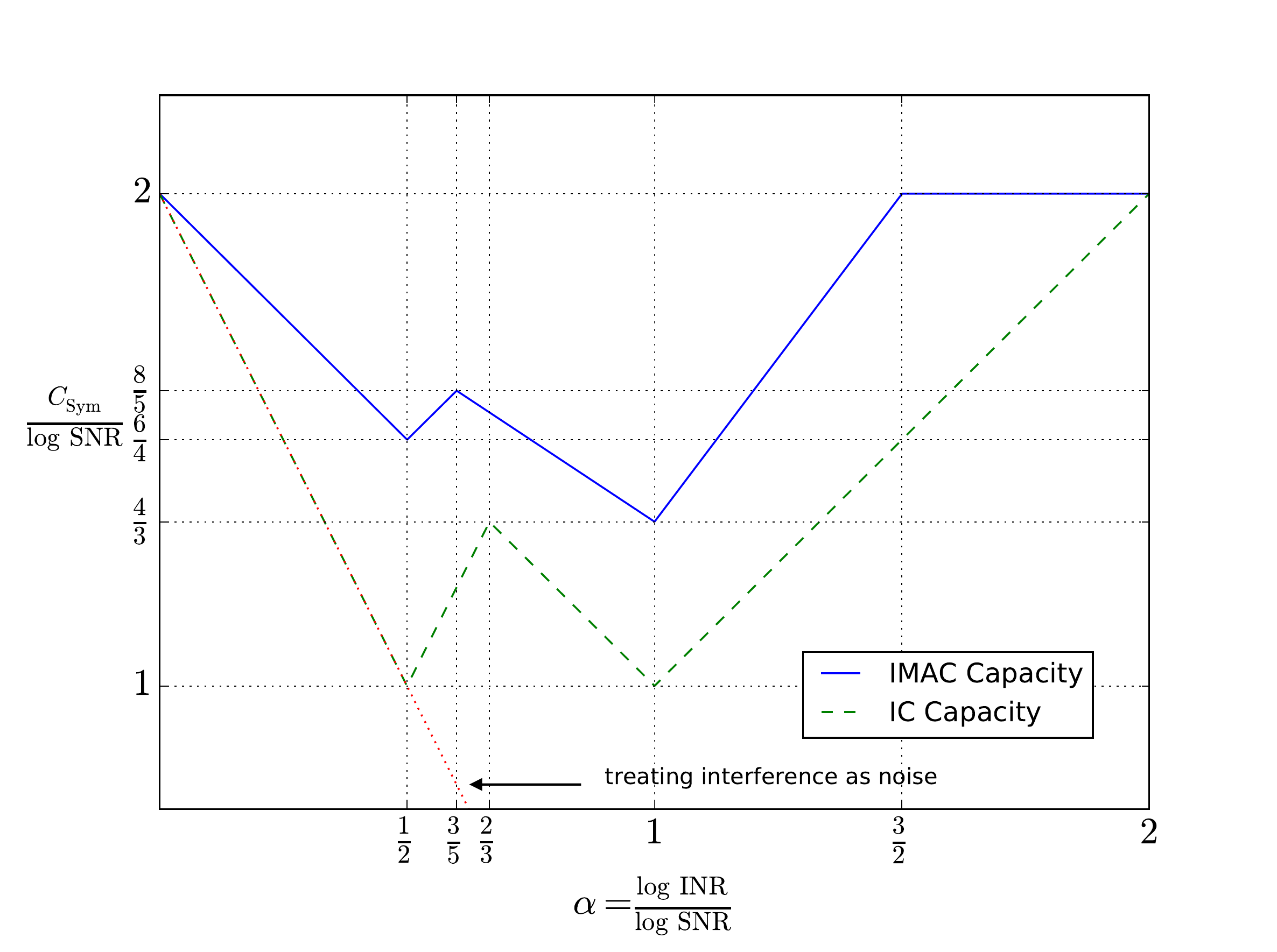}
\caption{The GDoF ``W'' curve for the symmetric Gaussian IC and the symmetric Gaussian IMAC. In the case of the IMAC, the SNR is the signal-noise-ratio of the strong direct links (associated with the coarse channel gain $2^{n_1}$). Moreover, the curve shows the case that the weaker direct links (associated with coarse channel gain $2^{n_2}$) are {\it strong enough} to support full multi-user gain.}
\label{GDOF_IMAC}
\end{figure}

\subsection{Organization}
We introduce the G-IMAC, the LD-IMAC and the LTD-IMAC in section \ref{System Model}. We then present an overview of the main results and theorems of this paper in section \ref{Main Results}. Analysis of the LD-IMAC and proofs for the theorems are presented in section \ref{Proof LD-Theorems}. In section \ref{Proof for LTD-IMAC} we present achievable schemes and upper bounds for the LTD-IMAC and in section \ref{Proof G-IMAC} we show that the LTD results can be transferred to the G-IMAC.

\section{System Model}
\label{System Model}
\subsection{Notation}

We denote vectors and matrices by lower case bold and upper case bold characters, respectively. For two vectors $\mathbf{a}$ and $\mathbf{b}$, we denote by $[\mathbf{a};\mathbf{b}]$ the vector that is obtained by stacking $\mathbf{a}$ over $\mathbf{b}$. 
To specify a particular range of elements in a bit-level vector we use the notation $\mathbf{a}_{[i:j]}$ to indicate that $\mathbf{a}$ is restricted to the bit-levels $i$ to $j$. If $i=1$, it will be omitted $\mathbf{a}_{[:j]}$, the same for $j\!=\!n$ $\mathbf{a}_{[i:]}$. Moreover, we use the notation of $a_{ik}^j$, where $j$ represents the receiver cell, and $i$ and $k$ the transmitter cell and user respectively.

\subsection{The Gaussian Interfering Multiple Access Channel}

\begin{figure}
\centering
\includegraphics[width=0.2 \textwidth]{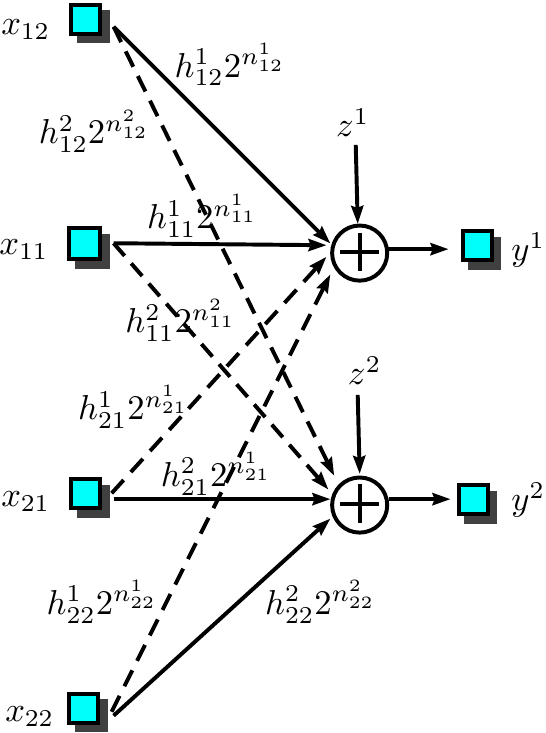}
	\caption{Illustration of the Gaussian IMAC}
    \label{system model}
\end{figure}

We consider the Gaussian interfering multiple access channel (IMAC), in which there are two Gaussian multiple access channels (MACs) interfering among themselves.
Therefore, the system consists of 4 transmitters and 2 receivers. Transmitters $x_{11}$ and $x_{12}$ together with the receiver $y^1$ and $x_{21}$, $x_{22}$ with $y^2$ each form a MAC and both are interfering with each other (see Fig. \ref{system model}).

The channel equations for a fixed time slot are given by 
\begin{IEEEeqnarray}{rCl}
y^1 &=& h_{11}^1 2^{n^1_{11}}x_{11}+h_{12}^1 2^{n_{12}^1} x_{12} + h_{21}^1 2^{n_{21}^1} x_{21}\IEEEyessubnumber\\
&&+\: h_{22}^1 2^{n_{22}^1} x_{22} +z^1\IEEEnonumber\\
y^2 &=& h_{21}^2 2^{n_{21}^2} x_{21}+h_{22}^2 2^{n_{22}^2}x_{22}+h_{11}^2  2^{n_{11}^2} x_{11}\IEEEyessubnumber\\
&&+\:h_{12}^2 2^{n_{12}^2} x_{12}+ z^2\IEEEnonumber,
\end{IEEEeqnarray}

where $z^j\in \mathcal{N}(0,1)$ is assumed to be zero mean and unit variance Gaussian noise. Also each transmitted signal has an associated unit average power constraint $E\{|x_{ik}|^2\}\leq 1$. The channel gains are composed of two parts. A coarse channel gain, $2^n$ with $n\in \mathbf{N}$ and a fine channel gain $h \in (1,2]$. However, both parts can model any real channel gain greater than 1, which is sufficient for a constant-gap analysis. We choose this channel notation following the notation of \cite{Niesen-Ali} to get a clear integration of linear deterministic and lower triangular deterministic ideas in the following investigation of the model. 
It is assumed that $n^1_{21}=n^1_{22}=:n_2^1$, $n^2_{11}=n^2_{12}=:n_1^2$, stating that the total interference strength caused by $\mathbf{x}_{ij}$ at the receivers is the same\footnote{This assumption is not necessary for the techniques to work. In fact, we just need a difference in the ratios of both direct links and both interference links, which we call shift-property. However, it simplifies the investigation, since the shift-property gets simpler.}. Note that the restriction for the coarse gain is justified in the case when the distance between the two cells is significantly larger than the cell dimension itself. However, for the fine channel gains of the interfering signals, we consider a simple modulation scheme. We take a similar approach as in \cite{Niesen-Ali}, to form the channel input such that the fine channel gains of the interfering signals coincide.

Each transmitter has one message to communicate to his receiver. As in the X-channel, we have 4 independent messages $w_{ik}$.
Assume that each message $w_{ik}$ is modulated into the signal $u_{ik}$.
The transmitters can now form the channel input such that
\begin{IEEEeqnarray}{rCl}
\label{modulation}
x_{11}&:=&h_{12}^2 u_{11}\IEEEyessubnumber\\
x_{12}&:=&h_{11}^2 u_{12}\IEEEyessubnumber\\
x_{21}&:=&h_{22}^1 u_{21}\IEEEyessubnumber\\
x_{22}&:=&h_{21}^1 u_{22}\IEEEyessubnumber.
\end{IEEEeqnarray}

We therefore see at the receiver side the following signals
\begin{IEEEeqnarray*}{rCl}
\label{Gauss_Model_h}
y^1 &=& h_{11}^1h_{12}^2 2^{n^1_{11}}u_{11}+h_{12}^1h_{11}^2 2^{n_{12}^1} u_{12}\IEEEyessubnumber\\
&&\:+ h_{21}^1h_{22}^1 2^{n_2^1} (u_{21}+  u_{22}) +z^1\\
y^2 &=& h_{21}^2h_{22}^1 2^{n_{21}^2} u_{21}+h_{22}^2h_{22}^1 2^{n_{22}^2}u_{22}\IEEEyessubnumber\\
&&\:+ h_{11}^2h_{12}^2  2^{n_1^2}( u_{11}+u_{12})+ z^2.
\end{IEEEeqnarray*}
Notice that we have modulated the signals in a way, such that the interference parts align at the unintended receiver, reproducing a similar structure as in the X-channel\footnote{Note that the difference lies in the coarse channel gains, in particular those of the aligning parts. The only case, where both structures are equal, is when {\it all} coarse channel gains are equal. In that case, both channel models reach $\tfrac{4}{3}$ DoF.}.
Now we can define
\begin{IEEEeqnarray*}{ccc"ccc}
g_{11}^1&:=&h_{11}^1h_{12}^2, & g_{21}^2&:=&h_{21}^2h_{22}^1\\
g_{12}^1&:=&h_{12}^1h_{11}^2, & g_{22}^2&:=&h_{22}^2h_{21}^1\\
g_{2}^1&:=&h_{21}^1h_{22}^1, & g_{1}^2&:=&h_{11}^2h_{12}^2.
\end{IEEEeqnarray*}
Where $g_{ik}^j\in(1,4]$ and we can rewrite (\ref{Gauss_Model_h}) as 
\begin{IEEEeqnarray*}{rCl}
\label{Gauss_Model_g}
y^1 &=& g_{11}^1 2^{n^1_{11}}u_{11}+g_{12}^1 2^{n_{12}^1} u_{12}\IEEEyessubnumber\\
&&+\: g_{2}^1 2^{n_2^1} (u_{21}+  u_{22}) +z^1\\
y^2 &=& g_{21}^2 2^{n_{21}^2} u_{21}+g_{22}^2 2^{n_{22}^2}u_{22}\IEEEyessubnumber\\
&&+\: g_{1}^2  2^{n_1^2}( u_{11}+u_{12})+ z^2.
\end{IEEEeqnarray*}


Moreover, we assume without loss of generality that $n^1_{11}\geq n^1_{12}$,  $n_{21}^2\geq n_{22}^2$ and the difference between the two coarse channel gains is denoted as $n^1_{11}-n^1_{12}=:\Delta_1$ and $n_{21}^2-n_{22}^2=:\Delta_2$.

\subsection{Linear Deterministic IMAC}
The linear deterministic model (LDM) models the input symbols at $x_{ik}$ as bit vectors $\mathbf{x}_{ik}$. This is achieved by a binary expansion of the real input signal. The resulting bits constitute the new bit vector. The positions within the vector will be referred to as levels. To model the signal impairment induced by noise, the bit vectors will be truncated at noise level and only the n most significant bits are received at $y^j$. This is done by shifting the incoming bit vector for $q-n$ positions $\mathbf{y}=\mathbf{S}^{q-n}\mathbf{x}$, with $q:=\max{n^j_{ik}}$.
Where the shift matrix $\mathbf{S}$ is defined as 
\begin{equation}
\mathbf{S}=\begin{pmatrix}
0 & 0 &  \cdots & 0 & 0\\
1 & 0 &  \cdots & 0 & 0\\
0 & 1 &  \cdots & 0 & 0\\
\vdots & \vdots & \ddots & \vdots & \vdots \\
0 & 0 &  \cdots & 1 & 0\\
\end{pmatrix}.
\end{equation}
By scaling of the channel gains, we may assume an unit average power constraint in the corresponding Gaussian channel. And therefore a peak power constraint on the channel inputs in the linear deterministic model \cite{Avestimehr2007}.
Superposition at the receivers is modelled via binary addition of the incoming bit vectors on the individual levels. Carry over is not used to limit the superposition on the specific level where it occurs. This enables schemes where bit-levels can be used independently. This property enables a form of bit-level alignment, which is used by our schemes. The channel gain is represented by $n^j_{ik}$-bit levels which corresponds to $\lceil\log \mbox{SNR}\rceil$ of the original channel. We therefore approximate the fine channel gain by one. 
With this definitions the model can be written as
\begin{IEEEeqnarray*}{rCl}
\mathbf{y}_1 & = & \mathbf{S}^{q-n^1_{11}}\mathbf{x}_{11}\oplus \mathbf{S}^{q-n^1_{12}}\mathbf{x}_{12}\oplus \mathbf{S}^{q-n^1_{21}}\mathbf{x}_{21}\IEEEyessubnumber\\
&&\oplus\: \mathbf{S}^{q-n^1_{22}}\mathbf{x}_{22}\\
\mathbf{y}_2 & = & \mathbf{S}^{q-n^2_{11}}\mathbf{x}_{11}\oplus \mathbf{S}^{q-n^2_{12}}\mathbf{x}_{12}\oplus \mathbf{S}^{q-n^2_{21}}\mathbf{x}_{21}\IEEEyessubnumber\\
&&\oplus\: \mathbf{S}^{q-n^2_{22}}\mathbf{x}_{22} \label{MAC-MAC}
\end{IEEEeqnarray*}

\begin{figure}
\centering
\includegraphics[width=0.2 \textwidth]{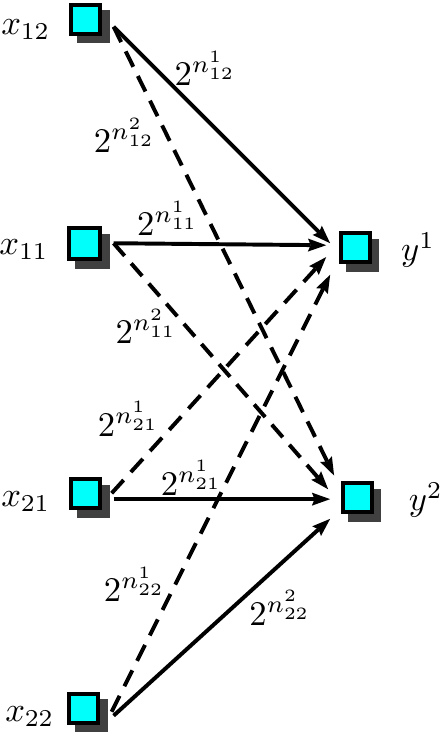}
	\caption{Illustration of the LD-IMAC}
    \label{system model}
\end{figure}

\subsection{Lower Triangular Deterministic IMAC}

%
The lower triangular deterministic model splits the channel gain in a fine channel gain $h\in [1,2)$, and a coarse channel gain $2^n$ with $ n\in \mathbb{N}$. Input symbols $x$ are modelled as bit vectors $\mathbf{x}$, which consist of the bits from the binary expansion of the real input signal. The positions within the vectors will be referred to as levels. Moreover, the fine channel gain $h$ is also written as a binary expansion resulting in a discrete convolution between the bits of $h$ and $x$, see \cite{Niesen-Ali} for a thorough exposition. To model the signal impairment induced by noise, the bits in the convolution with exponent less than zero (at noise level) will be truncated and only the $n$ most significant bits are received at $y$. The incoming bit vector can be written as $\mathbf{H}\mathbf{x}$,
where $\mathbf{H}$ is a lower triangular matrix defined as
\begin{equation}
\mathbf{H}=\begin{pmatrix}
1 & 0 &  \cdots & 0 & 0\\
[h]_1 & 1 &  \cdots & 0 & 0\\
[h]_2 & [h]_1 &  \cdots & 0 & 0\\
\vdots & \vdots & \ddots & \vdots & \vdots \\
[h]_{n-1} & [h]_{n-2} &  \cdots & [h]_1 & 1\\
\end{pmatrix},
\end{equation}
where $[h]_i$ represents the $i$-th bit in the binary expansion of $h$. Superposition at the receivers is modelled via binary addition of the incoming bit vectors. As in the LD-IMAC, signal strength differences in the multi-user model are incorporated by shifting the bit vector by $q-n_{ik}^j$ positions $\mathbf{\bar{x}_{ik}}=\mathbf{S}^{q-n_{ik}^j}\mathbf{x}_{ik}$. We introduce the notation of $\mathbf{\bar{x}}^c_{ik}=\mathbf{S}^{q-n_{ik}^j}\mathbf{x}_{ik}$ for $i \neq j$ for a better distinction between direct and interference signals. Additionally, we make some further symmetry assumptions. Since both MAC channels are coupled for $\alpha>\tfrac{1}{2}$ in a sense, that one cannot separate both achievable schemes, the number of cases would increase. In order to reduce the complexity of the investigation and presenting a clear overview, we assume that $n^1_{11}=n^2_{21}:=n_1$, $n^1_{12}=n^2_{22}:=n_2$, $n_1^2=n_2^1=n_i$. We remark that these assumptions keep the general channel structure intact, in particular the shift-property, and the proofs can be extended to the more general cases as well. Thus, the LTD-IMAC channel model is governed by the following equations
\begin{IEEEeqnarray*}{rCl}
\mathbf{y}^1 &=& \mathbf{H}_{11}^1 \mathbf{\bar{x}}_{11}\oplus\mathbf{H}_{12}^1 \mathbf{\bar{x}}_{12}  \oplus \mathbf{H}_2^1 \left(\mathbf{\bar{x}}^c_{21}\oplus \mathbf{\bar{x}}^c_{22} \right)\IEEEyessubnumber\\
\mathbf{y}^2 &=& \mathbf{H}_1^2(\mathbf{\bar{x}}^c_{11}\oplus\mathbf{\bar{x}}^c_{12})\oplus\mathbf{H}_2^2 \mathbf{\bar{x}}_{21}\oplus\mathbf{H}_{22}^2 \mathbf{\bar{x}}_{22},\IEEEyessubnumber
\end{IEEEeqnarray*}
where $\mathbf{y}^1,\mathbf{y}^2 \in \mathbb{F}_2^{n_1}$, $\mathbf{H}^1_{ik},\mathbf{H}^2_{ik}\in \mathbb{F}_2^{n_1\times n_1}$, $\mathbf{x}_{ik}\in \mathbb{F}_2^{n^j_{ik}}$ and $\mathbf{x}^c_{ik}\in \mathbb{F}_2^{n_i}$ with $i,j,k \in \{1,2\}$. Note that we used the specific modulation of \eqref{modulation}, to group the interference signal parts. The bits of the channel matrices correspond to the real values of $g_{ik}^j$, defined above.

\section{Main Result}
\label{Main Results}

In the following section we give an overview of the main results of this paper. We start by presenting the results for the linear deterministic approximation of the IMAC. We will see that the approximation is insufficient in a sense, that the achievable sum rate of the scheme does not achieve the upper bound at all points. And a possible transfer to the Gaussian IMAC, based on this structure, would also inherit the problem. We will therefore turn to the lower triangular deterministic model and show that it allows schemes which can achieve the upper bound and therefore establish the possibility of a constant-gap capacity result. In the last part we present the constant-gap capacity result for the G-IMAC.


\subsection{Approximate Capacity for the LD-IMAC}
\begin{figure}[b]
\centering
\includegraphics[scale=0.5]{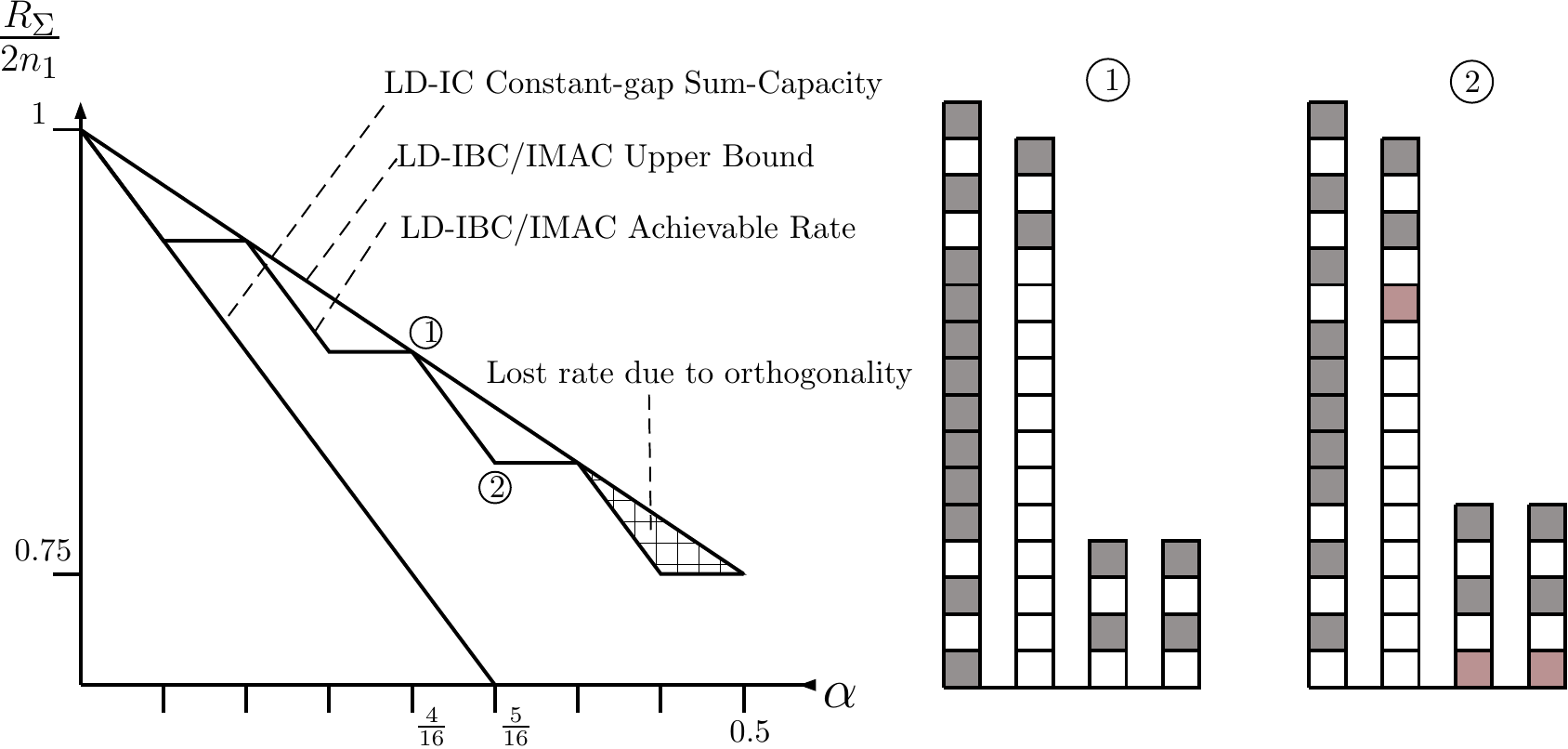}
\caption{GDoF of the LD-IMAC and LD-IBC in the weak interference regime in comparison to the LD-IC channel. The first part of the w-curve is depicted. Multi-user gain be seen, as a second user in each cell provides a DoF-wise gain of one-half of the interference strength in each cell.}
\label{ld-achiev-uB}
\end{figure}
In section \ref{Proof LD-Theorems} we will show that the following sum rate is achievable in the weak interference regime. We define the weak interference regime as 
\begin{equation}
n_j^i+n_i^j\leq \min \{n_{12}^1,n_{22}^2\}.
\end{equation}
This means that the sum of both interference link strength values is smaller than the weakest direct link strength of both cells. In this regime, the system model can be decomposed into two sub channels (see section \ref{Proof LD-Theorems}) which simplifies the analysis. The purpose is to demonstrate the shortcomings of the LD model. Later analysis of the LTD model will treat the general interference case. 
The achievable sum-rate depends on the strength, or coarse channel gain, of the weaker user, which results in three sub-cases per cell $i$. If the coarse channel gain $n^i_{i2}$ of the weaker link signal $x^i_{i2}$ is smaller than the coarse interference channel gain $n_j^i$ in cell $i$, the IMAC sum-rate falls back to the IC sum-rate. Above this threshold, multiuser gain will increase along with $n^i_{i2}$ until it reaches the maximum multiuser gain. In the following we will briefly introduce the results for the last case, where the full multi-user gain can be achieved simultaneous in both cells.
\begin{theorem}
\label{ld-achiev}
The achievable sum-rate for the linear deterministic interfering multiple access channel (LD-IMAC) in the weak interference regime, is
\begin{equation*}
R_\Sigma \leq n^1_{12}+n^2_{22}-n^1_2-n^2_1 + \phi(n^1_2,\Delta_1) +\phi (n^2_1,\Delta_2)
\end{equation*}
with the function $\phi$ for $p,q \in \mathbb{N}_0$, following the notation of \cite{Buhler2011}, defined as
\begin{equation*}
\phi(p,q):=
\begin{cases}
q+\frac{l(p,q)q}{2} & \text{if } l(p,q)\text{ is even,}
\\
p-\frac{(l(p,q)-1)q}{2} & \text{if } l(p,q)\text{ is odd},
\end{cases}
\label{phi}
\end{equation*}
where $l(p,q):=\lfloor\frac{p}{q}\rfloor\ \mbox{for}\ q>0\ \mbox{and}\ l(p,0)=0$. \end{theorem}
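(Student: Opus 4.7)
The plan is constructive: I would exhibit an explicit bit-level assignment in the LD-IMAC and then tally its sum rate. The weak-interference hypothesis $n^j_i+n^i_j\le \min\{n^1_{12},n^2_{22}\}$ guarantees that at each receiver $\mathbf{y}^j$ the bit-vector cleanly splits into an upper, interference-free region of height $n^j_{j1}-n^j_i$ (carrying only the local MAC signals) and a lower region of height $n^j_i$ (where the cross-cell interference is present). Since level-wise addition in the LDM has no carry, codes in the two regions can be designed independently, decoupling the IMAC into two clean LD-MAC pieces on the upper regions plus a coupled pair of interference regions at the bottom.

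On the upper region of cell $j$ I would use the standard LD-MAC sum-capacity code: the top $\Delta_j=n^j_{j1}-n^j_{j2}$ levels carry only the stronger user (they lie above the weaker user's range), while the remaining $n^j_{j2}-n^j_i$ levels operate as a two-user binary-input LD-MAC, for a total of $\Delta_j+(n^j_{j2}-n^j_i)$ interference-free bits per cell. On the lower region I would adapt the zig-zag bit-assignment of \cite{Buhler2012}: since $\Delta_j$ is exactly the vertical shift between the two direct links of cell $j$, alternating blocks of $\Delta_j$ levels can be assigned to the two users so that they never collide at receiver $j$. Simultaneously, the modulation \eqref{modulation} together with the symmetry $n^{\bar{j}}_{j1}=n^{\bar{j}}_{j2}$ forces the two cell-$j$ signals to coincide level-for-level at receiver $\bar{j}$, so the cross-cell leakage occupies no more levels than a single-user interferer and fits into the $n^{\bar{j}}_j$ reserved levels of the other cell's lower region. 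Counting how many full $\Delta_j$-blocks fit into the $n^j_i$ available levels and handling the residual block yields exactly $\phi(n^j_i,\Delta_j)-\Delta_j$ additional bits from the lower region of cell $j$.

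Adding the upper- and lower-region contributions and summing over $j\in\{1,2\}$ telescopes into $n^1_{12}+n^2_{22}-n^1_2-n^2_1 + \phi(n^1_2,\Delta_1) +\phi(n^2_1,\Delta_2)$, as claimed. The main obstacle I expect is the parity case analysis hidden in $\phi$: one has to verify separately for even and odd $l(n^j_i,\Delta_j)$ that the zig-zag uses every level at most once at the intended receiver, preserves the alignment at the unintended one, and stays within the $n^j_i$ available levels. The two branches of $\phi$ correspond precisely to whether the residual partial block sits wholly inside the interference region (even case, giving $\phi=\Delta_j+l\Delta_j/2$) or spills one step into the MAC region above and must be merged with the upper-region code (odd case, giving $\phi=n^j_i-(l-1)\Delta_j/2$). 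This level-by-level bookkeeping, rather than any information-theoretic step, is where the bulk of the technical work lives.
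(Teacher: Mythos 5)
Your proposal follows essentially the same route as the paper: the weak-interference condition is used to decouple the channel into two one-sided sub-systems, the interference region is filled with an alternating $\Delta_j$-block (zig-zag) assignment whose even/odd bookkeeping is exactly what the two branches of $\phi$ encode, and the interference-free MAC region on top is counted separately. The only cosmetic difference is that the paper obtains the block assignment by explicitly minimizing the overlap functional $\rho$ over assignment vectors $\mathbf{a}$ and reduces each sub-system to the known MAC-P2P scheme by silencing one private part, whereas you construct the assignment directly.
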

$\phi$ is essentially composed of the multiuser gain and the difference of the coarse channel gain of both user in the corresponding cell. The proof of the theorem can be found in section \ref{proof ld-achievable}. To provide a result about the optimality of this sum-rate, the next lemma will give an upper bound for the aforementioned model.

\begin{theorem}[]
\label{ld-bound}
The sum rate for the linear deterministic interfering multiple access channel (LD-IMAC) in the weak interference regime can bounded from above by 
\begin{equation*}
R_\Sigma \leq n^1_{11}+n^2_{21}-\frac{n^2_1}{2}-\frac{n^1_2}{2}.
\end{equation*}
\end{theorem}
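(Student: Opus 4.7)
The plan is to adapt the genie-aided sum-rate technique of Etkin--Tse--Wang to the MAC structure at each receiver of the IMAC. First I would invoke Fano's inequality at each receiver to obtain
\[
n(R_{11}+R_{12}) \leq I(\mathbf{x}_{11}^n,\mathbf{x}_{12}^n;\mathbf{y}^{1,n}) + n\epsilon_n,
\]
and symmetrically at receiver $2$. Expanding the mutual information in the deterministic channel yields $I(\mathbf{x}_{11}^n,\mathbf{x}_{12}^n;\mathbf{y}^{1,n}) = H(\mathbf{y}^{1,n}) - H(\mathbf{S}^{q-n^1_2}(\mathbf{x}_{21}^n\oplus \mathbf{x}_{22}^n))$, with the output entropy trivially bounded by the bit-vector length $H(\mathbf{y}^{1,n})\leq n\cdot n^1_{11}$ and the conditional term being the entropy of the cell-$1$ interference.

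To produce the coefficient $\tfrac{1}{2}$ in front of the interference entropies, I would introduce a symmetric pair of genie-aided bounds: in one, receiver $1$ is given $\mathbf{x}_{21}^n$ as side information; in the other, $\mathbf{x}_{22}^n$. Each choice peels off one of the two cell-$2$ interferers and gives $n(R_{11}+R_{12})\leq n\cdot n^1_{11} - H(\mathbf{S}^{q-n^1_2}\mathbf{x}_{2k'}^n)$ with $k'\in\{1,2\}\setminus\{k\}$, using the independence $\mathbf{x}_{21}^n\perp \mathbf{x}_{22}^n$ from the separate-encoder assumption. Averaging the two bounds and invoking subadditivity $H(A)+H(B)\geq H(A\oplus B)$ extracts the factor $\tfrac{1}{2}$ in front of $H(\mathbf{S}^{q-n^1_2}(\mathbf{x}_{21}^n\oplus \mathbf{x}_{22}^n))$. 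Carrying out the symmetric argument at cell $2$ and summing yields the intermediate form
\[
R_\Sigma \leq n^1_{11} + n^2_{21} - \tfrac{1}{2n}H(\mathbf{S}^{q-n^1_2}(\mathbf{x}_{21}^n\oplus \mathbf{x}_{22}^n)) - \tfrac{1}{2n}H(\mathbf{S}^{q-n^2_1}(\mathbf{x}_{11}^n\oplus \mathbf{x}_{12}^n)) + o(1).
\]

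The final step is to lower-bound each interference entropy by the corresponding link strength, $n\cdot n^1_2$ and $n\cdot n^2_1$, which together with the display above delivers the theorem. This is where I expect the main obstacle: the encoders may in principle correlate users within a cell or restrict their signals to bit levels below the interference, so that a naive entropy lower bound fails -- indeed such alignment strategies are exactly what underlies the achievable schemes of Theorem~\ref{ld-achiev}. To close the argument one must invoke the MAC sum-rate constraint at the opposite cell and show that any reduction of $H(\mathbf{S}^{q-n^1_2}(\mathbf{x}_{21}^n\oplus \mathbf{x}_{22}^n))$ below $n\cdot n^1_2$ forces a matching reduction of $R_{21}+R_{22}$, so that the overall sum rate continues to respect the claim. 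The weak-interference hypothesis $n^1_2+n^2_1\leq \min(n^1_{12},n^2_{22})$ is precisely what makes this trade-off quantitatively clean enough to produce the stated inequality, and also what justifies the decomposition of the LD-IMAC into two sub-channels as an alternative route to the same bound.
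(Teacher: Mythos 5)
Your setup --- Fano's inequality at each receiver, the deterministic identity $I(\mathbf{x}_{11}^n,\mathbf{x}_{12}^n;(\mathbf{y}^1)^n)=H((\mathbf{y}^1)^n)-H(\text{interference from cell }2)$, and the extraction of the factor $\tfrac12$ by handing each receiver one interferer at a time and averaging --- reproduces the first half of the paper's argument, which reaches the same intermediate expression by doubling the sum rate and using $2H(\mathbf{a}\oplus\mathbf{b})\geq H(\mathbf{a}\oplus\mathbf{b}\,|\,\mathbf{a})+H(\mathbf{a}\oplus\mathbf{b}\,|\,\mathbf{b})=H(\mathbf{a})+H(\mathbf{b})$ for independent inputs. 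Up to that point the two routes are equivalent and your steps are sound.

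The gap is the final step, and it is the entire substance of the proof. As you yourself note, the lower bound $H(\mathbf{S}^{q-n^1_2}(\mathbf{x}_{21}^n\oplus\mathbf{x}_{22}^n))\geq n\,n^1_2$ is simply false: the encoders may drive this entropy to zero. Your proposed repair --- that any such reduction ``forces a matching reduction of $R_{21}+R_{22}$'' --- is asserted but never proved, and it points at the wrong object: the correct coupling is not through the \emph{rate} of the opposite cell but through the \emph{output entropy} of the opposite receiver. Concretely, the term $\tfrac12\bigl(H((\mathbf{x}_{21}^c)^n)+H((\mathbf{x}_{22}^c)^n)\bigr)$ subtracted in receiver~1's bound involves exactly the top $n^1_2$ bits of cell~2's signals, and under the weak-interference hypothesis these same bits determine the top $n^1_2$ received levels of $(\mathbf{y}^2)^n$ free of any cell-1 contribution; hence $H((\mathbf{x}_{21}^c)^n)+H((\mathbf{x}_{22}^c)^n)\geq H\bigl((\mathbf{y}^2)^{n,\uparrow}\bigr)$, and splitting $H((\mathbf{y}^2)^n)$ into this slice plus the remaining $n^2_{21}-n^1_2$ levels yields
\[
H((\mathbf{y}^2)^n)-\tfrac12\bigl(H((\mathbf{x}_{21}^c)^n)+H((\mathbf{x}_{22}^c)^n)\bigr)\;\leq\; n\bigl(n^2_{21}-n^1_2\bigr)+\tfrac12\,n\,n^1_2,
\]
with the symmetric pairing at receiver~1. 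This cross-receiver cancellation --- the negative term generated at one receiver absorbed into the output entropy of the other --- is precisely how the paper closes the argument (bound \eqref{ub:D1} in Theorem~\ref{upBound}, Section~\ref{UB_LTDM}, of which the LD model is the special case $\mathbf{H}=\mathbf{I}$). Without it, your derivation terminates in an expression that does not imply the theorem, so the proposal as written is incomplete at its decisive step.
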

We remark, that the LD model can be regarded as a special case of the LTD model, where the lower triangular matrix is the identity matrix. 
The upper bound for the weak interference part of the LTD-IMAC therefore also provides an upper bound  for the LD-model. The proof can be found in section \ref{UB_LTDM}.
A graphical comparison by means of GDoF shows, that the achievable sum-rate of Theorem \ref{ld-achiev} reaches the upper bound at the $\alpha$ values at which the scheme has an even number of layers (see Fig.~\ref{ld-achiev-uB}).
However, in between those points the upper bound cannot be reached. The reason for this behaviour is the structure of the scheme which originates from the approximation model. Later, we will show that the lower triangular deterministic model eliminates this problem.
By utilizing layered lattice coding schemes as in \cite{Sridharan08}, \cite{Suvarup2011} and \cite{Bresler2010} we can transfer the LD achievable sum-rate to the Gaussian channel, which yields the following theorem.
\begin{theorem}
\label{ld-gauss-achiev}
The achievable sum-rate for the Gaussian interfering multiple access channel (G-IMAC) in the weak interference regime, is 
\begin{IEEEeqnarray*}{rCl}
R_{\Sigma}&>&\log \mbox{SNR}_{11}^{\beta_1} -\log \mbox{SNR}_{11}^{\alpha_1}+\log \mbox{SNR}_{21}^{\beta_2} -\log \mbox{SNR}_{21}^{\alpha_2}\\
&&+\:\phi(\log \mbox{SNR}_{21}^{\alpha_2},\log \mbox{SNR}_{11}^{(1-\beta_1)})\\
&&+\:\phi(\log \mbox{SNR}_{11}^{\alpha_1},\log \mbox{SNR}_{21}^{(1-\beta_2)})\\
&&-\:6-2.5(\lfloor L_2 \rfloor+ \lfloor L_1 \rfloor)
\end{IEEEeqnarray*}
\end{theorem}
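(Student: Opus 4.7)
The plan is to lift the LD-IMAC achievability of Theorem \ref{ld-achiev} to the Gaussian channel by replacing each bit-block of the linear deterministic construction with a layered nested-lattice codebook whose power level is matched to that block, following the signal-scale alignment recipe of \cite{Sridharan08,Suvarup2011,Bresler2010}. Concretely, I would decompose each transmit signal $u_{ik}$ into $L_i$ independent sub-streams corresponding to the blocks produced by the $\phi$-construction in cell $i$, where $L_i$ equals (up to an integer rounding step) $l(n_i,\Delta_i)$. For each sub-stream I would pick a pair of nested lattices $(\Lambda_c,\Lambda_f)$ of dimension $N$ that is simultaneously good for covering and for AWGN coding, with second moment chosen so that the radiated power of block $\ell$ is proportional to $\mathrm{SNR}^{-e_\ell}$ for the exponent $e_\ell$ that the LD scheme assigns to that bit-block.

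Next I would reproduce the bit-level alignment of Theorem~\ref{ld-achiev} on the signal scale. Because of the modulation in \eqref{modulation}, the interfering pair $(u_{21},u_{22})$ arrives at receiver~$y^1$ with the same coarse gain $2^{n_2^1}$ and matched fine gain $g_2^1$, so if $u_{21}$ and $u_{22}$ share a common lattice codebook on every layer, their sum is again a lattice codeword (modulo dither). This alignment reduces the two interferers to a single virtual MAC-facet codeword, mirroring exactly the bit-level overlap exploited in the LD proof. The symmetric statement applies to $y^2$ in cell~$1$. The weak-regime assumption $n_j^i+n_i^j\leq \min\{n_{12}^1,n_{22}^2\}$ guarantees that, after alignment, the two MACs stay effectively decoupled, so the analysis can be run cell-by-cell as in the deterministic case.

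I would then decode by successive cancellation, peeling layers from strongest to weakest at each receiver. At stage $\ell$, the effective noise is the sum of thermal noise and the aggregate residual power of the not-yet-decoded layers, which is bounded by a constant multiple of the next-layer power by a standard Minkowski–Hlawka–Siegel argument. Each layer is therefore achievable at a rate within a constant of $\log\!\bigl(P_\ell/(P_\ell'+1)\bigr)$, and telescoping over all layers in both cells reproduces exactly the LD expression
\begin{equation*}
n^1_{12}+n^2_{22}-n^1_2-n^2_1+\phi(n^1_2,\Delta_1)+\phi(n^2_1,\Delta_2)
\end{equation*}
under the identifications $n^1_{11}\leftrightarrow\log\mathrm{SNR}_{11}$, $n^1_{12}\leftrightarrow\log\mathrm{SNR}_{11}^{\beta_1}$, $n_2^1\leftrightarrow\log\mathrm{SNR}_{11}^{\alpha_1}$, $\Delta_1\leftrightarrow\log\mathrm{SNR}_{11}^{(1-\beta_1)}$, and symmetrically for cell~$2$. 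The fixed additive $-6$ collects the cell-independent losses (thermal noise, dither, integer bit-rounding of the exponents), while each active lattice layer contributes an additional $\sim 2.5$ bits of lattice-decoding penalty, accounting for the $-2.5(\lfloor L_1\rfloor+\lfloor L_2\rfloor)$ term.

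The main obstacle will not be the coding architecture but the constant-gap bookkeeping. The step-function character of $\phi$ means that the layer powers do not drop by a uniform factor from one block to the next, so one has to verify that the residual-interference estimate at each decoding stage still matches the LD noise level to within a bounded additive constant, separately for the even and odd branches of $\phi$. A secondary technical point is that the fine gains $g_{ik}^j$ on the direct and cross links differ across the four transmitters, so the MAC-facet sum rate must be evaluated after accounting for these $(1,4]$-bounded multiplicative perturbations; since they add only $O(1)$ bits per layer, they can be absorbed into the constant, but the accounting has to be done carefully enough to produce the specific $6 + 2.5(\lfloor L_1 \rfloor+\lfloor L_2\rfloor)$ form claimed in the theorem.
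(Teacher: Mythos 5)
Your proposal matches the paper's proof in all essentials: the paper likewise replaces each $\Delta_i$-block of the LD scheme with a nested-lattice sub-codeword at power $\theta_{il}$, aligns the two in-cell interferers at the unintended receiver via the modulation of \eqref{modulation} and shared per-layer codebooks (with a one-partition offset between users at the intended receiver), and decodes successively while treating lower layers as noise, with the $2.5$ bits per layer arising from taking the minimum of the direct decoding bound and the compute-and-forward bound $\log(\tfrac{1}{K}+\theta_{il}/N)$ for the aligned sum. The only cosmetic difference is that the paper bounds the residual interference by directly summing the powers of the subsequent used levels rather than via a Minkowski--Hlawka--Siegel-type argument, but this does not change the route.
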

with $\phi$ defined as in (\ref{phi}). Note that l(p,q) is equivalent to $\lfloor L_i \rfloor$, which basically counts the layers of lattice codes. The proof for the theorem is provided in section \ref{transfer from ld to gaussian channel}.
Observe that the sum-rate has the same structure as in Theorem \ref{ld-achiev} with the correspondence $n_{ik}^j=\lceil\log |h_{ik}^j|^2P\rceil $. 
We therefore need to prove an equivalent version of the bound in Theorem \ref{ld-bound} for the Gaussian case. By extending proof methods of \cite{Bresler2008} we can show the following result.
\begin{theorem}
\label{ld-gauss-bound}
The sum-rate for the Gaussian IMAC can be upper bounded by the corresponding LD-IMAC upper bound, within a constant number of bits. 
The G-IMAC in the weak interference regime is therefore bounded from above by
\begin{equation*}
R_\Sigma \leq n^1_{11}+n^2_{21}-\frac{n^2_1}{2}-\frac{n^1_2}{2}+c_1,
\end{equation*}
\label{mytheoremx}
\end{theorem}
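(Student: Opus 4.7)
The plan is to extend the genie-aided upper bound of Theorem~\ref{ld-bound} to the Gaussian setting by paralleling the Bresler--Tse argument \cite{Bresler2008} for the two-user IC. The factors $\tfrac{1}{2}$ on the interference terms suggest that the LD bound was obtained by averaging two symmetric cross-bounds, one that reveals the cell-2 signals as a genie at receiver~1 and one that symmetrically reveals the cell-1 signals at receiver~2. I would reproduce the same averaging structure for the Gaussian channel and show that each step loses at most a constant number of bits.

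First, I would apply Fano's inequality to obtain
\begin{equation*}
n(R_\Sigma - \epsilon_n) \;\leq\; I(W_{11},W_{12}; y^{1,n}, s^{1,n}) + I(W_{21},W_{22}; y^{2,n}, s^{2,n})
\end{equation*}
for carefully chosen genie signals $s^{1,n}, s^{2,n}$. Motivated by the LD converse, I would take
\begin{equation*}
s^1 \;=\; g_{2}^1 2^{n_2^1}(u_{21}+u_{22}) + \tilde{z}^1, \qquad s^2 \;=\; g_{1}^2 2^{n_1^2}(u_{11}+u_{12}) + \tilde{z}^2,
\end{equation*}
with $\tilde{z}^i$ Gaussian of a variance tuned so that $s^i$ captures the interference exactly down to noise level, i.e., at the resolution corresponding to the LD truncation. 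Thanks to the modulation in (\ref{modulation}), the interference at each receiver is already aligned into the single effective stream $u_{i1}+u_{i2}$, so a single scalar genie per receiver suffices.

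Next, I would chain-rule each mutual information into two differential-entropy terms. The unconditional $h(y^{i,n},s^{i,n})$ is upper bounded by the Gaussian maximum-entropy principle, producing a $\log \mathrm{SNR}^j_{i,1}$-type term that matches the $n^1_{11}+n^2_{21}$ contribution up to an additive constant once the correspondence $n^j_{ik}=\lceil \log|h^j_{ik}|^2 P\rceil$ is used. The conditional $h(y^{i,n},s^{i,n}\mid W_{i1},W_{i2})$, after subtracting the direct-link contribution, reduces to an entropy of interference-plus-noise whose variance is $\Theta(2^{2n_i^j})$; this yields the $-n_i^j/2$ contribution once the two symmetric cross-bounds are averaged. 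Rounding errors in the conversion between $\log|h^j_{ik}|^2 P$ and the integer $n^j_{ik}$, together with the gap between Gaussian and differential entropy upper/lower bounds, are absorbed into $c_1$.

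The main obstacle will be choosing the genie noise variance so that the two symmetric cross-bounds have matching conditional entropies, as in the LD proof: too small and the unconditional entropy term blows up, too large and the conditional term fails to absorb the interference power. A secondary nuisance, specific to the IMAC, is that the interference is the sum of two non-cooperating MAC inputs $u_{i1}+u_{i2}$, so one cannot directly invoke a power constraint on the aggregate; a factor-of-two penalty on its variance is enough, and this penalty contributes only a further bounded constant that is likewise absorbed into $c_1$.
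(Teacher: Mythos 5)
Your overall strategy (a direct genie-aided Gaussian converse) is the alternative route that the paper itself only sketches in a remark; the paper's actual proof of this theorem instead reduces the Gaussian mutual information to that of the deterministic channel in two steps following \cite{Bresler2008} (an average-to-peak power conversion costing $8$ bits, then truncation at noise level and noise removal costing roughly $3.1$ bits) and then invokes the deterministic upper bound. The genie route can be made to work, but your plan has a genuine gap at its core: the factors $\tfrac{1}{2}$ do \emph{not} come from averaging two symmetric cross-bounds. The two bounds you would average, $R_\Sigma \le n_{11}^1+n_{21}^2-n_1^2$ and $R_\Sigma \le n_{11}^1+n_{21}^2-n_2^1$, are not individually valid: the achievable sum rate $n_{11}^1+n_{21}^2-\tfrac{1}{2}n_1^2-\tfrac{1}{2}n_2^1$ exceeds the first of them whenever $n_1^2>n_2^1$. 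The half-interference penalty is precisely the multi-user alignment gain, and it is extracted, as in the deterministic proof of \eqref{ub:D1}, by multiplying the \emph{entire} Fano chain by two and exploiting that the interference is a sum of two \emph{independent} inputs, via
\begin{equation*}
2h\bigl(a x_{11}^n+b x_{12}^n+\tilde z^n\bigr)\;\geq\; h\bigl(a x_{11}^n+b x_{12}^n+\tilde z^n\,\big|\,x_{11}^n\bigr)+h\bigl(a x_{11}^n+b x_{12}^n+\tilde z^n\,\big|\,x_{12}^n\bigr),
\end{equation*}
so that the doubled negative conditional-entropy term absorbs two direct-link entropies at once. Your proposal contains no analogue of this step; the ``factor-of-two penalty on the variance'' you invoke addresses power, not entropy, and cannot produce the $-\tfrac{1}{2}n_i^j$ terms.

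A second, related problem is the orientation of your genie. You hand receiver $1$ the interference it \emph{receives}, $s^1=g_2^1 2^{n_2^1}(u_{21}+u_{22})+\tilde z^1$. Since this $s^1$ is independent of $(W_{11},W_{12})$, evaluating $I(W_{11},W_{12};y^{1,n},s^{1,n})$ essentially lets the receiver cancel its interference out of $y^1$ and yields only the interference-free bound $R_\Sigma\le n_{11}^1+n_{21}^2+c$, which is strictly weaker than the claimed inequality. The genie given to receiver $1$ must instead be the interference that cell $1$ \emph{causes at receiver $2$}, of the form $s^1=a x_{11}+b x_{12}+\tilde z$, and—as the paper's appendix remark stresses—with $a,b$ chosen so that $h_{11}^1 2^{n_{11}^1}b=h_{12}^1 2^{n_{12}^1}a$, i.e., so that the genie preserves the relative shift of the two aligned signals; otherwise the conditional-entropy term cannot be matched against the received-signal entropy at the other receiver. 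With the genie reversed, the shift condition imposed, and the doubling argument in place, your maximum-entropy and variance estimates would go through and the residual terms would indeed collapse into $c_1$.
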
where $c_1$ is a constant. Therefore the two theorems \ref{ld-gauss-achiev} and \ref{ld-gauss-bound} show, that the transfer of the results from the linear deterministic scheme to the Gaussian IMAC is possible. However, the achievable scheme inherits the same structure as in the LD model and therefore also the problems with the step-like achievablity curve. These problems are related to the fact, that the LD schemes are constraint to orthogonal usage of bit-levels. Assuming independence of bit-levels would yield a tight upper bound. But we would therefore need to proof the optimality of an uniform input distribution. However, a result along these lines would be restricted to the linear deterministic model and does not extend to the general Gaussian case. Instead, one can use a more complex model, which enables schemes beyond orthogonal usage of bit-levels. The lower triangular deterministic model is such a model and we will now describe the results for this model.

\subsection{Approximate Constant-Gap Sum Capacity for the LTD-IMAC}
We start with a theorem about the constant-gap sum capacity for the deterministic LTD-IMAC in the weak interference regime with arbitrary channel gains.

\begin{theorem}
\label{weak_constant_gap}
For every $\delta\in(0,1]$, $n^k_{k1}$, $n^k_{k2}$, $n_l^k\in \mathbb{N}$ such that $n^k_{k1}\geq n_{k2}^k\geq n_l^k$ and
$n_2^1+n_1^2 \leq \min \{n_{11}^1,n_{21}^2\}$ with $k,l \in \{1,2\}$, $k\neq l$, there exists a set $B\subset (1,2]^{2\times 3}$
of Lebesgue measure at most $\delta$ such that for all channel gains $g_{ik}^j\in(1,2]^{2\times 3} \setminus B$, the sum capacity $C_{\text{LTD-IMAC}}$ of the lower triangular deterministic interfering multiple access channel satisfies
\begin{equation*}
D-2 \log (c/\delta)\leq C_{\text{LTD-IMAC}} \leq D
\end{equation*}

with $D:=\min\{D_1,D_2,D_3,D_4\}$
and
\begin{IEEEeqnarray*}{rCl}
D_1&:=&\max\{(n_{11}^1-n_1^2),n_{12}^1\}+\max\{(n_{21}^2-n_2^1),n_{22}^2\}\\
D_2&:=&\max\{(n_{11}^1-n_1^2),n_{12}^1\}+n_{21}^2-\tfrac{1}{2}n_2^1\\
D_3&:=&n_{11}^1-\tfrac{1}{2}n_1^2+\max\{(n_{21}^2-n_2^1),n_{22}^2\}\\
D_4&:=&n_{11}^1-\tfrac{1}{2}n_1^2+n_{21}^2-\tfrac{1}{2}n_2^1,
\end{IEEEeqnarray*}
for some constant c, independent of the channel gain.\end{theorem}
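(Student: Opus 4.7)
The plan is to establish the theorem by matching upper and lower bounds. The upper bound $C_{\text{LTD-IMAC}} \leq D$ follows from cut-set / genie-aided arguments that are (mostly) insensitive to the fine structure of the LTD matrices $\mathbf{H}_{ik}^j$, while the lower bound $D - 2\log(c/\delta) \leq C_{\text{LTD-IMAC}}$ requires constructing an explicit achievable scheme that exploits the LTD convolution; the exception set $B$ of measure $\delta$ absorbs those fine gains at which the convolution causes the scheme to collapse.

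For the upper bound I would observe that the four quantities $D_1,\dots,D_4$ are exactly the four combinations of one of two bounds applied independently in each cell. In cell $j$, either I use a ``MAC-type'' bound $\max\{n_{j1}^j-n_i^j,\,n_{j2}^j\}$, obtained by genie-revealing the opposite cell's codewords and the top $n_i^j$ bits of $\mathbf{x}_{j1}$, so that what remains at $\mathbf{y}^j$ is a two-user MAC whose sum-capacity is governed by the stronger user's visible span; or I use an ``IC-type'' bound $n_{j1}^j-\tfrac12 n_i^j$, which is the LD-IMAC bound of Theorem~\ref{ld-bound}. Since the LD model is the special case $\mathbf{H} = \mathbf{I}$ of the LTD model and the LTD matrix $\mathbf{H}$ is lower triangular with unit diagonal (hence invertible over $\mathbb{F}_2$), the entropy inequalities used in the LD proof carry through verbatim: conditional entropies are preserved under the invertible linear transformation, and Fano's inequality yields the same bound. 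Taking the minimum over all four combinations gives $D$.

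For the lower bound I would design a layered signal-scale alignment scheme in which each user splits its message into sub-streams assigned to specific bit-level ranges, chosen so that at the \emph{unintended} receiver the two interfering streams from the opposite cell coincide (using the shift-property $\Delta_1,\Delta_2$ combined with the particular modulation \eqref{modulation}), whereas at the \emph{intended} receiver the two direct-link streams are separated by the LTD convolution. Because LTD channels allow non-orthogonal bit-level assignments (unlike the LD model, which was the source of the step-like behaviour in Fig.~\ref{ld-achiev-uB}), the scheme can fill the gaps left by the LD scheme. Counting levels: in each cell I can recover $\max\{n_{j1}^j-n_i^j,\,n_{j2}^j\}$ interference-free levels plus an additional $\tfrac12 n_i^j$ aligned levels, matching $D_1$; the other three bounds correspond to which of the two cells is rate-constrained.

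The hard part is quantifying the dependence on the ``bad'' set $B$ and obtaining the $2\log(c/\delta)$ gap. The mechanism is that two sub-streams arriving at the intended receiver on nominally distinct level ranges can nevertheless become indistinguishable when the fine coefficients $g_{ik}^j$ have binary expansions that almost cancel after LTD convolution; such cancellations form a union of strips in the parameter cube $(1,2]^{2\times 3}$ whose total Lebesgue measure can be made $\leq \delta$ at the cost of losing roughly $\log(1/\delta)$ resolution per receiver, hence a combined $2\log(c/\delta)$ loss. This step will follow the proof template of \cite{Niesen-Ali} for the Gaussian X-channel, extended to accommodate the interference-alignment structure of our scheme: one expresses the decoding error event as a linear system over $\mathbb{F}_2$ whose determinant is a polynomial in the bits of $g_{ik}^j$, and uses a quantitative non-vanishing argument to bound the measure of the near-singular locus. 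Carrying this extension through, while keeping bookkeeping compatible with the layered construction above, is the main technical obstacle; the matching of the resulting achievable sum-rate with $D$ up to the additive $2\log(c/\delta)$ term then completes the proof.
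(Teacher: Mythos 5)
Your architecture coincides with the paper's: the upper bound is the minimum over the four per-cell combinations of a ``MAC-type'' and an ``IC-type'' bound, proved by Fano/entropy manipulations that survive the passage from LD to LTD because the lower uni-triangular $\mathbf{H}$ is invertible over $\mathbb{F}_2$ (Theorem~\ref{upBound} and the splitting remark thereafter); the lower bound is a layered, non-orthogonal bit-level allocation aligning the two intra-cell signals at the unintended receiver; and the set $B$ is controlled by a Khintchine--Groshev-type measure argument in the style of \cite{Niesen-Ali} (Lemma~\ref{ltd-upBound2}, Lemma~\ref{ltd-weak-decoding-lemma}). So the route is the right one, but as written the proposal has two genuine gaps.

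First, the achievability counting is wrong where it matters. You assert that each cell recovers $\max\{n_{j1}^j-n_i^j,\,n_{j2}^j\}$ interference-free levels \emph{plus} an additional $\tfrac12 n_i^j$ aligned levels, ``matching $D_1$''. But $D_1$ contains no half-interference term, and the extra $\tfrac12 n_k^l$ of multiuser gain in cell $k$ is available only when the weaker direct link is strong enough, namely $n_{k2}^k> n_{k1}^k-\tfrac12 n_k^l$; otherwise the cell is capped at $\max\{n_{k1}^k-n_k^l,\,n_{k2}^k\}$. It is exactly this trichotomy on $n_{k2}^k$ (the paper's sub-cases I.1--I.3, realized by the explicit allocation $R^c_{k1}=\lfloor\tfrac12 n_k^l\rfloor$, $R^c_{k2}=\min\{\lceil\tfrac12 n_k^l\rceil,(n_{k2}^k-(n_{k1}^k-n_k^l))^+\}$, $R^{p1}_{k1}=\lfloor\tfrac12 n_l^k\rfloor$, $R^{p2}_{k1}=n_{k1}^k-n_k^l-n_l^k$) that makes the achieved sum rate track whichever of $D_1,\dots,D_4$ is the active minimum; without exhibiting such an allocation and checking all nine cell-pair combinations you have not closed the gap to $D$. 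Second, the decoding lemma --- the one result that actually produces the set $B$ and the constant $c$ --- is deferred as ``the main technical obstacle,'' and the mechanism you sketch (a determinant that is a polynomial in the bits of $g_{ik}^j$) is not how the $\mathbb{F}_2$ argument goes: one tracks the index of the leading nonzero bit of $\bar{\mathbf{G}}_1\bar{\mathbf{u}}_1\oplus\bar{\mathbf{G}}_2\bar{\mathbf{u}}_2\oplus\cdots$ and exploits the fact that, in the weak regime, the two private allocations and the aligned interference occupy pairwise disjoint level ranges, which collapses the union bound to a two-signal count and yields the stated $2\log(c/\delta)$ loss. Supplying that lemma (the paper's Lemma~\ref{ltd-weak-decoding-lemma} and its appendix proof) and the explicit allocation with its case check is precisely the content of the proof, not a routine extension.
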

The proof is provided in section \ref{Proof for LTD-IMAC}. Recall that $n_2^1$ is the coarse interference strength of both users from cell 2 to cell 1, and equally, $n_1^2$ from cell 2 to cell 1.
This means that the assumption $n_2^1+n_1^2 \leq \min \{n_{11}^1,n_{21}^2\}$, states that the sum of the interference strengths is smaller than the minimum of $\{n_{11}^1,n_{21}^2\}$, which are the two coarse channel gains of the stronger users in each cell. For a symmetric setting, the assumption becomes $\alpha=\tfrac{n_i}{n_1}\leq \tfrac{1}{2}$ and corresponds to the weak interference regime for the IC-channel. The $\delta$ in the gap $2 \log (c/\delta)$ can be seen as a fixed trade-off factor. If $\delta$ is chosen to be large, the gap would become small. The achievable sum-rate would get closer to the bound. But $\delta$ is also the bound for the measure of the outage set $B$ and a large bound would mean, that the result would hold for a smaller set of channel gain configurations. On the other hand, a small $\delta$ would mean, that the result holds for a large set of channel gains but at the same time induce a bigger gap towards the upper bound. Due to the large number of cases for arbitrary channel gains and arbitrary interference regimes, we have limited the investigation of arbitrary regimes to symmetric channel gain configurations. The following theorem shows the result for the deterministic IMAC.

\begin{theorem}
\label{ltd-imac-thm}
For every $\delta\in(0,1]$, $n_1$, $n_2$, $n_i \in \mathbb{N}$ such that $n_1\geq n_2$, there exists a set $B\subset (1,2]^{2\times 3}$
of Lebesgue measure at most $\delta$ such that for all channel gains $g_{ik}^j\in(1,2]^{2\times 3} \setminus B$, the sum capacity $C_{\text{LTD-IMAC}}$ of the lower triangular deterministic symmetric interfering multiple access channel satisfies
\begin{equation*}
D-2 \log (c/\delta)\leq C_{\text{LTD-IMAC}} \leq D
\end{equation*}
with $D:=\min\{D_1,D_2,D_3,D_4,D_5\}$
and
\begin{IEEEeqnarray*}{rCl}
D_1&:=& 2\max((n_1-n_i)^+,n_i)+\min((n_1-n_i)^+,n_i),\\
D_2&:=& \tfrac{2}{3}(2\max(n_1,n_i)+(n_1-n_i)^+),\\
D_3&:=& 2n_1,\\
D_4&:=& \max(2n_2,2(n_1-n_i)^+,2n_i),\\
D_5&:=& \max(n_1,n_i)+\max(n_2,(n_1-n_i)^+).
\end{IEEEeqnarray*}
for some constant c, independent of the channel gain
\end{theorem}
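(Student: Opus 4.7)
The plan is to bound the symmetric sum-capacity both from above by each of the five quantities $D_1,\ldots,D_5$, and then to exhibit a matching scheme that, except on a channel-gain set of Lebesgue measure at most $\delta$, achieves $D=\min_i D_i$ up to an additive $2\log(c/\delta)$. I would organize the argument along the value of the interference exponent $\alpha := n_i/n_1$, because both the binding upper bound and the structure of the optimal alignment scheme change as $\alpha$ crosses the breakpoints $\tfrac12,\tfrac23,1,2$ familiar from the symmetric IC ``W''-curve.

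For the upper bounds, $D_3=2n_1$ is the trivial cut-set bound stating that each receiver observes at most $n_1$ bit-levels per channel use. For $D_1$ (the weak-interference bound) I would adapt the converse of Theorem \ref{ld-bound} to the LTDM by providing the aggregate interference $\mathbf{H}^1_2(\mathbf{\bar x}^c_{21}\oplus \mathbf{\bar x}^c_{22})$ at receiver~1, and symmetrically at receiver~2, as genie side information, apply a chain-rule decomposition, and bound each conditional entropy by the number of visible levels. The bound $D_2$ is the $\tfrac23$-type bound: providing partial side information at both receivers in the Etkin--Tse--Wang style makes the entropy differences telescope so that the aggregate interference in both cells is charged once rather than twice. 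The bounds $D_4$ and $D_5$ correspond to the strong and very-strong interference regimes and are obtained by rewriting $\mathbf{y}^j$ so that the other cell's users are decodable at the unintended receiver and then applying a MAC sum-rate bound at each receiver.

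For the achievability part I would invoke Theorem~\ref{weak_constant_gap} whenever $\alpha\le\tfrac12$, since the symmetric hypotheses imply $n_1^2+n_2^1\le\min\{n_{11}^1,n_{21}^2\}$ and the constant-gap guarantee there already matches $D_1$. For $\alpha>\tfrac12$ I would construct signal-scale alignment schemes in the LTDM tailored to each of the sub-regimes $(\tfrac12,\tfrac23]$, $(\tfrac23,1]$, $(1,2]$ and $(2,\infty)$. In each case users $k1$ and $k2$ of cell~$k$ transmit on disjoint blocks of bit-levels chosen so that, after the lower-triangular convolution with the channel gains $g_{ik}^j$, the two interferers from the \emph{other} cell align onto the same $n_i$ levels at the victim receiver, leaving enough interference-free levels to carry the target sum-rate. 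This is the signal-scale analogue of the Suh--Tse vector-space IMAC alignment, lifted to the convolution algebra of the LTD model as in \cite{Niesen-Ali} for the X-channel.

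The main obstacle will be controlling the outage set $B$. The bit-level alignment above works only if the fine gains $g_{ik}^j\in(1,2]$ do not combine destructively at the observed levels, i.e.\ if certain linear combinations of the binary expansions of the $g_{ik}^j$ are not too close to zero on the relevant bit ranges. Following the framework of \cite{Niesen-Ali}, I would cover the bad configurations by a union of rectangles in $(1,2]^{2\times 3}$ whose total Lebesgue measure is at most $\delta$; the rate penalty incurred by avoiding only a $\delta$-fraction of the channel gains is precisely the $2\log(c/\delta)$ appearing in the statement. Taking the binding bound over the four sub-regime constructions and combining with the upper bounds then yields $C_{\text{LTD-IMAC}}\in[D-2\log(c/\delta),\,D]$, completing the argument.
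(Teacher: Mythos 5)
Your overall architecture matches the paper's: a case split on the interference exponent, LTDM signal-scale alignment schemes whose decodability is certified by a Khintchine--Groshev-type measure bound on an outage set of size $\delta$ (giving the $2\log(c/\delta)$ penalty), and entropy-based converses for $D_1$--$D_5$ of exactly the flavour you describe ($D_3$ by conditioning on the other cell's inputs, $D_1$ by splitting $\mathbf{y}$ and subtracting the entropy of the aligned interference, $D_2$ by an Etkin--Tse--Wang-style telescoping with the extra terms $H(\mathbf{\bar x}_{i1}^n\oplus\mathbf{\bar x}_{i2}^n)$). However, two concrete points in your plan would derail the execution.

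First, the regime boundaries are not the IC breakpoints $\tfrac12,\tfrac23,1,2$. Because the second user in each cell contributes multiuser gain, the binding bounds cross at different places: $D_1=n_1+n_i$ and $D_2=2n_1-\tfrac23 n_i$ intersect at $\alpha=\tfrac35$, and $D_2=\tfrac43 n_i$ meets $D_3=2n_1$ at $\alpha=\tfrac32$. The correct partition (used in the paper) is $[0,\tfrac12]$, $(\tfrac12,\tfrac35)$, $[\tfrac35,1]$, $(1,\tfrac32]$, $(\tfrac32,\infty)$, with further subdivisions of $[\tfrac35,1]$ at $\tfrac23$ and $\tfrac34$ because the common-part allocation collides with the aligned interference there. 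A scheme designed on the assumption that $D_1$ stays active up to $\alpha=\tfrac23$ cannot be within a constant gap of $D$ on $(\tfrac35,\tfrac23)$, where $D_2$ is strictly smaller.

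Second, your case analysis is one-dimensional in $\alpha$, but the theorem's bounds $D_4=\max(2n_2,2(n_1-n_i)^+,2n_i)$ and $D_5=\max(n_1,n_i)+\max(n_2,(n_1-n_i)^+)$ are not ``strong/very-strong interference'' bounds: they capture the limitation imposed by the \emph{weaker user's} gain $n_2$. Every $\alpha$-regime must be further subdivided according to whether $n_2$ is large enough to support full multiuser gain (e.g.\ $n_2\gtrless n_1-\tfrac13 n_i$ in the $[\tfrac35,1]$ regime, $n_2\gtrless n_i+\tfrac12(n_1-n_i)$ in $(\tfrac12,\tfrac35)$, etc.), and in the deficient sub-cases the allocation $R^c_{k2}$ must be reduced and partially compensated by $R^c_{k1}$ so that the scheme lands on $D_4$ or $D_5$ rather than on $D_1$ or $D_2$. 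Without this second layer of case analysis the achievability cannot match $\min\{D_1,\dots,D_5\}$ whenever $n_2$ is the bottleneck, so the constant-gap claim would fail for a large set of parameters.
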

This result shows the constant-gap result for the symmetrical LTD-IMAC for the whole interference regime. One can see that the symmetrical weak interference cases $D_1$ and $D_4$ are reflected in $D_4$ and $D_1$, respectively. The cases $D_2$, $D_3$, $D_5$ and parts of $D_4$ and $D_1$ represent additional bounds for the cases with interference $\alpha>\tfrac{1}{2}$. As in the weak interference case, one can see that the gap is constant and can be seen as a fixed trade-off factor between rate-gap and quantity of supported channel gains. In the next sub section we will show, that this result can be extended to the Gaussian IMAC.

\subsection{Constant-Gap Sum Capacity for the Gaussian IMAC}
For the Gaussian case, extensions of the methods developed in \cite{Niesen-Ali} show that the achievable schemes can be transferred to the Gaussian IMAC. In this process, the constant-gap gets larger buts stays constant in relation to the channel gain. Moreover, previously used techniques can show that the LTD-IMAC bounds can be used as a bound for the G-IMAC by introducing another constant-gap. This yields the following constant-gap capacity approximation of the G-IMAC.
\begin{theorem}
\label{General_Bound_G_IMAC}
For every $\delta\in(0,1]$, $n_1$, $n_2$, $n_i \in \mathbb{N}$ such that $n_1\geq n_2$, there exists a set $B\subset (1,2]^{2\times 4}$
of Lebesgue measure at most $\delta$ such that for all channel gains $h_{ik}^j\in(1,2]^{2\times 4} \setminus B$, the sum capacity $C_{\text{G-IMAC}}$ of the Gaussian symmetric interfering multiple access channel satisfies
\begin{equation*}
D-2 \log (c_2/\delta)\leq C_{\text{G-IMAC}} \leq D+c_3,
\end{equation*}
with $D:=\min\{D_1,D_2,D_3,D_4,D_5\}$
and
\begin{IEEEeqnarray*}{rCl}
D_1&:=& 2\max((n_1-n_i)^+,n_i)+\min((n_1-n_i)^+,n_i),\\
D_2&:=& \tfrac{2}{3}(2\max(n_1,n_i)+(n_1-n_i)^+),\\
D_3&:=& 2n_1,\\
D_4&:=& \max(2n_2,2(n_1-n_i)^+,2n_i),\\
D_5&:=& \max(n_1,n_i)+\max(n_2,(n_1-n_i)^+).
\end{IEEEeqnarray*}and $c_2,c_3$ are constants.
\end{theorem}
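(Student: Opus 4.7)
The plan is to establish the theorem in two independent steps, mirroring the remark in the paper: first show that the LTD-IMAC achievable scheme underlying Theorem \ref{ltd-imac-thm} can be lifted to the G-IMAC at the cost of an additive constant, and then show that the LTD-IMAC upper bounds can be promoted to upper bounds on the G-IMAC at the cost of a further additive constant. The two constants together form $c_2$ and $c_3$, and the exceptional channel-gain set $B\subset(1,2]^{2\times 4}$ inherits its measure bound from the achievability argument.

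For the achievability side, I would follow the lattice / PAM transfer strategy from \cite{Niesen-Ali}, generalized to the six fine gains $g_{ik}^j$ of our model. The modulation in \eqref{modulation} was chosen precisely so that the two interferers from the opposite cell arrive with a common pre-factor ($g_2^1$ or $g_1^2$), reproducing the alignment already present in the LTD scheme. The bit-level allocation of the LTD-IMAC scheme is then translated into a superposition of PAM codewords scaled by powers $2^{n_{ik}^j}$, each carrying a small codeword of the appropriate rate. At each receiver one peels off the strata that are well separated in magnitude and decodes the aligned sub-constellations jointly. The probability that the fine gains $h_{ik}^j$ render two superimposed codewords indistinguishable is controlled by the Lebesgue measure of a bad set, and after shrinking the constellations by a factor of $\delta$ one obtains $|B|\leq\delta$ and a rate loss of $O(\log(1/\delta))$ per stream. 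Summing over the constant number of streams produces the $2\log(c_2/\delta)$ term.

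For the converse, the strategy is to show that each of $D_1,\ldots,D_5$ is also an upper bound on $C_{\text{G-IMAC}}$ modulo a constant. The LTD-IMAC bounds in Theorem \ref{ltd-imac-thm} are obtained by genie-aided reductions, Fano's inequality and bit-level entropy manipulations. Following the approach used for Theorem \ref{ld-gauss-bound} in the weak regime, each combinatorial step in the deterministic derivation maps to a differential-entropy step in the Gaussian derivation, with the noise contribution and the $\log(1+g_{ik}^{j\,2})$-type terms bounded by universal constants independent of the $n_{ik}^j$. Since each $D_m$ derivation comprises a constant number of such steps, the accumulated loss is an additive constant $c_3$.

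The hard part will be the achievability. The Niesen--Ali argument for the X-channel rests on a careful measure-theoretic bound on the set of fine channel gains that cause ambiguity in the received constellation, and in the IMAC this has to be carried out jointly at both receivers because for $\alpha>\tfrac{1}{2}$ the cells are coupled and a single codebook family must serve both. Moreover, the MAC-style aggregation at each receiver means it is not sufficient to guarantee that individual codebook points are separated; what is needed is a minimum-distance bound on the Minkowski sums of the per-user codebooks after convolution with the matrices $\mathbf{H}_{ik}^j$. Adapting the single-link minimum-distance bound of \cite{Niesen-Ali} to this six-gain, two-receiver setting while keeping the combined outage set of measure at most $\delta$ is where most of the technical work lies.
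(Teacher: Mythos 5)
Your overall architecture matches the paper's: achievability by porting the LTD-IMAC scheme to the Gaussian channel via Niesen--Ali-style constellation/minimum-distance arguments, and a converse that inherits the five bounds $D_1,\dots,D_5$ from the deterministic model at the cost of a constant. On the achievability side you have also correctly located the genuinely new technical content: the standard minimum-distance lemma of \cite{Niesen-Ali} does not cover the weak-interference allocation because the stronger user's private signal splits into two non-overlapping pieces $u_{i1}^{P_1}+u_{i1}^{P_2}$, and the paper indeed has to prove a modified decoding lemma (Appendix~\ref{Proof_Gaussian_Weak_Case}) that exploits this non-overlap to reduce the Khintchine--Groshev estimate back to the two-parameter case; your remark about needing a distance bound on the aggregated (Minkowski-sum) constellation at each receiver is exactly the issue that lemma resolves. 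Where you diverge is the converse: you propose to re-run each entropy step of the deterministic proof of Theorem~\ref{upBound} directly in the Gaussian domain with differential entropies and genies. The paper instead performs a single wholesale channel-level reduction in Appendix~\ref{Proof_of_General_Bound_G_IMAC}, following \cite{Bresler2008}: first converting the average power constraint to a peak constraint (costing $8$ bits), then truncating at noise level and discarding the noise (costing about $3.1$ bits), so that the Gaussian mutual information is dominated by the deterministic one and Theorem~\ref{upBound} applies verbatim. The paper explicitly notes your genie-aided route as a viable alternative in a remark, but cautions that for the bounds $D_1$ and $D_2$ the genie must be chosen so that the side information $a x_{11}+b x_{12}+z$ preserves the relative shift $h_{11}^1 2^{n_{11}^1} b = h_{12}^1 2^{n_{12}^1} a$ of the aligned interference; a naive genie that simply hands over the interference-plus-noise does not yield a tight bound. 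If you pursue your route you would need to build that shift condition into the genie; the paper's reduction avoids the issue entirely, at the price of a slightly larger but still constant gap.
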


The proof for the achievability can be found in Section \ref{Proof G-IMAC}. It makes use of the scheme for the lower triangular deterministic model and uses a result from number theory, the Khintchine-Groshev Theorem, as well as techniques developed in \cite{Niesen-Ali} and new techniques tailored towards the G-IMAC model. The proof of the upper bound is in the Appendix~\ref{Proof_of_General_Bound_G_IMAC} and utilizes the upper bound of Theorem~\ref{weak_constant_gap}. Note that, as in the two theorems for the LTD channel model, we have a constant-gap result. This is because $c_2$ and $c_3$ are constants which are independent of the channel gain and $\delta$ is a trade-of factor in the same fashion as those in the previous results. This means that a bigger $\delta$ corresponds to a smaller gap but increases the outage set of channel gains for which the method does not work. In the following section we will go into the details of the analysis and provide the proofs for the stated theorems.
\section{Analysis of the LD-IMAC}
\label{Proof LD-Theorems}

\subsection{Achievable Scheme for Theorem~\ref{ld-achiev}}
\label{proof ld-achievable}

The achievability scheme for the IMAC is basically an extended version of the scheme already used for the MAC-P2P in \cite{Buhler2011}. 
Like in the MAC-P2P we split the system (\ref{MAC-MAC}) into two sub systems, $\mathcal{R}_{ach}^{(1)}$ and $\mathcal{R}_{ach}^{(2)}$, see Figure~\ref{system_model_split}. Unlike in \cite{Buhler2011}, both of our sub-systems are identically. The sum of the achievable rates of these two sub systems will constitute the overall sum capacity. The sub systems are given by the equations

\begin{IEEEeqnarray*}{rCl}
\mathbf{y}_1^{(1)} & = & \mathbf{S}^{q^{(1)}-(n^1_{11}-n^1_2)}\mathbf{x}_{11}^{(1)}\oplus \mathbf{S}^{q^{(1)}-(n^1_{12}-n^1_2)}\mathbf{x}_{12}^{(1)}\IEEEyessubnumber\\
\mathbf{y}_2^{(1)} & = & \mathbf{S}^{q^{(1)}-n^2_1}\mathbf{x}_{11}^{(1)}\oplus \mathbf{S}^{q^{(1)}-n^2_1}\mathbf{x}_{12}^{(1)}\oplus \mathbf{S}^{q^{(1)}-n^2_1}\mathbf{x}_{21}^{(1)}\IEEEyessubnumber\\
&&\oplus\: \mathbf{S}^{q^{(1)}-n^2_1}\mathbf{x}_{22}^{(1)} 
\end{IEEEeqnarray*}

for $\mathcal{R}_{ach}^{(1)}$ and 

\begin{IEEEeqnarray*}{rCl}
\mathbf{y}_2^{(2)} & = & \mathbf{S}^{q^{(2)}-(n^2_{21}-n^2_1)}\mathbf{x}_{21}^{(2)}\oplus \mathbf{S}^{q^{(2)}-(n^2_{22}-n^2_1)}\mathbf{x}_{22}^{(2)}\IEEEyessubnumber\\
\mathbf{y}_1^{(2)} & = & \mathbf{S}^{q^{(2)}-n^1_2}\mathbf{x}_{11}^{(2)}\oplus \mathbf{S}^{q^{(2)}-n^1_2}\mathbf{x}_{12}^{(2)}\oplus \mathbf{S}^{q^{(2)}-n^1_2}\mathbf{x}_{21}^{(2)}\IEEEyessubnumber\\
&&\oplus\: \mathbf{S}^{q^{(2)}-n^1_2}\mathbf{x}_{22}^{(2)} 
\end{IEEEeqnarray*}

for $\mathcal{R}_{ach}^{(2)}$.

\begin{figure}
\centering
\includegraphics[width=0.35 \textwidth]{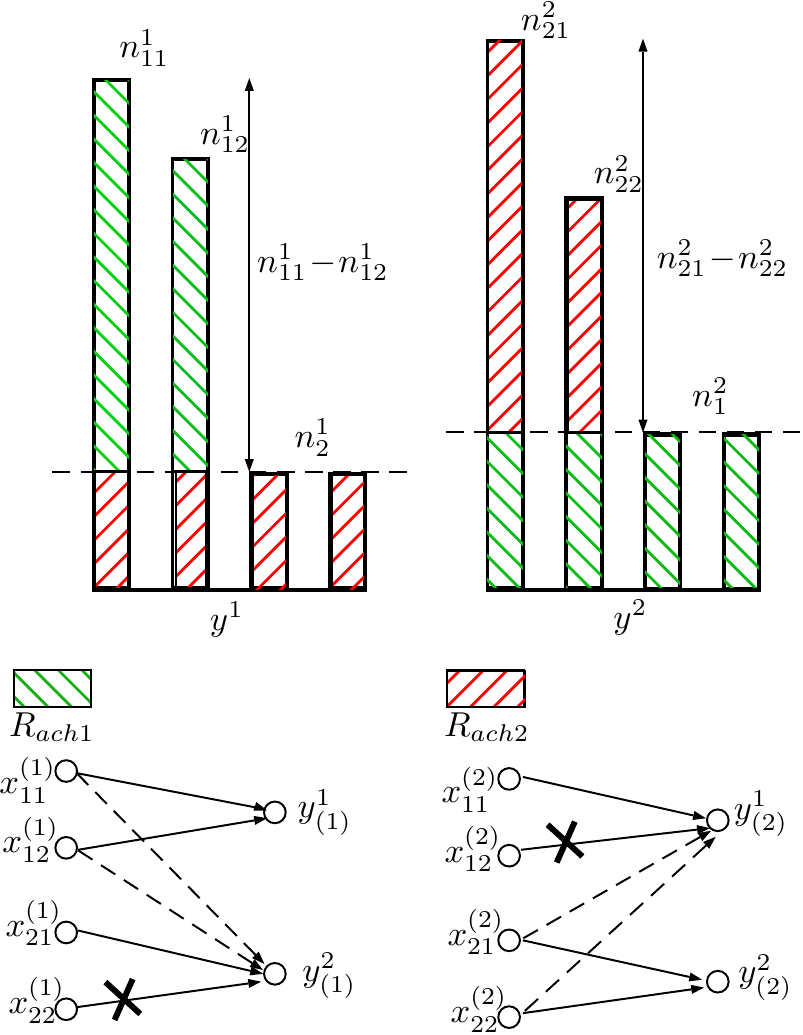}
	\caption{Illustration of the two subsystems. As one can see, letting the second user in each sub-system be silent, results in the one-sided interference MAC-P2P model. Hence, achievable schemes are applicable.}
    \label{system_model_split}
\end{figure}

Examining the resulting sub-systems, one can see that leaving one private part, at the side of the interference silent, results in a sub-system equal to $\mathcal{R}_{ach}^{(2)}$ in \cite{Buhler2011}. In particular we choose $\mathbf{X}_{22}^{(1)}:=\mathbf{0}$ and $\mathbf{X}_{12}^{(2)}:=\mathbf{0}$ resulting in 

\begin{IEEEeqnarray*}{rCl}
\mathbf{y}_1^{(1)} & = & \mathbf{S}^{q^{(1)}-(n^1_{11}-n^1_2)}\mathbf{x}_{11}^{(1)}\oplus \mathbf{S}^{q^{(1)}-(n^1_{12}-n^1_2)}\mathbf{x}_{12}^{(1)}\IEEEyessubnumber\\
\mathbf{y}_2^{(1)} & = & \mathbf{S}^{q^{(1)}-n^2_1}\mathbf{x}_{11}^{(1)}\oplus \mathbf{S}^{q^{(1)}-n^2_1}\mathbf{x}_{12}^{(1)}\oplus \mathbf{S}^{q^{(1)}-n^2_1}\mathbf{x}_{21}^{(1)}\IEEEyessubnumber
\end{IEEEeqnarray*}

for $\mathcal{R}_{ach}^{(1)}$ and 

\begin{IEEEeqnarray*}{rCl}
\mathbf{y}_2^{(2)} & = & \mathbf{S}^{q^{(2)}-(n^2_{21}-n^2_1)}\mathbf{x}_{21}^{(2)}\oplus \mathbf{S}^{q^{(2)}-(n^2_{22}-n^2_1)}\mathbf{x}_{22}^{(2)}\IEEEyessubnumber\\
\mathbf{y}_1^{(2)} & = & \mathbf{S}^{q^{(2)}-n^1_2}\mathbf{x}_{11}^{(2)}\oplus \mathbf{S}^{q^{(2)}-n^1_2}\mathbf{x}_{21}^{(2)}\oplus \mathbf{S}^{q^{(2)}-n^1_2}\mathbf{x}_{22}^{(2)} \IEEEyessubnumber
\end{IEEEeqnarray*}

for $\mathcal{R}_{ach}^{(2)}$.

The achievable sum rates for the systems are defined as 

\begin{equation}
R_\Sigma^{(1)} \leq n^2_1 + \zeta^{(1)} +\phi(n^1_2,\Delta_1)
\end{equation}
\begin{equation}
R_\Sigma^{(2)} \leq n^1_2 + \zeta^{(2)}+\phi (n^2_1,\Delta_2).
\end{equation}
Where $\zeta^{(1)}:=n^1_{12}-n^1_2-n^2_1$, $\zeta^{(2)}:=n^2_{22}-n_2^1-n_1^2$ and the function $\phi$ as in \ref{phi}.
\begin{figure}
\centering
\includegraphics[width=0.3 \textwidth]{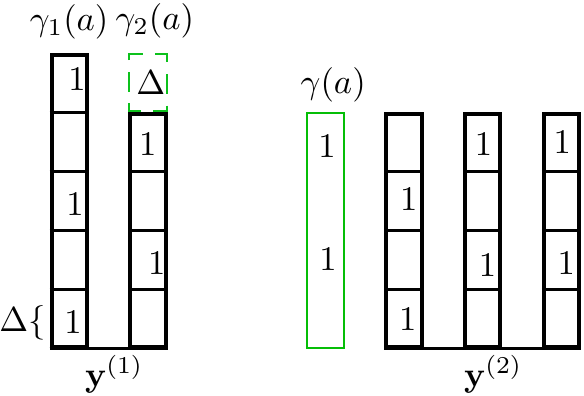}
	\caption{Illustration of the $\gamma$-functions.}
    \label{system model}
\end{figure}
It suffices to show the achievability of one sub-systems, the results for the other sub-systems follows by symmetry.
Consider the sum capacity $R_\Sigma^{(1)}$, let $\mathbf{a} \in \mathbb{F}_2^{n^2_1}$ specify the levels used for encoding the interference affected part of $\mathbf{x}_{21}$, where $a_i = 1$ if level $i$ is used and $a_i = 0$ otherwise. Define $\gamma(\mathbf{a}) := \mathbf{1}_{n^2_1} - \mathbf{a}$, $\gamma_1(\mathbf{a}) := (\gamma(\mathbf{a}); \mathbf{1}_{\Delta_1})$ and $\gamma_2(\mathbf{a}):= [ \mathbf{0}_{\Delta_1}; \gamma(\mathbf{a})]$. 
Then we can achieve 
\begin{equation}
R_\Sigma^{(1)}\leq |\gamma_1(\mathbf{a})| + |\gamma_2(\mathbf{a})| - \rho(|\mathbf{a}|) + |\mathbf{a}|
\end{equation}
where
\begin{equation} \label{eq:optrho}
\rho(x) :=\underset{\mathbf{a} \in \mathbb{F}_2^{n^2_1}: |\mathbf{a}| = x}{\text{min}}~ \gamma_1(\mathbf{a})^T \gamma_2(\mathbf{a})
\end{equation}
is indicating how many used bit-levels between the common signal parts of $\mathbf{x}_{11}$ and $\mathbf{x}_{12}$ are overlapping. Minimizing $\rho(x)$, with a per-definition non-overlapping 
$\gamma(\mathbf{a})$ gives a solution with maximal direct rate and minimal interference. 
An assignment vector $\mathbf{a}$ solving (\ref{eq:optrho}) for a given $x$ can be shown to be of the form described in the following. Let $l= n^2_1 \text{~div~} \Delta_1 $ and $Q = n^2_1 \text{~mod~} \Delta_1$, i.e., $n^2_1 = l\Delta_1 + Q$. We subdivide $\mathbf{a}$ into $l \Delta_1$ subsequences (blocks) of length $\Delta_1$ and one remainder block of length $Q$. We distribute ones over $\mathbf{a}$ until $x$ entries in $\mathbf{a}$ have been set to $1$: We start with the even-numbered blocks, followed by the remainder block. If $l$ is even, we finally distribute over the odd-numbered blocks. If $l$ is odd, we also fill the odd-numbered blocks, but in reversed (decreasing) order. To be precise, we define for the case that $l$ is even 
\begin{IEEEeqnarray*}{rCl}
\mathbf{A}_{\text{even}} &=& \left(\mathbf{0}_{1\times \frac{l}{2}};\mathbf{e}_{k}\right)_{k=1}^{l/2}  \otimes \mathbf{I}_{\Delta_1},\\
\mathbf{A}_{\text{odd}} &=& \left(\mathbf{e}_{k};\mathbf{0}_{1\times \frac{l}{2}}\right)_{k=1}^{l/2} \otimes \mathbf{I}_{\Delta_1}
\end{IEEEeqnarray*}and 
\begin{IEEEeqnarray*}{rCl}
\mathbf{A}_{\text{even}} &=& \left[\left(\mathbf{0}_{1\times \frac{(l-1)}{2}};\mathbf{e}_{k}\right)_{k=1}^{(l-1)/2};\mathbf{0}_{1 \times \frac{l-1}{2}} \right] \otimes \mathbf{I}_{\Delta_1},\\
\mathbf{A}_{\text{odd}} &=& \mathbf{M}_{l\Delta_1} \left(\left[\left(\mathbf{e}_{k};\mathbf{0}_{1\times \frac{(l+1)}{2}}\right)_{k=1}^{(l-1)/2};\mathbf{e}_{\frac{l+1}{2}} \right] \otimes \mathbf{I}_{\Delta}\right)
\end{IEEEeqnarray*}
for odd $l$. Here, $\otimes$ denotes the Kronecker product, $\mathbf{e}_k$ the unit row vector of appropriate size with $1$ at position $k$ and $\mathbf{M}_N = (\mathbf{e}_{N-k+1})_{k=1}^{N}$ is the flip matrix. Then the matrix
\begin{equation}
 \mathbf{P} = \left[\mathbf{A}_{\text{even}} |\mathbf{0}_{l\Delta\times Q} |\mathbf{A}_{\text{odd}}  \right],
\end{equation}
gives an optimal assignment vector $\mathbf{a}$ by setting $\mathbf{a} =\mathbf{P}[\mathbf{1}_{x};\mathbf{0}_{n^2_1 - x}]$.

Finally, the sum rate for the overall system can be obtained by adding the sub systems: $R_\Sigma^{(1)}+R_\Sigma^{(2)} = R_\Sigma$
 \begin{IEEEeqnarray*}{rCl}
R_\Sigma &\leq & n^2_1 + \zeta^{(1)} +\phi(n^1_2,\Delta_1)+n^1_2 + \zeta^{(2)}+\phi (n^2_1,\Delta_2)\\
& = & n_{12}+n_{22}-n^1_2-n^2_1 + \phi(n^1_2,\Delta_1) +\phi (n^2_1,\Delta_2).\IEEEyesnumber\label{IMAC-ach}
\end{IEEEeqnarray*}$\hfill\IEEEQEDclosed$
\begin{figure}
\centering
\includegraphics[scale=0.5]{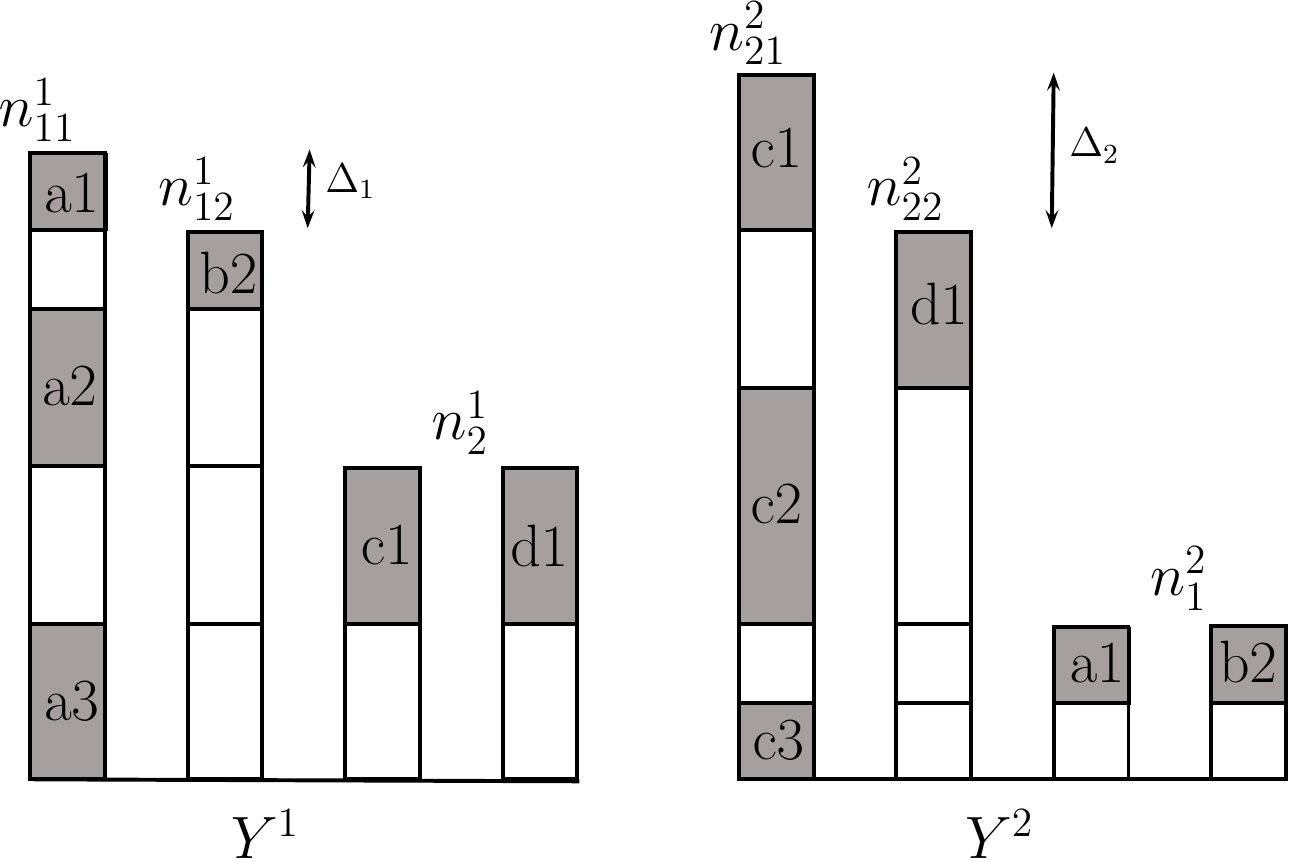}
\caption{An example for a scheme which achieves the upper bound is presented in the figure. The MAC-cell 1 has $n_{11}^1$ and $n_{12}^1$ bit levels in the direct paths and generates interference, at the other cell, of $n_1^2$ bit levels. Whereas the MAC cell 2 has $n_{21}^2$ and $n_{22}^2$ bit levels in the direct path and generates $n_2^1$ bit levels interference. It can be seen that the scheme utilizes the full bit-level range of the receivers except half of the incoming interference effected bit-levels and therefore reaches the upper bound  $R_{\Sigma}\leq n_{11}^1+n_{21}^2-\tfrac{1}{2}n_1^2-\tfrac{1}{2}n_2^1$.}
\label{MAC_MAC_SCHEMA}
\end{figure}

\subsection{Transfer from LD-IMAC to G-IMAC}
\label{transfer from ld to gaussian channel}
In this section we prove Theorem~\ref{ld-gauss-achiev}, and therefore show the achievable sum-rate for the G-IMAC based on lattice coding schemes with an inherited structure of the linear deterministic schemes.
\subsubsection{Interference Regime}

For this part, we assume without loss of generality that $h_{i1}^j\geq h_{i2}^j$ for $i=j$. This means that the first direct path is stronger or equal than the second direct path in each cell.
Also, remember that we have equal (coarse) interference strength at the receivers. With the specific modulation from \eqref{Gauss_Model_g}, we can assume that
\begin{equation}
 h_{i1}^j=h_{i2}^j \mbox{ for } i\neq j.
 \label{Equal_Interference}
\end{equation}
Furthermore we define two expressions, the signal-to-noise ratio and the interference-to-noise ratio as:
\begin{equation}
|h_{ik}^j|^2P=
\begin{cases}
\mbox{SNR}_{ik} & \text{if }  i=j
\\
\mbox{INR}_{i}^j & \text{if } i\neq j.
\end{cases}
\end{equation}
We also introduce two parameters $\alpha_i,\beta_i$ which combine these ratios with $\mbox{SNR}_{i2}\,=\,\mbox{SNR}_{i1}^{\beta_i}$ and $\mbox{INR}_{j}^i\,=\,\mbox{SNR}_{i1}^{\alpha_i}$. These parameters correspond to $\alpha, \beta$ which are used in the LDM channel model \cite{Fritschek2014a} and in GDoF considerations.
Now we can restrict the investigation to the weak interference regime defined through 
\begin{equation}
\mbox{INR}_1^2+\mbox{INR}_2^1 \leq \min\{ \mbox{SNR}_{12},\mbox{SNR}_{22}\}.
\label{weak-int-def}
\end{equation}
We assume that $\mbox{SNR}_{ik}>1$, otherwise the user can be left silent which results in a 1 bit penalty of the sum-rate. For convenience we use the standard terms: common and private signal, for the part which is seen at both cells and the part which is only received in the intended cell, respectively. Also note that the following techniques work for all channel parameters in the defined regime, except for a singularity at $\beta_i=1$ in which the additional gain for that corresponding cell will be zero. These techniques are therefore not limited to any kind of rational numbers set, as alignment comes natural due to the LDM schemes and the assumption \eqref{Equal_Interference}.

\subsubsection{Power Partitioning}
The power as observed at each receiver is partitioned into intervals. The number of partitions is depended on the channel structure, in particular the $\beta_i,\alpha_i$-parameters. These partitions play the role of intervals of bit levels in the LDM. The common signal part of cell $i$ is partitioned into 
\begin{equation}
\lfloor L_i \rfloor = \left\lfloor \frac{\log\mbox{SNR}_{j1}^{\alpha_j}}{\log \mbox{SNR}_{i1}^{(1-\beta_i)}}\right\rfloor
\end{equation} intervals. Moreover, there is an additional reminder block. The private signal part consists of an interference effected part, which is partitioned into $ \lfloor L_j \rfloor$ intervals with an additional reminder part, as well as an interference free part. We therefore have a total of $\l_{\max}\leq \lfloor L_i \rfloor+\lfloor L_j \rfloor+3$ power partitions, depending on the number of non-zero reminder parts and if there is an additional interference free private signal part.
Signal power is defined as
\begin{IEEEeqnarray*}{rCl}
&&\theta_{il}  =  \IEEEyesnumber\\
&&\begin{cases}
\mbox{SNR}_{i1}^{1-(l-1)(1-\beta_i)}-\mbox{SNR}_{i1}^{1-l(1-\beta_i)} & \text{for }  1 \leq l \leq \lfloor L_i \rfloor 
\\
\mbox{SNR}_{i1}^{1-\lfloor L_i \rfloor(1-\beta_i)}-\mbox{SNR}_{i1}^{1-(\lfloor L_i \rfloor+1)(1-\beta_i)} & \text{for } l=\lfloor L_i \rfloor+1\\
\mbox{SNR}_{i1}^{1-(\lfloor L_i \rfloor+1) (1-\beta_i)}-\mbox{SNR}_{i1}^{\alpha_i} & \text{for } l=\lfloor L_i \rfloor+2\\
\tfrac{\mbox{SNR}_{i1}^{\alpha_i}}{\mbox{SNR}_{j1}^{(l-\lfloor L_i \rfloor-2)(1-\beta_j)}}-\tfrac{\mbox{SNR}_{i1}^{\alpha_i}}{\mbox{SNR}_{j1}^{(l-\lfloor L_i \rfloor-1)(1-\beta_j)}} & \text{for } \lfloor L_i \rfloor+2 \leq l\\
& \leq \lfloor L_i \rfloor+\lfloor L_j \rfloor+2  \\
\mbox{SNR}_{i1}^{\alpha_i}\mbox{SNR}_{j1}^{-\lfloor L_j \rfloor(1-\beta_j)}-1 & \text{for } l=l_{\max}
\end{cases}
\label{theta_k_fractions}
\end{IEEEeqnarray*}
with $l$ indicating the specific partition (Fig. \ref{Powersplit}). Note that this is for the cases where $\lfloor L_i \rfloor=\text{odd}$. The additional remainder term at $l\,=\,\lfloor L_i \rfloor+1$ vanishes for $\lfloor L_i \rfloor=\text{even}$, because it can be merged with the subsequent  partition. In that case, all following powerlevels need to be changed accordingly. Each user $k$ of cell $i$ decomposes its signal into a sum of independent sub-signals \begin{equation}
\mathbf{x}_{ik}=\sum\limits_{l=1}^{l_{\max}} \mathbf{x}_{ik}(l).
\end{equation} 
This decomposition can be seen as a message split, where every message is separately encoded by a lattice code. Note that user 2 can only send at power levels $l>1$, due to its power constrains.

\subsubsection{Layered Nested Lattice Codes}
Instead of the  Loeliger-type \cite{Loeliger} lattice codes, used in \cite{Fritschek2014b}, we will use n-dimensional nested lattice codes introduced in \cite{UrezZamir} which can achieve capacity in the AWGN single-user channel. This results in a slightly lower gap in the overall rate terms. A lattice $\Lambda$ is a discrete subgroup of $\mathbb{R}^n$ which is closed under real addition and reflection. Moreover, denote the nearest neighbour quantizer by 
\begin{equation*}
Q_{\Lambda}(\mathbf{x}):=\arg\min_{\mathbf{t}\in \Lambda} ||\mathbf{x}-\mathbf{t}||.
\end{equation*} The fundamental Voronoi region $\mathcal{V}(\Lambda)$ of a lattice $\Lambda$ consists of all points which get mapped or quantized to the zero vector. The modulo operation is defined as
\begin{equation*}
[\mathbf{x}]\mod\Lambda := \mathbf{x}-Q_{\Lambda}(\mathbf{x}).
\end{equation*} A nested lattice code is composed of a pair of lattices $(\Lambda_{\text{fine}},\Lambda_{\text{coarse}})$ , where $\mathcal{V} (\Lambda_{\text{coarse}})$ is the fundamental Voronoi region of the coarse lattice and operates as a shaping region for the corresponding fine lattice $\Lambda_{\text{fine}}$. It is therefore required that $\Lambda_{\text{coarse}} \subset \Lambda_{\text{fine}}$. Such a code has a corresponding rate $R$ equal to the log of the nesting ratio. A part of the split message is now mapped to the corresponding codeword $\mathbf{u}_{ik}(l)\in \Lambda_{\text{fine},l-1} \cap \mathcal{V} (\Lambda_{\text{coarse},l})$, which is a point of the fine lattice inside the fundamental Voronoi region of the coarse lattice. Note that $\Lambda_{l_{max}} \subset \cdots\subset \Lambda_1$. 
The code is chosen such that it has a power of $\theta_{il}$. The codeword $\mathbf{x}_{ik}(l)$ is now given as 
\begin{equation}
\mathbf{x}_{ik}(l)=[\mathbf{u}_{ik}-\mathbf{d}_{ik}] \mod  \Lambda_l,
\end{equation} 
where we dither (shift) with $\mathbf{d}_{ik}\sim \mbox{Unif} (\mathcal{V}(\Lambda_l))$ and reduce the result modulo-$\Lambda_l$. Transmitter $ik$ now sends  a scaled $\mathbf{x}_{ik}$ over the channel, such that the power per sub-signal $\mathbf{x}_{ik}(l)$ is $\frac{\theta_{il}}{|h_{ik}^i|^2}$ and receivers see a power of $\theta_{il}$. Due to the partitioning construction, the $\mathbf{x}_{ik}$ satisfy the power restriction of $P$ for user 1,
\begin{equation}
\sum\limits_{l=1}^{l_{\max}} \frac{\theta_{il}}{|h_{i1}^i|^2} \leq \frac{\mbox{SNR}_{i1}}{|h_{i1}^i|^2} = P
\end{equation}
and user 2
\begin{equation}
\sum\limits_{l=2}^{l_{\max}} \frac{\theta_{il}}{|h_{i2}^i|^2} \leq \frac{\mbox{SNR}_{i2}}{|h_{i2}^i|^2} = P
\end{equation} in each cell i.

Moreover, aligning sub-signals use the same code (with independent shifts).

In \cite{UrezZamir} it was shown that nested lattice codes can achieve the capacity of the AWGN single-user channel with vanishing error probability. Viewing each of our power intervals as a channel, we therefore have that
\begin{equation}
R(l)\leq \log \left(1+\frac{\theta_{il}}{N_i(l)}\right),
\label{Decodingbound}
\end{equation}
where $N_i(l)$ denotes the noise variance per dimension of the sub-sequent levels. If a sum of $K$ lattice points gets aligned at a power level $\theta_{il}$ and we want to decode the lattice point corresponding to this sum, then the achievable rate is given by
\begin{equation}
R(l)\leq \log \left(\tfrac{1}{K}+\frac{\theta_{il}}{N_i(l)}\right),
\label{Decodingbound2}
\end{equation}
which is obtained through minimization of the denominator in \cite[Theorem 1]{ComputeUForward} and MMSE scaled decoding.

\subsubsection{Alignment}
Due to the construction, shifted codewords $\mathbf{x}_{i1}(l)$ and $\mathbf{x}_{i2}(l+1)$ are received on separate power levels at the intended receiver and align on the same power level at the unintended receiver. As an example we look at the codewords $\mathbf{x}_{11}(2)$ and $\mathbf{x}_{12}(3)$. They are transmitted from the first and second transmitter of cell 1, with a scaled power of $\tfrac{\theta_{12}}{|h_{11}^1|^2}$ and $\tfrac{\theta_{13}}{|h_{12}^1|^2}$, respectively. This means that receiver 1 sees them with power 
\begin{equation}
E(||h_{11}^1|\mathbf{x}_{11}(2)|^2)=\tfrac{\theta_{12}|h_{11}^1|^2}{|h_{11}^1|^2}=\theta_{12}.
\end{equation} 
assuming that level 2 and 3 are within the common signal part we have that $\theta_{12}=(|h_{11}^1|^2P)^{1-1(1-\beta_1)}-(|h_{11}^1|^2P)^{1-2(1-\beta_1)}$ and
\begin{equation}
E(||h_{12}^1|\mathbf{x}_{12}(3)|^2)=\tfrac{\theta_{13}|h_{12}^1|^2}{|h_{12}^1|^2}=\theta_{13}.
\end{equation} 
with $\theta_{13}=(|h_{11}^1|^2P)^{1-2(1-\beta_1)}-(|h_{11}^1|^2P)^{1-3(1-\beta_1)}$.
Clearly both codewords are received on different levels as long as $\beta_1\neq 1$. For the unintended receiver 2, these codewords are received with power 
\begin{equation}
E(||h_{11}^2|\mathbf{x}_{11}(2)|^2)=\tfrac{\theta_{12}|h_{11}^2|^2}{|h_{11}^1|^2}.
\label{received_power1}
\end{equation} 
and 
\begin{equation}
E(||h_{12}^2|\mathbf{x}_{12}(2)|^2)=\tfrac{\theta_{13}|h_{12}^2|^2}{|h_{12}^1|^2}.
\label{received_power2}
\end{equation} 
Now we need to show that \eqref{received_power1} and \eqref{received_power2} are the same. For \eqref{received_power1} we have that 
\begin{IEEEeqnarray*}{rCl}
\tfrac{\theta_{12}|h_{11}^2|^2}{|h_{11}^1|^2} &=& \left(\mbox{SNR}_{11}^{1-1(1-\beta_1)}-\mbox{SNR}_{11}^{1-2(1-\beta_1)}\right)\tfrac{|h_{11}^2|^2}{|h_{11}^1|^2}\\
&=&\left(\tfrac{\mbox{SNR}_{11}}{\mbox{SNR}_{11}^{(1-\beta_1)}}-\tfrac{\mbox{SNR}_{11}}{\mbox{SNR}_{11}^{2(1-\beta_1)}}\right)\tfrac{\mbox{INR}_1^2}{\mbox{SNR}_{11}}\\
&=&\tfrac{\mbox{INR}_1^2}{\mbox{SNR}_{11}^{(1-\beta_1)}}-\tfrac{\mbox{INR}_1^2}{\mbox{SNR}_{11}^{2(1-\beta_1)}}.
\end{IEEEeqnarray*} For \eqref{received_power2} we have that 

\begin{IEEEeqnarray*}{rCl}
\tfrac{\theta_{13}|h_{11}^2|^2}{|h_{11}^1|^2} &=& \left(\mbox{SNR}_{11}^{1-2(1-\beta_1)}-\mbox{SNR}_{11}^{1-3(1-\beta_1)}\right)\tfrac{|h_{11}^2|^2}{|h_{12}^1|^2}\\
&=&\left(\tfrac{\mbox{SNR}_{11}}{\mbox{SNR}_{11}^{2(1-\beta_1)}}-\tfrac{\mbox{SNR}_{11}}{\mbox{SNR}_{11}^{3(1-\beta_1)}}\right)\tfrac{\mbox{INR}_1^2}{\mbox{SNR}_{12}}\\
&=&\left(\tfrac{\mbox{SNR}_{11}^{\beta_1+(1-\beta_1)}}{\mbox{SNR}_{11}^{2(1-\beta_1)}}-\tfrac{\mbox{SNR}_{11}^{\beta_1+(1-\beta_1)}}{\mbox{SNR}_{11}^{3(1-\beta_1)}}\right)\tfrac{\mbox{INR}_1^2}{\mbox{SNR}_{12}}\\
&=&\left(\tfrac{\mbox{SNR}_{11}^{\beta_1}}{\mbox{SNR}_{11}^{(1-\beta_1)}}-\tfrac{\mbox{SNR}_{11}^{\beta_1}}{\mbox{SNR}_{11}^{2(1-\beta_1)}}\right)\tfrac{\mbox{INR}_1^2}{\mbox{SNR}_{12}}\\
&=&\tfrac{\mbox{INR}_1^2}{\mbox{SNR}_{11}^{(1-\beta_1)}}-\tfrac{\mbox{INR}_1^2}{\mbox{SNR}_{11}^{2(1-\beta_1)}}
\end{IEEEeqnarray*}

where we used that $\mbox{SNR}_{11}^{\beta_1}=\mbox{SNR}_{12}$ as defined.

\subsubsection{Decoding Procedure}

Decoding occurs per level, treating subsequent levels as noise. Due to the use of nested lattice codes, a sub-signal $\mathbf{u}_{ik}\mod \Lambda_l$ gets decoded, from which the original sub-signal $\mathbf{x}_{ik}$ can be reconstructed. The reconstructed signal then gets subtracted from the total received signal, leaving the noise part. The noise part constitutes the next level and the process continues. In case of an interference-affected level, only the sum of both sub-signals gets decoded $[\mathbf{u}_{ik}(l)+\mathbf{u}_{ik}(l+1)] \mod  \Lambda_l$. From the sum, the original sum can be reconstructed and subtracted from the received signal. It is therefore a successive decoding scheme, which was proven to work for nested lattice codes in \cite{Nazer2012} and recently applied in \cite{Chaaban16}. Therefore, each level is treated as a Gaussian point-to-point channel, and decodability is assured providing that the lattice rate is chosen appropriately according to \eqref{Decodingbound} or \eqref{Decodingbound2}, depending on the specific case. With a signal power of $\theta_{il}$, it only remains to specify the total noise of each level, consisting of the Gaussian noise at the receiver and the signal power of all subsequent levels, including the interference. The total achievable sum-rate is then given as the sum of all $R_{ik}(l)$ of the used direct levels. 

%
%

\subsubsection{Example of the symmetric restricted IMAC}
\begin{figure}
\centering
\includegraphics[scale=.7]{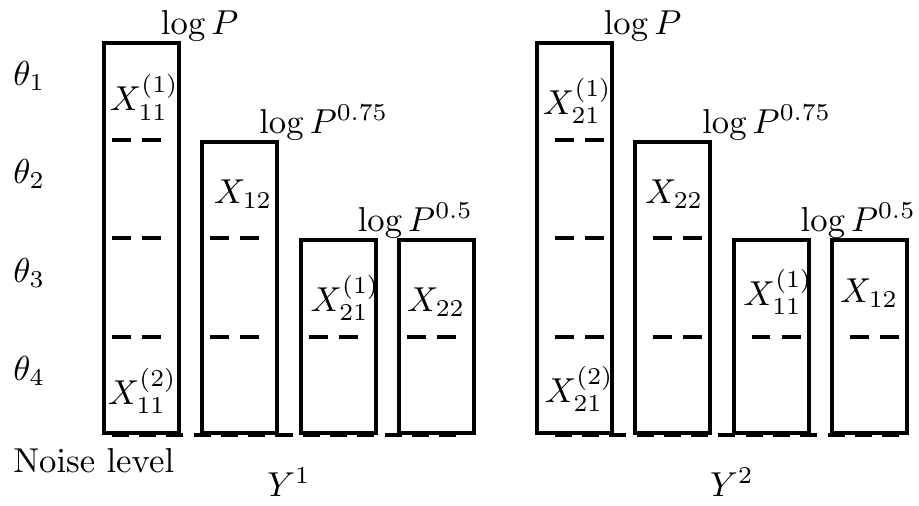}
\caption{Example of the symmetric restricted IMAC: Illustration of power partitioning, with the resulting 4 signal power levels and level use for coding}
\label{Powersplit}
\vspace{-1em}
\end{figure}
In this example we consider the fairly restricted symmetric IMAC channel, where $\alpha=0.5$ and $\beta=0.75$. Note that this example is technically not in the weak interference regime \eqref{weak-int-def} any more. However, the alignment strategies of this section can be applied to channels with the relaxed condition $\mbox{INR}_1^2+\mbox{INR}_2^1 \leq \min\{ \mbox{SNR}_{11},\mbox{SNR}_{21}\}$ by careful handling of certain cases. This condition becomes $\alpha\leq 0.5$ in our symmetrical example. The symmetry assumptions yields that $\mbox{SNR}_{11}=\mbox{SNR}_{21}=P$, $\mbox{SNR}_{12}=\mbox{SNR}_{22}=P^{\beta}$ and $\mbox{INR}_1^2=\mbox{INR}_2^1=P^{\alpha}$. Therefore we have that $L_i=\lfloor L_i \rfloor=\lfloor \tfrac{\alpha}{1-\beta } \rfloor=2$. The scheme \eqref{theta_k_fractions} yields $l_{\max}=4$ levels with $\theta_1=P-P^\beta$, $\theta_2=P^\beta-P^\alpha$, $\theta_3=P^\alpha-P^{0.25}$ and $\theta_4=P^{0.25}-1$. Moreover, we have the following noise powers per level, $N(1)=1+(P^{0.75}-1)+\theta_3$, $N(2)=1+(P^{0.5}-1)+\theta_3$, $N(3)=1+(P^{0.25}-1)$ and $N(4)=1$. We therefore have the following decoding bounds for both cells:
 \begin{IEEEeqnarray*}{rCl}
 \{\mathbf{x}_{11}(1),\mathbf{x}_{21}(1)\}:\qquad R(1) &\leq& \min \{r_1,r_3\}\\
 \{\mathbf{x}_{12}(2),\mathbf{x}_{22}(2)\}:\qquad R(2) &\leq& \min \{r_2,r_3\}\\
 \{\mathbf{x}_{11}(4),\mathbf{x}_{21}(4)\}:\qquad R(4)&\leq& r_4
\end{IEEEeqnarray*} with 
\begin{IEEEeqnarray*}{rCl}
r_1&= &\log \left(1+\frac{P-P^{0.75}}{P^{0.75}+\theta_3} \right),\\
r_2&= &\log \left(1+\frac{P^{0.75}-P^{0.5}}{P^{0.5}+\theta_3} \right),\\
r_3&= &\log \left(\tfrac{1}{2}+\frac{P^{0.5}-P^{0.25}}{P^{0.25}} \right)^+,\\
r_4&= &\log \left(1+\frac{P^{0.25}-1}{1} \right).
\end{IEEEeqnarray*} where the minima are necessary, because $\mathbf{x}_{11}(1),\mathbf{x}_{12}(2)$ and $\mathbf{x}_{21}(1),\mathbf{x}_{22}(2)$ need to be decodable at both receivers. Moreover, neither cell can send on level 3, since all interference is aligned at that level. 
The total achievable rate is the summation over all levels
\begin{equation}
R_\Sigma = 2R(1)+2R(2)+2R(4).
\end{equation}

Moreover, we can show that 
\begin{IEEEeqnarray*}{rCl}
r_1 &=& \log \left(1+\frac{P-P^{0.75}}{P^{0.75}+\theta_3} \right)\\
&>& \log \left(\frac{P+\theta_3}{P^{0.75}+\theta_3} \right)\\
&>& \log \left(\frac{P}{P^{0.75}+\theta_3} \right)\\
&>&\log \left(\frac{P}{2P^{0.75}} \right)\\
&=&\log P^{0.25}-1,
\end{IEEEeqnarray*}where the last inequality follows from the fact that $P^{0.75}>\theta_3$ because of $P>1$. Similarly, we can show that 
\begin{IEEEeqnarray*}{rCl}
r_2&> &\log P^{0.25}-1,\\
r_3&> &\log P^{0.25}-2,\\
r_4&= &\log P^{0.25}.
\end{IEEEeqnarray*} 

 The total achievable rate is therefore 
\begin{IEEEeqnarray*}{rCl}
R_\Sigma &=& 2R(1)+2R(2)+2(4)\\
&=&4\log P^{0.25}-8+2\log P^{0.25}\\
&=& 2\log P-\log P^{0.5}-8
\end{IEEEeqnarray*}
With the definition of $\alpha=0.5$, $n_1=\lceil\log P\rceil$ and $n_i=\lceil\log P^{0.5}\rceil$  one can see, that the proposed scheme can achieve the upper bound within a constant gap of $8$ Bits.

\subsubsection{Achievable Rate: The general (weak) interference case}
Henceforth, we define $i\neq j$ for $i,j\in \{1,2\}$ in all equations.
In the general case, $\beta_i$ and $\alpha_i$ can be any value in the defined regime and therefore any number of levels can be needed.
The power splitting is done as in the example where signal power is given by (\ref{theta_k_fractions}).
The choice of codeword decomposition and level usage is dependent on the underlying LDM scheme. We use the power partitions of $\theta_{il}$ in the same way as the $\Delta_i$ blocks in the LDM. However, instead of filling the $l-$th $\Delta_i$-block with bits, we use the specific power partition interval $\theta_{il}$ to transmit a sub-signal lattice codeword.
As in the previous example, the sub-signal codewords can be decoded providing a rate of \eqref{Decodingbound} and \eqref{Decodingbound2}, depending on the number of aligning signals.
It remains to specify the effective noise per level. The general noise structure is 
\begin{IEEEeqnarray*}{rCl}
N_i(l)&=&1+\sum_{\mbox{used levels}}\theta_{l+1},\\
\label{noise_term}
\end{IEEEeqnarray*}
we the levels of aligned interference are counted twice.
We have to distinguish three different rate-term structures for cell $i$, see Fig.~\ref{rate-term-structures}. The first one is the common signal part $R_{\text{IC}_i}$. Here we need decoding bounds $R_{\text{IC}_i}(l)$ per power-partition, which are also received as interference and therefore need to obey two decoding conditions. One decoding bound stems from the direct path utilizing bound \eqref{Decodingbound}. The other decoding bound stems from the fact that our successive decoding scheme needs to decode the sum of two interfering bit-levels at every odd bit-level and therefore needs to obey \eqref{Decodingbound2}. We show in the Appendix \ref{BoundOnInterferenceRateTerm}, that we can achieve

\begin{figure}
\centering
\includegraphics[scale=1]{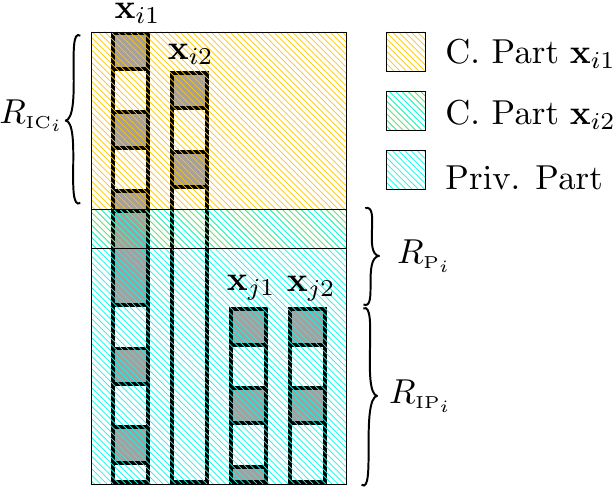}
\caption{Illustration of the three rate term structures. $R_{\text{IC}_i}$ is the alignment structure of the common part. Note that the common part of $\mathbf{x}_{i2}$ is down shifted by exactly one power partition. Moreover, we illustrate the private signal parts, with and without interference.}
\label{rate-term-structures}
\end{figure}

\begin{equation}
\bar{R}_{\text{IC}_i}(l)>\log  \mbox{SNR}_{i1}^{(1-\beta_i)}-2
\end{equation}
which is the minimum of both bounds for the common part per power partition.
The second term is the private part $R_{\text{P}_i}$ which is not interference affected, therefore outside the alignment structures. We need to distinguish the two cases $\lfloor L_i \rfloor=\text{odd}$ and $\lfloor L_i \rfloor=\text{even}$. For $\lfloor L_i \rfloor=\text{odd}$, we have a remainder term with power $\mbox{SNR}_{i1}^{1-\lfloor L_i \rfloor(1-\beta_i)}-\mbox{SNR}_{i1}^{1-(\lfloor L_i \rfloor+1)(1-\beta_i)}$, and the regular private term with power $\mbox{SNR}_{i1}^{1-(\lfloor L_i \rfloor+1)(1-\beta_i)}-\mbox{SNR}_{i1}^{1-\alpha_i(1-\beta_i)}$. For $\lfloor L_i \rfloor=\text{even}$, we have a remainder term of power $\mbox{SNR}_{i1}^{1-\lfloor L_i \rfloor(1-\beta_i)}-\mbox{SNR}_{i1}^{1- L_i (1-\beta_i)}$ and a private term of power $\mbox{SNR}_{i1}^{1-L_i (1-\beta_i)}-\mbox{SNR}_{i1}^{1-\alpha_i(1-\beta_i)}$, which is depicted in Fig.~\ref{rate-term-structures}. Note that for $\lfloor L_i \rfloor=L_i=\text{even}$, the remainder term is zero.
Let us choose the case $\lfloor L_i \rfloor=\text{odd}$ as an example. Here we have the remainder term and the private part term. The remainder term has the same analysis as the terms of $R_{\text{IC}_i}$ in Appendix \ref{BoundOnInterferenceRateTerm}. Considering the new power partition, we get a rate of
\begin{equation}
R_{\text{P1}_i}>\log \mbox{SNR}_{i1}^{1-\lfloor L_i \rfloor(1-\beta_i)}-\log \mbox{SNR}_{i1}^{1-(\lfloor L_i \rfloor+1)(1-\beta_i)} -2.
\end{equation}
For the private part, we only have to consider the direct rate, and do not need the minimisation over both decoding bounds which results in a smaller bit-gap
\begin{equation}
R_{\text{P2}_i}>\log \mbox{SNR}_{i1}^{1-(\lfloor L_i \rfloor+1)(1-\beta_i)}-\log \mbox{SNR}_{i1}^{\alpha_i} -1.
\end{equation}
The total rate of the private part is therefore 
\begin{equation}
R_{\text{P}_i}>\log \mbox{SNR}_{i1}^{1-\lfloor L_i \rfloor(1-\beta_i)}-\log \mbox{SNR}_{i1}^{\alpha_i} -3.
\end{equation}
Note that this is also achievable for the cases with $\lfloor L_i \rfloor=\text{odd}$. The only difference is the location of the power split. Note that in our weak interference regime \eqref{weak-int-def}, we have that $1-\lfloor L_i \rfloor(1-\beta_i)\geq \alpha_i$, since $1-L_i (1-\beta_i)\geq \alpha_i$ which can be shown in the following way. Plugging in the definition of $ L_i $, we get
\begin{IEEEeqnarray*}{rCl}
1-\frac{\log\mbox{SNR}_{j1}^{\alpha_j}}{\log \mbox{SNR}_{i1}^{(1-\beta_i)}}(1-\beta_i) &\geq & \alpha_i\\
1-\frac{\log\mbox{SNR}_{j1}^{\alpha_j}}{\log \mbox{SNR}_{i1}} &\geq & \alpha_i\\
\log \mbox{SNR}_{i1}-\log\mbox{SNR}_{j1}^{\alpha_j}&\geq & \alpha_i \log \mbox{SNR}_{i1}\\
\log \mbox{SNR}_{i1}&\geq & \log \mbox{SNR}_{i1}^{\alpha_i}+\log\mbox{SNR}_{j1}^{\alpha_j}\\
\log \mbox{SNR}_{i1}&\geq & \log \mbox{INR}_{2}^1+\log\mbox{SNR}_{1}^{2}
\end{IEEEeqnarray*}
which is true, since  $\min\{ \mbox{SNR}_{12},\mbox{SNR}_{22}\}\leq \mbox{SNR}_{i1}$.

The third rate term structure is the private rate part, which is affected by the interference. 
For every power partition, we can achieve a rate
\begin{IEEEeqnarray*}{rCl}
\bar{R}_{\bar{\text{IP}}_i}(l)
&>& \log  \mbox{SNR}_{j1}^{(1-\beta_j)}-1,
\end{IEEEeqnarray*}
 which is shown in the Appendix \ref{BoundOnInterferenceRateTerm}.

Furthermore, if $L_j \neq \lfloor L_j \rfloor=\text{even}$, the private part rate $R_{\bar{\text{IP}}_i}$ has a remainder term allocated at the lowest power level. The lowest level has a noise term of 1 and a power of $\mbox{SNR}_{i1}^{\alpha_i}\mbox{SNR}_{j1}^{-\lfloor L_j \rfloor(1-\beta_j)}-1$. We only need to decode the direct rate term and therefore get a decoding bound of
\begin{equation}
R_{\text{IP}_{i,{\text{rem}}}}\leq \log \mbox{SNR}_{i1}^{\alpha_i}\mbox{SNR}_{j1}^{-\lfloor L_j \rfloor(1-\beta_j)}.
\end{equation}

The total achievable sum rate is the summation over all three rate term structures $\bar{R}_{\text{IC}_i}$, $\bar{R}_{\text{IP}_i}$, and $R_{\text{P}_i}$ including the remainder parts. Moreover, we need to sum over all individual partitions. This means that 
\begin{equation}
\bar{R}_{\text{IC}_i}>\sum\limits_l^{\lfloor L_i \rfloor}\mbox{SNR}_{i1}^{(1-\beta_i)}-2,
\end{equation}
and 
\begin{equation}
\bar{R}_{\text{IP}_i}>\sum\limits_{\substack{l=1\\l  \text{ odd}}}^{\lfloor L_j \rfloor}\log \mbox{SNR}_{j1}^{(1-\beta_j)}-1.
\end{equation}
We show the proof exemplary for the case that $L_i=\lfloor L_i \rfloor=even$. Here we have an achievable sum-rate of:
\begin{IEEEeqnarray*}{rCl}
R_{\Sigma}&=& \sum\limits_i^{2}  \bar{R}_{\text{IC}_i}+\bar{R}_{\text{P}_i}+\bar{R}_{\text{IP}_i}\\
&>& \sum\limits_l^{\lfloor L_2 \rfloor} (\log  \mbox{SNR}_{21}^{(1-\beta_2)}-2)+\sum\limits_{\substack{l=1\\l  \text{ even}}}^{\lfloor L_2 \rfloor}(\log \mbox{SNR}_{21}^{(1-\beta_2)}-1)\\
&&+\:\sum\limits_l^{\lfloor L_1 \rfloor} (\log  \mbox{SNR}_{11}^{(1-\beta_1)}-2)+\sum\limits_{\substack{l=1\\l  \text{ even}}}^{\lfloor L_1 \rfloor}(\log \mbox{SNR}_{11}^{(1-\beta_1)}-1)\\
&&+\:\log \mbox{SNR}_{11}^{1-\lfloor L_1 \rfloor(1-\beta_1)}-\log \mbox{SNR}_{11}^{\alpha_1} -3\\
&&+\:\log \mbox{SNR}_{21}^{1-\lfloor L_2 \rfloor(1-\beta_2)}-\log \mbox{SNR}_{21}^{\alpha_2} -3\\
&=&\lfloor L_2 \rfloor\log   \mbox{SNR}_{21}^{(1-\beta_2)}+\frac{\lfloor L_2 \rfloor}{2}\log \mbox{SNR}_{21}^{(1-\beta_2)}\\
&&+\:\lfloor L_1 \rfloor\log  \mbox{SNR}_{11}^{(1-\beta_1)}+\frac{\lfloor L_1 \rfloor}{2}\log \mbox{SNR}_{11}^{(1-\beta_1)}\\
&&-\: 2.5(\lfloor L_1 \rfloor+\lfloor L_2 \rfloor)\\
&&+\:\log \mbox{SNR}_{11}^{1-\lfloor L_1 \rfloor(1-\beta_1)}-\log \mbox{SNR}_{11}^{\alpha_1} -3\\
&&+\:\log \mbox{SNR}_{21}^{1-\lfloor L_2 \rfloor(1-\beta_2)}-\log \mbox{SNR}_{21}^{\alpha_2} -3\\
&=&\log \mbox{SNR}_{11}+\frac{\lfloor L_2 \rfloor}{2}\log \mbox{SNR}_{21}^{(1-\beta_2)}\\
&&+\log \mbox{SNR}_{21}+\frac{\lfloor L_1 \rfloor}{2}\log \mbox{SNR}_{11}^{(1-\beta_1)}\\
&&-\:\log \mbox{SNR}_{11}^{\alpha_1}-\log \mbox{SNR}_{21}^{\alpha_2} -6-2.5(\lfloor L_2 \rfloor+ \lfloor L_1 \rfloor)\\
&=&\log \mbox{SNR}_{11}^{\beta_1} -\log \mbox{SNR}_{11}^{\alpha_1}+\log \mbox{SNR}_{21}^{\beta_2} -\log \mbox{SNR}_{21}^{\alpha_2}\\
&&+\:\phi(\log \mbox{SNR}_{21}^{\alpha_2},\log \mbox{SNR}_{11}^{(1-\beta_1)})\\
&&+\:\phi(\log \mbox{SNR}_{11}^{\alpha_1},\log \mbox{SNR}_{21}^{(1-\beta_2)})\\
&&-\:6-2.5(\lfloor L_2 \rfloor+ \lfloor L_1 \rfloor),
\end{IEEEeqnarray*}
where the last step follows from the definition of $\phi(p,q)$ in \eqref{phi}. We have that $\phi(p,q)=q+\frac{l(p,q)q}{2}$ for $l(p,q)=\text{even}$. Plugging in $p=\log \mbox{SNR}_{i1}^{\alpha_i}$ and $q=\log \mbox{SNR}_{j1}^{(1-\beta_j)}$, shows that $l(\log \mbox{SNR}_{i1}^{\alpha_i},\log \mbox{SNR}_{j1}^{(1-\beta_j)})=\lfloor L_j \rfloor$ and the result follows.$\hfill\IEEEQEDclosed$

One can see that the achievable rate of the Gaussian channel is within a constant-gap of $6+2.5(\lfloor L_2 \rfloor+ \lfloor L_1 \rfloor)$bits of the LDM rate using the correspondence $n_{ik}^j=\lceil\log |h_{ik}^j|^2 P\rceil$.

\section{Analysis of the LTD-IMAC}
In this section we prove the two theorems, Theorem~\ref{weak_constant_gap} and Theorem~\ref{ltd-imac-thm}. In particular, we will first show the achievability in Section~\ref{LTD-achiev.} and then in Section~\ref{UB_LTDM} the upper bound, for the theorems.
\label{Proof for LTD-IMAC}
\subsection{Achievable Schemes}

\label{LTD-achiev.}
We start with two lemmas, which provide the bases for the achievable rate in the LTD model in the general symmetric setting and in the weak interference general setting, respectively. Afterwards we discuss two lemmas which are necessary to prove that the codewords in the schemes can be decoded. 
\begin{lemma}[{\it general interference, symmetry}]
For every $\delta\in (0,1]$ and $n_1$, $n_2$, $n_i \in \mathbb{N}$ such that $n_1 \geq n_2$ there exists a set $B\subset (1,2]^{2 \times 3}$ of Lebesgue measure $\mu(B)\leq \delta$ such that for all channel gains $g_{ik}^j \in (1,2]^{2 \times 3} \setminus B$ the achievable sum rate for the IMAC system model is:
\begin{equation}
R_\Sigma \geq \min\{R_{\text{ach},1},R_{\text{ach},2},R_{\text{ach},3},R_{\text{ach},4},R_{\text{ach},5}\}-2 \log (c/\delta)
\end{equation}
with 
\begin{IEEEeqnarray*}{rCl}
R_{\text{ach},1}&:=& 2\max((n_1-n_i)^+,n_i)\IEEEyessubnumber\\
&&+\:\min((n_1-n_i)^+,n_i),\\
R_{\text{ach},2}&:=& \tfrac{2}{3}(2\max(n_1,n_i)+(n_1-n_i)^+),\IEEEyessubnumber\\
R_{\text{ach},3}&:=& 2n_1,\IEEEyessubnumber\\
R_{\text{ach},4}&:=& \max(2n_2,2(n_1-n_i)^+,2n_i),\IEEEyessubnumber\\
R_{\text{ach},5}&:=& \max(n_1,n_i)+\max(n_2,(n_1-n_i)^+).\IEEEyessubnumber
\end{IEEEeqnarray*}

\label{ltd-upBound1}
\end{lemma}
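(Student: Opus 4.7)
The plan is to extend the LDM achievability scheme of Theorem~\ref{ld-achiev} to the LTDM by exploiting the non-orthogonal bit-level interactions induced by the lower triangular channel matrices $\mathbf H_{ik}^j$. The proof splits naturally into three stages: (i) a regime analysis that identifies which of the five bounds $R_{\text{ach},i}$ is active for a given $(\alpha, n_2)$; (ii) the construction of an explicit signal-scale alignment scheme in each regime; and (iii) a measure-theoretic decoding argument that tolerates the mixing from $\mathbf H$ at the price of a small outage set.

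For step (i), I would parametrise the channel by $\alpha := n_i/n_1$ together with the ratio $n_2/n_1$, and partition the parameter space into the sub-regimes $\alpha \le 1/2$, $1/2 < \alpha \le 2/3$, $2/3 < \alpha \le 1$, and $\alpha > 1$, crossed with a comparison between $n_2$ and $(n_1-n_i)^+$ that decides whether the weaker user contributes any multi-user gain. A direct check shows that in each sub-regime exactly one of the expressions $R_{\text{ach},1},\dots,R_{\text{ach},5}$ attains the minimum and reproduces the corresponding corner of the W-curve in Figure~\ref{GDOF_IMAC}: $R_{\text{ach},3}=2n_1$ is the trivial MAC-type bound; $R_{\text{ach},1}$ and $R_{\text{ach},4}$ correspond to the weak-interference step-points already attained by the LDM scheme of Theorem~\ref{ld-achiev}; and $R_{\text{ach},2}$, $R_{\text{ach},5}$ encode the moderate-interference plateau (in particular the $4/3$-DoF point at $\alpha=1$), which the LDM scheme cannot reach.

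For step (ii), I would follow the model-split strategy of Section~\ref{proof ld-achievable}, decomposing each cell into a common region of $n_i$ bit-levels, a private region of $(n_1-n_i)^+$ bit-levels below it, and a shift block of size $n_1-n_2$ associated with the weaker user. The stronger user $x_{i1}$ carries the private bits and a carefully chosen subset of the common region, while $x_{i2}$ fills the shift block and the remaining common bits. Because $\mathbf H_i^j$ acts as a convolution with the binary expansion of $g_i^j$, the aligned interference $\mathbf H_i^j(\bar{\mathbf x}^c_{i1}\oplus\bar{\mathbf x}^c_{i2})$ at the unintended receiver can be arranged to occupy only half of the interference window, exactly as in the LDM scheme, but without the orthogonality constraint on bit-levels that produced its step-like curve. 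The precise fractions chosen in each regime are tuned to match the corresponding $R_{\text{ach},i}$.

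The main obstacle lies in step (iii): at the intended receiver, the own signals $\mathbf H_{i1}^i\bar{\mathbf x}_{i1}$ and $\mathbf H_{i2}^i\bar{\mathbf x}_{i2}$ are mixed with the aligned interference through distinct lower triangular matrices, and decoding requires the combined system to be invertible with only a bounded bit loss. Following \cite{Niesen-Ali}, I would write the relevant portion of the received vector as $\mathbf M \mathbf u$, where $\mathbf M$ is a structured matrix whose entries are binary digits of the channel gains $g_{ik}^j$, and apply a Khintchine--Groshev-type estimate to bound the Lebesgue measure of channel gains on which $\mathbf M$ becomes too ill-conditioned. The resulting outage set $B$ has measure at most $\delta$, and the rate penalty is $2\log(c/\delta)$, where $c$ depends only on the dimensions of $\mathbf M$ and not on the channel gains themselves. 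This interplay between the LTDM alignment geometry and the measure-theoretic control of exceptional channel gains is the principal technical challenge, and is also what makes the LTD model strictly more powerful than the LDM used in Theorem~\ref{ld-achiev}.
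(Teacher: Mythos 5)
Your overall strategy — a case analysis over interference regimes, an explicit bit-level allocation per regime matched to the active $R_{\text{ach},i}$, and a Khintchine--Groshev measure estimate to control the outage set of channel gains — is exactly the route the paper takes. However, as written the proposal defers essentially all of the content to ``precise fractions chosen in each regime,'' and it is precisely there that the work lies; two concrete gaps remain.

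First, your regime partition is too coarse. The paper needs the boundaries $\alpha=\tfrac{3}{5}$, $\tfrac{2}{3}$, $\tfrac{3}{4}$, $\tfrac{3}{2}$ in addition to $\tfrac{1}{2}$ and $1$, because the common-part allocation changes from $\lfloor\tfrac{1}{2}n_i\rfloor$ (weak) to $\lfloor\tfrac{1}{2}(n_1-n_i)\rfloor$ (case II) to $\lfloor\tfrac{1}{3}n_i\rfloor$ plus $n_2$-dependent corrections (cases III--IV) to $\lfloor\tfrac{1}{2}n_1\rfloor$ (case V), and the thresholds on $n_2$ that decide how much multi-user gain survives are not simply ``$n_2$ versus $(n_1-n_i)^+$'' but quantities such as $n_i+\lfloor\tfrac{1}{2}(n_1-n_i)\rfloor$, $n_1-\tfrac{1}{3}n_i$, and $2n_i-n_1$, with correction terms like $\lfloor(n_1-\tfrac{4}{3}n_i)^++\tfrac{1}{2}(2n_i-n_2-n_1)^+\rfloor$ in the allocation of the stronger user. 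Without exhibiting these allocations and checking them against the decoding inequalities, the claim that each $R_{\text{ach},i}$ is attained is not established; also note that which bound is active depends on $n_2$ as well as $\alpha$ (e.g.\ $R_{\text{ach},4}$ and $R_{\text{ach},5}$ become active in several regimes when the weaker user is not strong enough), so ``exactly one expression attains the minimum per sub-regime'' is not quite the right picture.

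Second, the decoding step cannot simply cite the X-channel lemma of Niesen--Ali uniformly. For $\alpha<\tfrac{1}{2}$ the scheme has a second private signal part $R^{p2}_{k1}$ sitting between the common part and the interference-level private part, and the existing lemma does not cover this configuration; the paper has to prove a modified decoding lemma that exploits the fact that $R^{p1}_{k1}$, $R^{p2}_{k1}$ and the aligned interference occupy non-overlapping bit ranges, which both reduces the number of Groshev-type cases and yields a smaller constant. Your sketch treats the measure-theoretic step as a black box applied to ``a structured matrix $\mathbf{M}$,'' which glosses over this necessary modification. The final gap constant $2\log(c/\delta)$ then comes from combining the decoding-lemma loss with the at most $4$ bits lost to floors and ceilings in the allocations, which also needs to be tracked explicitly.
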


\begin{lemma}[{\it weak interference}]
For every $\delta\in (0,1]$ and $n_1$, $n_2$, $n_i \in \mathbb{N}$ such that $n_1 \geq n_2$ there exists a set $B\subset (1,2]^{2 \times 3}$ of Lebesgue measure $\mu(B)\leq \delta$ such that for all channel gains $g_{ik}^j \in (1,2]^{2 \times 3} \setminus B$ the achievable sum rate for the IMAC system model is:
\begin{equation}
R_\Sigma \geq \min\{R_{\text{ach},1},R_{\text{ach},2},R_{\text{ach},3},R_{\text{ach},4}\}-2 \log (c/\delta)
\end{equation}
with 
\begin{IEEEeqnarray*}{rCl}
R_{\text{ach},1}&:=&\max\{(n_{11}^1-n_1^2),n_{12}^1\}\IEEEyessubnumber\label{weak-ltd-a}\\
&&+\:\max\{(n_{21}^2-n_2^1),n_{22}^2\}\\
R_{\text{ach},2}&:=&\max\{(n_{11}^1-n_1^2),n_{12}^1\}+n_{21}^2-\tfrac{1}{2}n_2^1\IEEEyessubnumber\label{weak-ltd-b}\\
R_{\text{ach},3}&:=&n_{11}^1-\tfrac{1}{2}n_1^2+\max\{(n_{21}^2-n_2^1),n_{22}^2\}\IEEEyessubnumber\label{weak-ltd-c}\\
R_{\text{ach},4}&:=&n_{11}^1-\tfrac{1}{2}n_1^2+n_{21}^2-\tfrac{1}{2}n_2^1\IEEEyessubnumber\label{weak-ltd-d},
\end{IEEEeqnarray*}
\label{ltd-upBound2}
\end{lemma}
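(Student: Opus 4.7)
The plan is to construct an LTD achievability scheme whose sum-rate matches each of the four bounds $R_{\mathrm{ach},1},\dots,R_{\mathrm{ach},4}$ up to an additive term $2\log(c/\delta)$ coming from a measure-theoretic outage argument. The architecture extends the two-sub-system decomposition of the LD-IMAC proof in Section~\ref{proof ld-achievable}, but replaces the orthogonal bit-level assignment (which caused the step-shaped curve in Figure~\ref{ld-achiev-uB}) by a coding structure that exploits the convolutive action of the LTD matrix $\mathbf{H}$.

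First I would split each source $u_{ik}$ into a common part, carried above the interference threshold $n_j^i$, and a private part carried below it, so the private parts are invisible at the unintended cell. The modulation in (\ref{modulation}) has already forced the two interfering common parts from each cell through a single aggregate channel ($g_2^1$ at $y^1$ and $g_1^2$ at $y^2$); consequently the interference at the unintended receiver is of the form $\mathbf{H}_2^1(\bar{\mathbf{x}}_{21}^c\oplus\bar{\mathbf{x}}_{22}^c)$, and the two interferers behave as one \emph{virtual} MAC user occupying at most $n_i$ bit-levels. The LTD model's convolutive coupling between levels allows me to choose the common codewords so that after multiplication by $\mathbf{H}$ the occupied support is concentrated, effectively saving half of the interference range; this produces the $-\tfrac{1}{2}n_j^i$ terms in (\ref{weak-ltd-b})--(\ref{weak-ltd-d}).

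Next I would analyse the four cases arising from whether in cell $i$ the binding constraint is the strong user's effective rate $n_{i1}^i-n_j^i$ or the weak user's interference-free rate $n_{i2}^i$, and whether the per-cell rate is dominated by the direct-path budget or by the aligned-interference penalty $n_{j1}^j-\tfrac{1}{2}n_i^j$. The four combinations of (\emph{max-per-cell}) $\times$ (\emph{max-per-cell}) generate $R_{\mathrm{ach},1},R_{\mathrm{ach},2},R_{\mathrm{ach},3},R_{\mathrm{ach},4}$; taking the minimum simply selects which of the four is realised for a given parameter configuration. The weak-interference assumption $n_2^1+n_1^2\leq\min\{n_{11}^1,n_{21}^2\}$ guarantees that the common and private parts can coexist at each receiver without overflowing, paralleling the decomposition used in Theorem~\ref{ld-achiev}.

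The main obstacle will be the decodability step: I need to show that, for channel gains $g_{ik}^j$ outside an exceptional set $B\subset(1,2]^{2\times 3}$ of Lebesgue measure at most $\delta$, the intended receiver can first peel off the aggregated interference $\mathbf{H}_j^i(\bar{\mathbf{x}}_{j1}^c\oplus\bar{\mathbf{x}}_{j2}^c)$ and then recover the two direct messages sent through $\mathbf{H}_{i1}^i$ and $\mathbf{H}_{i2}^i$. This requires that the relevant rows of the three matrices $\mathbf{H}_{i1}^i,\mathbf{H}_{i2}^i,\mathbf{H}_j^i$, when restricted to the selected level-patterns, avoid near-linear dependencies among themselves. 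Following the approach introduced in \cite{Niesen-Ali}, this is precisely where the $\delta$ trade-off enters: the set of $g$'s producing a "collision" of size smaller than $\delta/c$ in any of the relevant subspaces has Lebesgue measure controlled by $\delta$, and the rate penalty incurred by excluding those $g$'s is $2\log(c/\delta)$. Once this measure-avoidance lemma is invoked, the scheme decodes level by level as in a standard successive-cancellation MAC, and summing the rates of the common and private parts in each cell yields the four expressions claimed in the lemma.
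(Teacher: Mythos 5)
Your overall architecture matches the paper's: a common/private split per user, the observation that the modulation in (\ref{modulation}) collapses the two in-cell interferers into a single aligned aggregate $\mathbf{H}_j^i(\bar{\mathbf{x}}_{j1}^c\oplus\bar{\mathbf{x}}_{j2}^c)$, a case analysis on whether the weaker user's strength $n_{k2}^k$ or the stronger user's interference-free budget $n_{k1}^k-n_k^l$ is binding (which produces the $\max$ terms and, combined across the two cells, the four bounds), and a Khintchine--Groshev-type measure argument that yields the outage set $B$ and the $2\log(c/\delta)$ penalty. However, there is a genuine conceptual error in your mechanism for the $-\tfrac{1}{2}n_j^i$ terms. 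You attribute the saving to choosing common codewords "so that after multiplication by $\mathbf{H}$ the occupied support is concentrated." A lower uni-triangular Toeplitz convolution spreads support downward; it cannot compress it, and no choice of codeword makes the aligned interference occupy fewer levels. The actual mechanism in the paper is a rate split: each of the two users in cell $l$ sends a common codeword of rate roughly $\tfrac{1}{2}n_l^k$ on the \emph{same} levels, so the aggregate interference at the victim costs only $\max\{R^c_{l1},R^c_{l2}\}\approx\tfrac{1}{2}n_l^k$ in every decoding condition (while being separable at the home receiver thanks to the shift $\Delta$ between $n_{l1}^l$ and $n_{l2}^l$), and the victim receiver simultaneously superimposes a private stream $R^{p1}_{k1}\approx\tfrac{1}{2}n_l^k$ non-orthogonally on the interference-occupied levels, so that $R^{p1}_{k1}+\max\{R^c_{l1},R^c_{l2}\}\leq n_l^k$. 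Without this explicit allocation your plan does not produce the claimed expressions.

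A second gap is in the decodability step. You propose to "peel off" the interference and then decode the direct messages, invoking the lemma of \cite{Niesen-Ali} as a black box. The paper's decoding is a joint linear-independence argument, not successive cancellation, and --- more importantly --- Lemma~11 of \cite{Niesen-Ali} is \emph{not} applicable in the weak-interference regime, precisely because the scheme there has a second private part $R^{p2}_{k1}$ sitting between the common part and the incoming interference. The paper must prove a modified decoding lemma (Lemma~\ref{ltd-weak-decoding-lemma}, with four conditions rather than three, exploiting the non-overlap of $R^{p1}_{k1}$, $R^{p2}_{k1}$ and $\max\{R^c_{l1},R^c_{l2}\}$ to keep the constant $c$ small) before the measure bound can be invoked. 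Your proposal treats this as routine, but it is the main new technical content of the proof and would need to be supplied.
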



\begin{proof} To prove the lemmas we first need to show under which conditions a rate allocation scheme yields linear independence of the used $\mathbf{H}$ columns and therefore allows successful decoding. For the interference range $\alpha \geq \tfrac{1}{2}$, we can use the lemma 11 of \cite{Niesen-Ali}, with a re-labelling such that it fits our case. The following lemma shows the modified version.

\begin{lemma}[{\cite[Lemma~11 (modified)]{Niesen-Ali}}]
\label{general-ach-ltd-lem}
Let $\delta\in (0,1]$ and $n_1$, $n_2$, $n_i \in \mathbb{N}$ such that $n_1\geq n_2 \geq n_i $, and $2n_i \geq n_1$,  and let $ R^c_{k1}$, $R^{p1}_{k1}$, $R^{p2}_{k1}$, $R^c_{k2} \in \mathbb{N}$ with $k,l\in\{1,2\}$ and $l\neq k$  satisfy,
\begin{IEEEeqnarray*}{rCl}
R^c_{k1}+R^c_{k2} +\max\{R^c_{l1},R^c_{l2}\} + R^{p1}_{k1}&\leq &  n_{1}-\log\left(\tfrac{32}{\delta}\right)\\
R^c_{k2} +\max\{R^c_{l1},R^c_{l2}\} + R^{p1}_{k1}&\leq&  n_{2}-\log\left(\tfrac{32}{\delta}\right)\\
\max\{R^c_{l1},R^c_{l2}\}+R^{p1}_{k1}&\leq&  n_i
\end{IEEEeqnarray*}
Then, the bit allocation [...] for the (modulated) deterministic X-channel allows successful decoding at both receivers for all channel gains ($g_{ik}^j\in(1,2]^{2\times 3}$) except for a set $B\subset (1,2]^{2 \times 3}$ of Lebesgue measure $\mu(B)\leq \delta$.

\end{lemma}

For the range of $\alpha < \tfrac{1}{2}$, our achievable scheme gets a second private signal part between the common and the private signal of the stronger user (see for example Fig.~\ref{rateallo}, $R_{11}^{p2}$). Therefore, Lemma~\ref{general-ach-ltd-lem} is not applicable anymore. However, due to the special structure of the achievable scheme, we can modify the proof to show a similar result. 
\begin{figure}
\centering
\includegraphics[scale=.7]{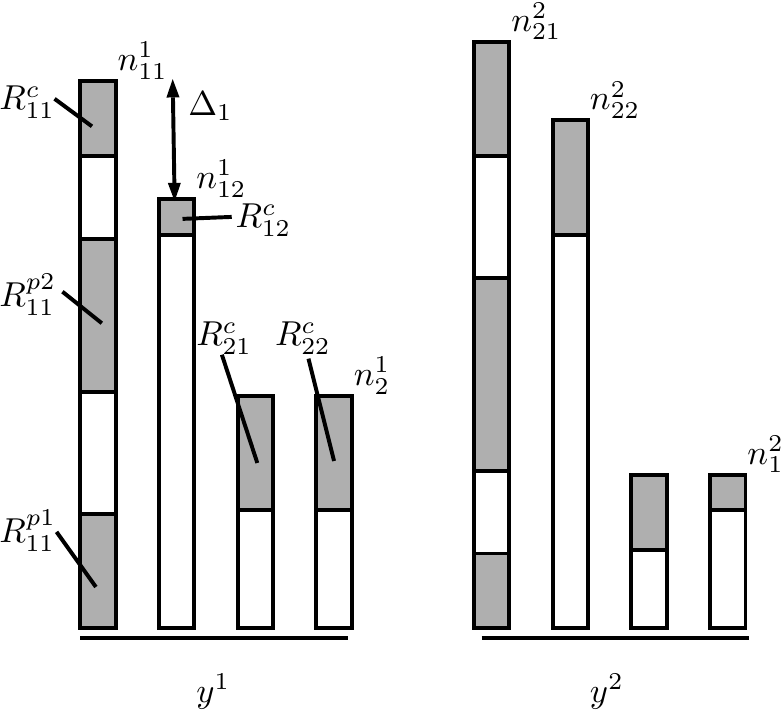}
\caption{Illustration of the rate allocations and the design rules for the achievable scheme in the range $\alpha < \tfrac{1}{2}$. At $y^1$, we have an example for the case I.2: $\tfrac{1}{2}n_i < \Delta < n_i$ and at $y^2$ we have an example for the case I.3: $\Delta \leq \tfrac{1}{2}n_i$. }
\label{rateallo}
\end{figure} 

 
\begin{lemma}
\label{ltd-weak-decoding-lemma}
Let $\delta\in (0,1]$ and $n^k_{k1}$, $n^k_{k2}$, $n^k_l \in \mathbb{N}$ such that $n^k_{k1}\geq n^k_{k2} > n^k_l$ and $n_2^1+n_1^2 < \min\{n_{11}^1,n_{21}^2\}$,  and let $ R^c_{k1}$, $R^{p1}_{k1}$, $R^{p2}_{k1}$, $R^c_{k2}$, $R^c_{21}  \in \mathbb{N}$, with $k,l \in \{1,2\}, k\neq l$ satisfy,

\begin{IEEEeqnarray*}{rCl}
R^c_{k1}+R^{p1}_{k1}+R^{p2}_{k1}+R^c_{k2}+\max\{R^c_{l2},R^c_{l1}\}&\leq&  n^k_{k1}-\log\left(\tfrac{8}{\delta}\right)\\
R^{p1}_{k1}+R^{p2}_{k1}+R^c_{k2}+\max\{R^c_{l1},R^c_{l2}\}&\leq& n^k_{k2}\\
R^{p2}_{k1}+R^{p1}_{k1}+\max\{R^c_{l1},R^c_{l2}\}&\leq&  (n^k_{k1}-n^l_k)\\
R^{p1}_{k1} + \max\{R^c_{l1},R^c_{l2}\} &\leq&  n^k_l
\end{IEEEeqnarray*} 

Then, a bit allocation, chosen such that the conditions are satisfied, allows successful decoding at both receivers for all channel gains except for an outage set $B\subset (1,2]^{2 \times 3}$ of Lebesgue measure $\mu(B)\leq \delta$.

\end{lemma}
 \begin{IEEEproof}
 \label{groshev-det-weak}
The proof for Lemma \ref{general-ach-ltd-lem} can be found in \cite{Niesen-Ali} and the proof for Lemma \ref{ltd-weak-decoding-lemma} is in the same fashion but exploits the non-overlapping coding structure between $R_{k1}^{p1}, \max\{R^c_{l1},R^c_{l2}\}$ and $R_{11}^{p2}$ in the weak interference case, see Appendix~\ref{Proof of Lemma groshev-det-weak}. A similar method is used in Section \ref{Proof G-IMAC}, where the Gaussian equivalent of this modification is used.
 \end{IEEEproof}
 
The last two lemmas tell us that for all rates which obey the stated conditions, the spanned subspaces are independent except for a small set of measure $\mu(B)\leq \delta$. This means that there exists a unique solution and the signals can be decoded. Hence, any proposed scheme needs to be checked if it obeys these conditions and therefore allows for successful decoding. The conditions of the following schemes are checked in the appendix. It then remains to show that the proposed schemes achieve the rates in the theorem.

We have to choose different schemes for the cases
\begin{IEEEeqnarray*}{rCl}
\text{I}&:&\ \alpha \in [0,\tfrac{1}{2}],\ \text{II}:\ \alpha \in (\tfrac{1}{2},\tfrac{3}{5}),\ \text{III}:\ \alpha \in [\tfrac{3}{5},1]\\
\text{IV}&:&\ \alpha \in (1,\tfrac{3}{2}],\ \text{V}:\ \alpha \in (\tfrac{3}{2},\infty).
\end{IEEEeqnarray*}

Common and private signal parts are indicated with the superscript $c$ and $p$ respectively, such that $\mathbf{\bar{x}}=[\mathbf{\bar{x}}^c;\mathbf{\bar{x}}^p]$. The private parts of the signal can be used to communicate solely to the intended receiver, without affecting the other cell. We dedicate the $R_{ik}^c$ most significant bits, and the $R_{i1}^{p_1}$ least significant bits of $\mathbf{\bar{x}}_{i1}$ to carry information. For the weak interference cases $\alpha<\tfrac{1}{2}$, the private part of $\mathbf{\bar{x}}$ has another bit-level allocation. There we dedicate the $R_{i1}^{p_2}$ most significant bits of the private part $\mathbf{\bar{x}}^p$ to carry information, see Figure~\ref{rateallo}. We now have to choose the scheme, and therefore how many bit-levels we give to each of the allocation rates, which we just introduced. We start with the weak interference case.

{\bf Case I: ($0 \leq \alpha \leq \tfrac{1}{2}$):} 
We use the fact, that the model can be split into two sub-models, similar to the LD-Model (see Figure~\ref{system_model_split}).
This means that scheme differences for the bit-levels above the interference-level of one cell, do not influence the other cell. Hence, we can consider them separately and without loss of generality restrict the case analysis to symmetric cases. However, this is only possible for the weak interference model, which is why we consider the symmetrical model for higher interference regimes.
For $k,l\in\{1,2\}$ and $k\neq l$ we set 
\begin{IEEEeqnarray*}{rCl}
R^c_{k1}&:=&\left\lfloor\tfrac{1}{2}n_k^l\right\rfloor, R^c_{k2}:=\min\{\left\lceil\tfrac{1}{2}n_k^l\right\rceil,(n_{k2}^k-(n_{k1}^k-n_k^l))^+\}\\
R^{p1}_{k1}&:=&\left\lfloor\tfrac{1}{2}n_l^k\right\rfloor, R^{p2}_{k1}:=n_{k1}^k-n_k^l-n_l^k.
\end{IEEEeqnarray*}

Multiuser gain in the IMAC model is dependent of the strength of the second user in each cell. This is also true in the weak interference case, where we differentiate between three sub-cases (of nine in total), dependent on $n_{k2}^k$.

Case I.1:\ 
The first sub-case is when $n_{k2}^k \leq n_{k1}^k-n_k^l$. In this case, the second user is useless for the channel, since the private rate $R_{i1}^{p_2}$ of user 1 can serve the same purpose. There is no multi-user gain in this regime and the achievable rate is limited to the IC sum rate. 

\begin{IEEEeqnarray*}{rCl}
R_\Sigma &=&R^c_{11}+R^{p1}_{11}+R^{p2}_{11}+R^c_{12}+R^c_{21}+R^{p1}_{21}+R^{p2}_{21}+R^c_{22}\\
&=& 2\left\lfloor\tfrac{1}{2}n_1^2\right\rfloor +2\left\lfloor\tfrac{1}{2}n_2^1\right\rfloor +n_{11}^1+n_{21}^2-2n_1^2-2n_2^1\\
&\geq & (n_{11}^1-n_1^2)+(n_{21}^2-n_2^1)-4.
\end{IEEEeqnarray*}
We remark that the achievable rate for one cell and this scheme is $(n_{k1}^k-\tfrac{1}{2}n_k^l-\tfrac{1}{2}n_l^k)$. This might be against the intuition from the IC model, where it is $((n_{k1}^k-n_l^k)$ resulting from a scheme similar to treating interference as noise. However, considering the sum rate of both cells in our scheme, we get back to the IC-rate. The following cases, where the second user of one or both cells is stronger, results in an {\it additional} rate part on top of the IC sum-rate. In the symmetrical case, we see that the conditions lead to a minimum of $R_{\text{ach},4}$ in Lemma \ref{ltd-upBound1}, which can be reached by the allocation.

Case I.2:\ 
The second sub-case is when $n_{k1}^k-n_k^l < n_{k2}^k \leq n_{k1}^k-\tfrac{1}{2}n_k^l$. In this range, the upper part of the second user becomes useful as the top bit-level reaches above the part $R_{i1}^{p_2}$, and makes bit-level alignment possible. This part has a rising multiuser gain and is no longer limited to the IC sum rate. 
\begin{IEEEeqnarray*}{rCl}
R_\Sigma &=&R^c_{11}+R^{p1}_{11}+R^{p2}_{11}+R^c_{12}+R^c_{21}+R^{p1}_{21}+R^{p2}_{21}+R^c_{22}\\
&=& 2\left\lfloor\tfrac{1}{2}n_1^2\right\rfloor +2\left\lfloor\tfrac{1}{2}n_2^1\right\rfloor +n_{11}^1+n_{21}^2-2n_1^2-2n_2^1\\
&&+\:(n_{12}^1-(n_{11}^1-n_1^2))+(n_{22}^2-(n_{21}^2-n_2^1))\\
&\geq & n_{22}^2+n_{12}^1-4
\end{IEEEeqnarray*}
The $n_{k2}^k-(n_{k1}^k-n_k^l)$ terms represent the additional rate of the second user and thus the multiuser gain. We achieve the active bound $R_{ach,4}$ in the symmetrical setting.

Case I.3:\  
The last sub-case is for $n_{k1}^k-\tfrac{1}{2}n_k^l < n_{k2}^k \leq n_{k1}^k$, here the second user can be fully utilized to provide one half of the interference strength at the opposite cell. The multiuser gain becomes half of the interference and the achievable sum rate reaches the main upper bound. 
\begin{IEEEeqnarray*}{rCl}
R_\Sigma &=&R^c_{11}+R^{p1}_{11}+R^{p2}_{11}+R^c_{12}+R^c_{21}+R^{p1}_{21}+R^{p2}_{21}+R^c_{22}\\
&=& 2\left\lfloor\tfrac{1}{2}n_1^2\right\rfloor +2\left\lfloor\tfrac{1}{2}n_2^1\right\rfloor +n_{11}^1+n_{21}^2-2n_1^2-2n_2^1\\
&&+\: \tfrac{1}{2}(\left\lceil n_1^2 \right\rceil + \left\lceil n_2^1\right\rceil )\\
&=& n_{11}^1+n_{21}^2-\left\lceil\tfrac{1}{2}n_1^2\right\rceil-\left\lceil\tfrac{1}{2}n_2^1\right\rceil\\
&\geq & (n_{11}^1-\tfrac{1}{2}n_1^2)+(n_{21}^2-\tfrac{1}{2}n_2^1)-2.
\end{IEEEeqnarray*}
Here we see that the multiuser gain is $\tfrac{1}{2}n_k^l$ for cell $k$. Intuitively this means, that we can use the aligned part as additional rate. 

The combination of the cell rates (I.1:) $n_{11}^1-n_1^2$, (I.2:) $n_{k2}^k$, (I.3:) $n_{11}^1-\tfrac{1}{2}n_1^2$ leads to nine different cases. If a cell satisfies condition (I.1:) $n_{k2}^k \leq n_{k1}^k-n_k^l$, the $n_{k1}^k-n_k^l$ terms in the $\max$ of \eqref{weak-ltd-a}-\eqref{weak-ltd-c} are active. Furthermore the $\max$ terms are smaller than $n_{k1}^k-n_k^l$ and the bound is reached. For condition (I.2:) $n_{k1}^k-n_k^l < n_{k2}^k \leq n_{k1}^k-\tfrac{1}{2}n_k^l$ yields activation of $n_{k2}^k$ inside the $\max$ terms, and on the other hand, $n_{k2}^k$ is weaker than $n_{k1}^k-\tfrac{1}{2}n_k^l$ proving the remaining bounds. The decoding conditions of lemma~\ref{ltd-weak-decoding-lemma} are checked in the appendix. Considering a bit-gap of at most $2\log(16/\delta)$ from the decoding conditions and at most $4$-bit from the use of fractional terms, results in a total gap of at most $2\log(c/\delta)$ with $c=64$. This proves lemma~\ref{ltd-upBound2}. $\hfill\IEEEQEDclosed$

We continue with the proof for lemma~\ref{ltd-upBound1}. From now on we confine the analysis to the symmetrical case, meaning that $n^1_{11}=n^2_{21}:=n_1$, $n^1_{12}=n^2_{22}:=n_2$, $n_1^2=n_2^1=n_i$. Sum rates for the weak interference case of the symmetric model are already shown as part of lemma~\ref{ltd-upBound2}, we therefore go on with the remaining regimes.

{\bf Case II ($\tfrac{1}{2} <\alpha < \tfrac{3}{5}$):}
\begin{figure}
\centering
\includegraphics[scale=1.3]{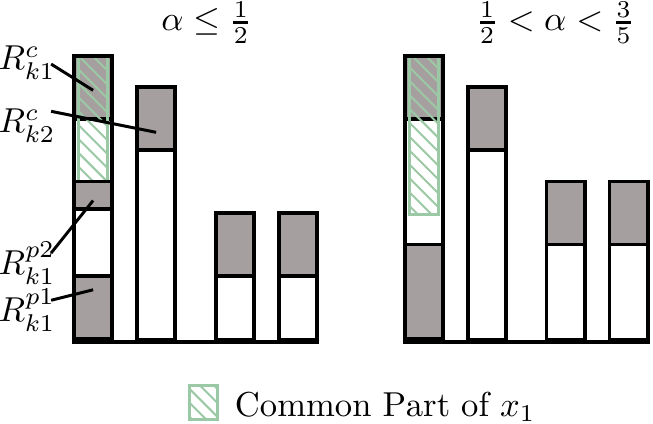}
\caption{Illustration of the rate allocations and the design rules for the achievable scheme in the range $\tfrac{1}{2} <\alpha < \tfrac{3}{5}$ for an exemplary cell with $\alpha=\tfrac{4}{7}$. With a comparison to the scheme of the previous regime to illustrate the problem with the decoding bound on $n_1$.}
\label{rateallo_3_5}
\end{figure}
For $k\in\{1,2\}$ we set
\begin{IEEEeqnarray*}{rCl}
R^c_{k1}&:=&\left\lfloor\tfrac{1}{2}(n_1-n_i)\right\rfloor,\ R^c_{k2}:=\min\{\left\lfloor\tfrac{1}{2}(n_1-n_i)\right\rfloor,(n_2-n_i)\}\\
R^{p1}_{k1}&:=&n_i-\left\lfloor\tfrac{1}{2}(n_1-n_i)\right\rfloor.
\end{IEEEeqnarray*}
One can see the reason for this rate allocation scheme intuitively considering the interference at both cells. First of all, at $\alpha > \tfrac{1}{2}$, the interference is strong enough to ''reach'' into the common part of the signal $\mathbf{\bar{x}}_{k1}$ in cell $k$ (see Fig.~\ref{rateallo_3_5}). Therefore, the additional private part bit allocation $R_{k1}^{p2}$ of the weak interference case is zero and we can use the decoding lemma~\ref{general-ach-ltd-lem} from now on. Note that lemma~\ref{general-ach-ltd-lem}, and therefore a bigger gap, is needed because of overlapping signal parts. Moreover, as a result of this new regime, the signal $\mathbf{\bar{x}}_{k2}$ in each cell $k$ can support multiuser gain as long as $n_2>n_i$. The previous rate allocation scheme ($\left\lfloor\tfrac{1}{2}n_i\right\rfloor$ for common parts) fails to satisfy the first equation of lemma~\ref{general-ach-ltd-lem}, since there are less than $n_i$ interference unaffected bit-levels for both users in a cell. The new allocation needs to fit into $(n_1-n_i)$ bit-levels and an obvious choice is $\tfrac{1}{2}(n_1-n_i)$. This way we can use the whole bit-levels available and at the same time maximize the alignment at the other cell. The private part allocates the available rest, satisfying the third equation of lemma~\ref{general-ach-ltd-lem}. See Figure \ref{rateallo_3_5} for an illustration. As in the weak interference case, we need to differentiate between sub-cases, depending on the strength of the signals $\mathbf{\bar{x}}_{k2}$.

Case II.1:\
The first sub-case is when $n_2\geq n_i+\left\lfloor\tfrac{1}{2}(n_1-n_i)\right\rfloor$.
Therefore $\left\lfloor\tfrac{1}{2}(n_1-n_i)\right\rfloor\leq (n_2-n_i)$ and $R^c_{k2}:=\left\lfloor\tfrac{1}{2}(n_1-n_i)\right\rfloor$. The second direct signal has enough bit-levels to support the full multiuser gain. The sum-rate becomes
\begin{IEEEeqnarray*}{rCl}
R_\Sigma= 4\left\lfloor\tfrac{1}{2}(n_1-n_i)\right\rfloor+2(n_i-\left\lfloor\tfrac{1}{2}(n_1-n_i)\right\rfloor)\geq n_1+n_i-2.
\end{IEEEeqnarray*}
Regarding Lemma~\ref{ltd-upBound1}, the sub-case implies that $n_2\geq n_i$ and it therefore follows from $n_i>(n_1-n_i)$, that $n_2>(n_1-n_i)$. Moreover, one can see from $n_2\geq n_i+\left\lfloor\tfrac{1}{2}(n_1-n_i)\right\rfloor$ and $\alpha < \tfrac{3}{5}$, that $R_{\text{ach},1}$ gets activated and is the minimum of all bounds.

Case II.2:\ On the contrary, for $n_2< n_i+\left\lfloor\tfrac{1}{2}(n_1-n_i)\right\rfloor$, $R^c_{k2}:=(n_2-n_i)$ because the second direct signal cannot support the full previous rate allocation without violating the second decoding condition of lemma \ref{general-ach-ltd-lem}. The sum-rate is 
\begin{IEEEeqnarray*}{rCl}
R_\Sigma=2(n_2-n_i)+2n_i=2n_2.
\end{IEEEeqnarray*}
And for $n_2<n_i$ the IMAC collapses to the IC-model and yields the known $2n_i$ as sum rate. Note that $R_{ach,4}$ gets activated, and is also achieved, in the last two cases.

{\bf Case III ($\tfrac{3}{5} \leq \alpha \leq 1$):} 
\begin{figure}
\centering
\includegraphics[scale=1]{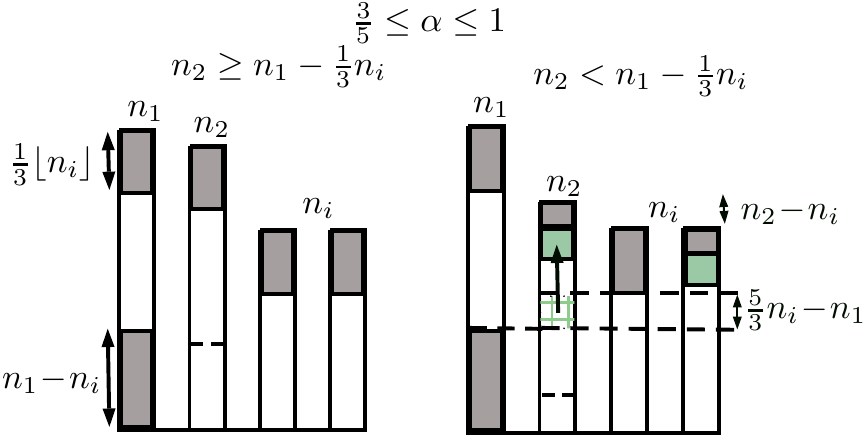}
\caption{Illustration of the rate allocations for the achievable scheme in the range $\tfrac{3}{5} \leq \alpha \leq 1$ for an exemplary cell with $\alpha=\tfrac{2}{3}$. We show the two achievable schemes for $n_2\geq n_1-\tfrac{1}{3}n_i$ and $n_2< n_1-\tfrac{1}{3}n_i$.}
\label{rateallo_3_1}
\end{figure} 
For case 3, we have the special situation, that the $\alpha$-range needs to be subdivided to account for different sub-ranges. Still, due to the full multi-user gain sum-rate being the same over the whole range $\tfrac{3}{5} \leq \alpha \leq 1$ justifies analysing those as part of one case, see fig.~\ref{Case3_Illustration}.

\begin{figure}
\centering
\includegraphics[scale=.38]{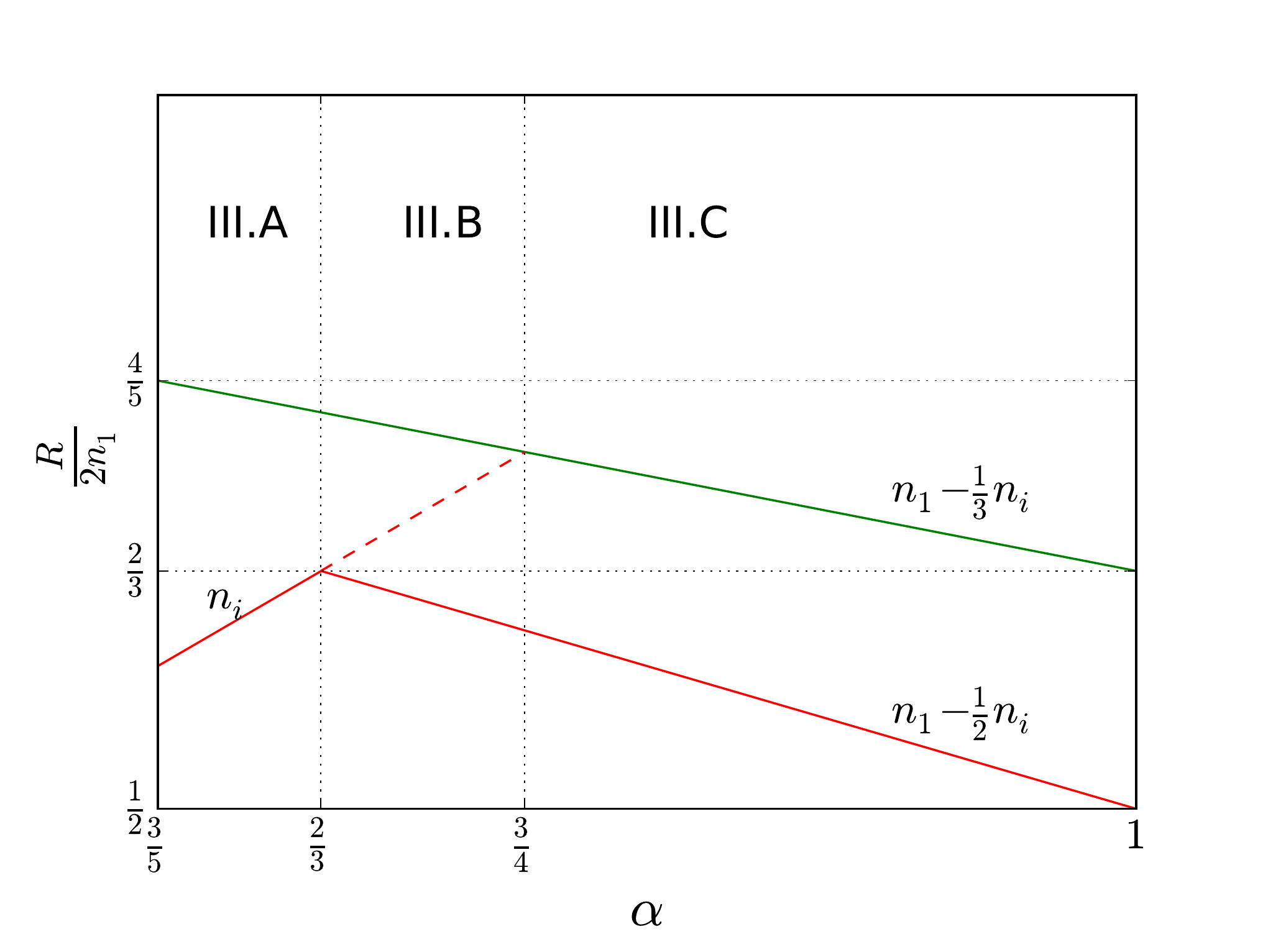}
\caption{Illustration of the full multi-user gain sum-rate (green) and the IC sum-rate (red) in the three $\alpha$ sub-ranges of case III. For varying strength of the weaker users signal $\mathbf{\bar{x}}_{k2}$ in each cell $k$, the achieved sum-rate can lie in between both curves. 
}
\label{Case3_Illustration}
\end{figure} 

{\bf Case III.A ($\tfrac{3}{5} \leq \alpha \leq \tfrac{2}{3}$):} 
For $k\in\{1,2\}$ we set 
\begin{IEEEeqnarray*}{rCl}
R^c_{k1}&:=&\left\lfloor\tfrac{1}{3}n_i\right\rfloor,\\
R^c_{k2}&:=&\min\{\left\lfloor\tfrac{1}{3}n_i\right\rfloor,\left\lfloor((n_2-n_i)^+ +\tfrac{5}{3}n_i-n_1)\right\rfloor\},\\
R^{p1}_{k1}&:=&n_1-n_i.
\end{IEEEeqnarray*}
The change of the previous rate allocation is necessary because the common part drops in the range where we allocated the private part in the previous case. Therefore, the allocation of the private part needs to be changed accordingly. Due to this change, we have $n_i$ bit-levels to allocate two common part allocations plus the aligned interference of the other cell. Due to the symmetry, an obvious choice is to use $\tfrac{1}{3}n_i$ for every common part allocation to maximise bit-level usage without violating the constraints. This scheme ensures that condition 1 from lemma~\ref{general-ach-ltd-lem} holds. We need to make sure, that also condition 2 and 3 hold. Condition 3 holds independently of $n_2$ due to the scheme design. For condition 2 the user of $\mathbf{\bar{x}}_{k2}$ needs to be strong enough to support the allocation of $\tfrac{1}{3}n_i$ bit-levels and therefore $n_2\geq n_1-\tfrac{1}{3}n_i$, which is the condition for the sub-case III.A.1. For $n_1-\tfrac{1}{3}n_i > n_2 >n_i$ we reach the sub-case III.A.2, where the signal $\mathbf{\bar{x}}_{k2}$ can support an allocation of $(n_2-n_i)^+ +\tfrac{5}{3}n_i-n_1$ bit-levels. For $n_2\leq n_i$ the scheme cannot support multi-user gain any more, the LTD-IMAC falls back to the LTD-IC and one strategy is to leave the weak user of signal $\mathbf{\bar{x}}_{k2}$ silent and use IC techniques. We therefore have two sub-cases, where only for $n_2\geq n_1-\tfrac{1}{3}n_i$ (III.A.1) full multi-user gain can be achieved and at $n_2= n_1-\tfrac{1}{3}n_i$ is a transition to case III.A.2, which still yields multi-user gain as long as $n_2>n_i$. We remark that as long as $n_2$ is larger than the private part $n_1-n_i$, it can be used to achieve the same sum-rate as IC-techniques. For achieving $n_i$ in the range $n_2<n_i$, $R_{k1}^c$ would need a larger allocation, which will be used in case III.B.


{\bf Sum Rate:}
The sum rate for the case III.A.1 is
\begin{IEEEeqnarray*}{rCl}
R_\Sigma= 4\left\lfloor\tfrac{1}{3}n_i\right\rfloor+2(n_1-n_i)\geq 2n_1-\tfrac{2}{3}n_i-4,
\end{IEEEeqnarray*}
where $R_{ach,2}$ is active and for the case III.A.2 $(n_2>n_i)$ it is
\begin{IEEEeqnarray*}{rCl}
R_\Sigma&=&2(\tfrac{1}{3}n_i+(n_1-n_i)+((n_2-n_i)^+ \\
&&+\:\tfrac{5}{3}n_i-n_1)-2\geq 2n_2-2,
\end{IEEEeqnarray*}
where the rate term $R_{ach,4}$ is active.

{\bf Case III.B ($\tfrac{2}{3} < \alpha \leq \tfrac{3}{4}$):} For $k\in\{1,2\}$ we set 
\begin{IEEEeqnarray*}{rCl}
R^c_{k1}&:=&\left\lfloor\tfrac{1}{3}n_i\right\rfloor+ \min \{\left\lfloor (\tfrac{2}{3}n_i-n_2)^+\right\rfloor ,\\
&&\: \left\lfloor (n_1-\tfrac{4}{3}n_i)^++\tfrac{1}{2}(2n_i-n_2-n_1)^+\right\rfloor \},\\
R^c_{k2}&:=&\min\{\left\lfloor\tfrac{1}{3}n_i\right\rfloor,\left\lfloor((n_2-n_i)^+ \right. \\
&&+\:\left.(\tfrac{5}{3}n_i-n_1)^+)\right\rfloor , (n_2-(n_1-n_i))^+\},\\
R^{p1}_{k1}&:=&n_1-n_i.
\end{IEEEeqnarray*}

\begin{figure}
\centering
\includegraphics[scale=1.2]{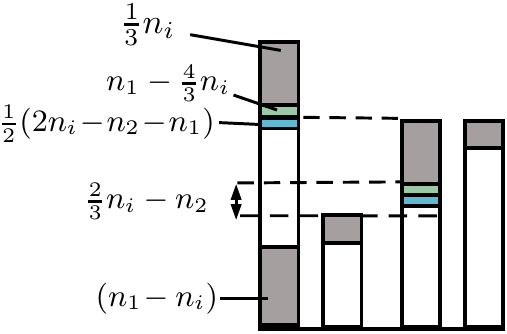}
\caption{Illustration of the scheme for case III.B.4. The goal is to allocate as many as possible bits of $\mathbf{\bar{x}}_{k1}$, to fill the free (allocatable) space $\tfrac{2}{3}n_i-n_2$ between the interference and $R_{k2}^c,R_{k1}^{p1}$. The obvious action is to allocate more bit-levels at $\mathbf{\bar{x}}_{k1}$, however, there are just $n_1-\frac{4}{3}n_i$ freely allocatable  (i.e. which do not overlap with other signals) bits. The remaining $2n_i-n_2-n_1$ bits overlap with the aligned interference and one can allocate only half of the available bit-levels in order to stay within decoding conditions.}
\label{Case3_Illustration3}
\end{figure}

As in case III.A, only for $n_2\geq n_1-\tfrac{1}{3}n_i$ (III.B.1) full multi-user gain can be achieved and at $n_2= n_1-\tfrac{1}{3}n_i$ is a transition to case III.B.2. However, due to the fact that $n_i$ gets bigger than $n_1-\tfrac{1}{2}n_i$ for $\alpha>\tfrac{2}{3}$ (see fig. \ref{Case3_Illustration}), the scheme still yields multi-user gain as long as $n_2> n_1-n_i$. This means that we still have multi-user gain as long as the signal bit vector $\mathbf{\bar{x}}_{k2}$ is larger than the private part of $\mathbf{\bar{x}}_{k1}$. We therefore have two additional sub-cases for $n_1-n_i<n_2<n_i$.
The first sub-case III.B.3 is for $\tfrac{2}{3}n_i\leq n_2<n_i$, where we use the same strategy as in III.B.2, but can only support a free allocation below $n_i$ at $\mathbf{\bar{x}}_{k2}$. This means that the part $(n_2-n_i)^+$ will be zero, resulting in a sum-rate of $2n_i$ instead of $2n_2$. The second sub-case III.B.4 is in the range $n_1-n_i<n_2<\tfrac{2}{3}n_i$. For this sub-case, the signal $\mathbf{\bar{x}}_{k2}$ cannot support the $\tfrac{5}{3}n_i-n_1$ bit-level allocation anymore. The allocation $n_2-(n_1-n_i)$ gets active, which uses a maximum of bit-levels without overlapping with the private part of $\mathbf{\bar{x}}_{k1}$. The low number of bit levels of $\mathbf{\bar{x}}_{k2}$ results in a gap of $(\tfrac{2}{3}n_i-n_2)^+$ bits between the interference signal allocation and the bit allocations of $R_{k2}^c$ and $R_{k1}^{p1}$. One therefore needs to allocate more bit levels at $\mathbf{\bar{x}}_{k1}$ to compensate. However, due to $\tfrac{3}{4}\geq\alpha >\tfrac{2}{3}$ we have that $n_1-n_i< 2n_i-n_1\leq \tfrac{2n_i}{3}$. Therefore, $n_2$ can fall into a range ($n_2<2n_i-n_1$), where the missing bit-levels $(\tfrac{2}{3}n_i-n_2)^+$ are more than the freely allocatable space above $n_i$, which is $n_1-\tfrac{4}{3}n_i$ bit-levels. Therefore, only $n_1-\tfrac{4}{3}n_i$ bit-levels can be allocated without penalty, and another $\tfrac{1}{2}(2n_i-n_2-n_1)^+$ for half of the remaining space (see fig.~\ref{Case3_Illustration3}).
Note that case III.B.3 and II.B.4 is in the range $n_2<n_i$. One therefore needs to switch the signal $\mathbf{\bar{x}}_{k2}$ with the interfering signal in the decoding lemma, which yields the third condition that $R^c_{k2}+R^{p1}_{k1}\leq  n_2$ and in condition 2, $n_2$ gets replaced with $n_i$. 


{\bf Sum Rate:}
The sum rate for the case III.B.1 is
\begin{IEEEeqnarray*}{rCl}
R_\Sigma= 4\left\lfloor\tfrac{1}{3}n_i\right\rfloor+2(n_1-n_i)\geq 2n_1-\tfrac{2}{3}n_i-4,
\end{IEEEeqnarray*}
where $R_{ach,2}$ is active. For the case III.B.2 and $n_2\geq n_i$ the sum rate is
\begin{IEEEeqnarray*}{rCl}
R_\Sigma &=&2(\tfrac{1}{3}n_i+(n_1-n_i)+((n_2-n_i)^+ +\tfrac{5}{3}n_i-n_1)\\
&&-\:2\geq 2n_2-2,
\end{IEEEeqnarray*}
and for $\tfrac{2}{3}n_i \leq n_2< n_i$ (case III.B.3) it is
\begin{IEEEeqnarray*}{rCl}
R_\Sigma=2(\tfrac{1}{3}n_i+(n_1-n_i)+(\tfrac{5}{3}n_i-n_1))-2\geq 2n_i-2,
\end{IEEEeqnarray*}
where the term $R_{ach,4}$ is active in the last two cases.

For case III.B.4 $(n_1-n_i) < n_2 < \tfrac{2}{3}n_i$,

as long as $n_2>2n_i-n_1$ we have that $R_{k1}^c=\left\lfloor\tfrac{1}{3}n_i\right\rfloor+ \left\lfloor (\tfrac{2}{3}n_i-n_2)^+\right\rfloor$, while $R_{k2}^c=(n_2-(n_1-n_i))^+$  and the sum rate is $2n_i-2$. If $n_2\leq 2n_i-n_1 $, the allocation of $R_{k1}^c$ changes and the sum rate becomes
\begin{IEEEeqnarray*}{rCl}
R_\Sigma&=&2(\tfrac{1}{3}n_i+\left\lfloor (n_1-\tfrac{4}{3}n_i)+\tfrac{1}{2}(2n_i-n_2-n_1)\right\rfloor\\
&&+\:(n_1-n_i)+(n_2-(n_1-n_i))^+-2\geq n_2+n_1-2,
\end{IEEEeqnarray*}
where the rate term $R_{ach,5}$ gets active.

{\bf Case III.C ($\tfrac{3}{4} \leq \alpha \leq 1$):} For $k\in\{1,2\}$ we set 
\begin{IEEEeqnarray*}{rCl}
R^c_{k1}&:=&\left\lfloor\tfrac{1}{3}n_i\right\rfloor+\left\lfloor\tfrac{1}{2}\left[\tfrac{1}{3}n_i-(n_2-(n_1-n_i))^+\right]^+\right\rfloor,\\
R^c_{k2}&:=&\min\{\left\lfloor\tfrac{1}{3}n_i\right\rfloor, (n_2-(n_1-n_i))^+\},\\
R^{p1}_{k1}&:=&n_1-n_i.
\end{IEEEeqnarray*}

\begin{figure}
\centering
\includegraphics[scale=1.2]{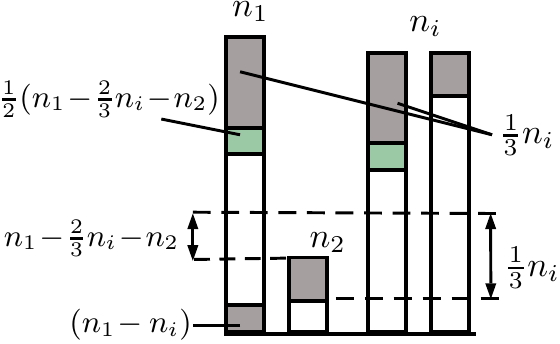}
\caption{Illustration of the scheme for the case III.C.2. In this case, $n_2$ is too weak to support the full $\tfrac{n_i}{3}$ bit-levels. It can support a maximum of $n_2-(n_1-n_i)$ bit-levels. The remaining free $n_1-\tfrac{2n_i}{3}-n_2$ bit-levels can be added at $\mathbf{\bar{x}}_{k1}$. However, since every additional bit generates additional interference (both green parts), we can only allocate half of the remaining bits to stay within decoding conditions.}
\label{Case3_Illustration2}
\end{figure}

For $\alpha\geq \tfrac{3}{4}$ we have $n_1-\tfrac{1}{3}n_i\leq n_i$ which means that $R_{k1}^c$ drops below interference level and therefore overlaps with $R_{l1}^c$ and $R_{l2}^c$, the aligned interference signals of the other cell. Additionally, more than  $\tfrac{1}{3}n_i$ free bit-levels are available between the private part $R^{p1}_{k1}$ and the aligned interference ($R_{l1}^c$ and $R_{l2}^c$). This means, that full multi-user gain can be supported even if $n_2< n_1-\tfrac{1}{3}n_i$ for as long as $n_2\geq n_1-\frac{2}{3}n_i$ (III.C.1). If $n_2< n_1-\frac{2}{3}n_i$, $\mathbf{\bar{x}}_{k2}$ is not strong enough to support full multi-user gain, and the allocation $R^c_{k2}:=n_2-(n_1-n_i)$ gets active, resulting in case III.C.2. As in the previous sub-case III.B, $\mathbf{\bar{x}}_{k1}$ gets active and can allocate half of the available difference of $(n_1-n_i)+\tfrac{1}{3}n_i-n_2$ bit-levels. Note that one cannot use all of the bit-levels between $n_2$ and $n_i-\tfrac{1}{3}n_i$ which are $\tfrac{2}{3}n_i-n_2$, because  it would violate decoding condition 1 for $\alpha>\tfrac{3}{4}$. As in the previous case, the decoding lemma needs to be adjusted for $n_2<n_i$. There is multi-user gain in this case, as long as $n_2> n_1-n_i$, below the LTD-IMAC falls back to the LTD-IC.

{\bf Sum Rate:}
The sum rate for the case III.C.1 is
\begin{IEEEeqnarray*}{rCl}
R_\Sigma= 4\left\lfloor\tfrac{1}{3}n_i\right\rfloor+2(n_1-n_i)\geq 2n_1-\tfrac{2}{3}n_i-4,
\end{IEEEeqnarray*}
where $R_{ach,2}$ is active. For the case III.C.2 
\begin{IEEEeqnarray*}{rCl}
R_\Sigma &=&2(\tfrac{1}{3}n_i+(n_1-n_i)+(n_2-(n_1-n_i))^+\\
&&+ \tfrac{1}{2}(\tfrac{1}{3}n_i-(n_2-(n_1-n_i))^+)-2\\
&\geq& n_2+n_1-2,
\end{IEEEeqnarray*} the scheme reaches $R_{ach,5}$.

{\bf Case IV ($1 < \alpha < \tfrac{3}{2}$):} 
For the range $\alpha>1$ the decoding lemma needs to be adjusted. A simple reassignment of the corresponding rates solves this problem. We can set
\begin{IEEEeqnarray*}{rCl}
R^c_{k1}&:=&\left\lfloor\tfrac{1}{3}n_i\right\rfloor+\left\lfloor\tfrac{1}{2}(\tfrac{1}{3}n_i-n_2)^+\right\rfloor, R^c_{k2}:=\min\{\left\lfloor\tfrac{1}{3}n_i\right\rfloor,n_2\}.
\end{IEEEeqnarray*}
Since the interfering signal is stronger than both users, the private part of $\mathbf{\bar{x}}_{k1}$ vanishes completely. We have to analyse two sub-cases:
Case IV.1:\ This sub-case is in the range $n_2\geq \left\lfloor\tfrac{1}{3}n_i\right\rfloor$. Here, $R^c_{k2}:=\left\lfloor\tfrac{1}{3}n_i\right\rfloor$, and the multi-user gain can be completely supported by the weaker users signal $\mathbf{\bar{x}}_{k2}$.
Case IV.2:\ For $n_2<\left\lfloor\tfrac{1}{3}n_i\right\rfloor$, the multi-user gain cannot be fully supported. We have $R^c_{k2}:=n_2$ and for $n_2=0$, the channel falls back to the IC sum-rate.

{\bf Sum Rate:}
The sum rate for the cases IV.1 and IV.2 is $R_\Sigma\geq \frac{4}{3}n_i-4$ and $R_\Sigma\geq n_i+n_2-2$, respectively. In case IV.1, $R_{ach,2}$ is active, and for case IV.2, $R_{ach,5}$ is active.

{\bf Case V ($\tfrac{3}{2} \leq \alpha < \infty$):} At $\alpha=\tfrac{3}{2}$, we have that $n_1=\tfrac{2}{3}n_i$ and therefore reached the maximum allocatable rate point which satisfies all decoding conditions with the previous rate allocation. Since $n_i$ keeps growing, the strategy is to use half of $n_1$ as allocation and we set  
\begin{IEEEeqnarray*}{rCl}
R^c_{k1}&:=&\max\{\left\lfloor\tfrac{1}{2}n_1\right\rfloor,\left\lfloor\tfrac{1}{2}n_i-\tfrac{1}{2}n_2\right\rfloor,n_1-n_2\},\\
R^c_{k2}&:=&\min\{\left\lfloor\tfrac{1}{2}n_1\right\rfloor,n_2\}.
\end{IEEEeqnarray*}
Once again, multi-user gain is dependent on the strength of $n_2$. For $n_2\geq\left\lfloor\tfrac{1}{2}n_1\right\rfloor$ (case V.I), the second user can fully support the gain but for $n_2<\left\lfloor\tfrac{1}{2}n_1\right\rfloor$ just $R^c_{k2}:=n_2$. For this case, we have $R^c_{k1}:=n_1-n_2$ as long as $2n_1<n_i+n_2$ and get the full sum-rate. Otherwise, full multi-user gain cannot be supported and we have $R^c_{k1}:=\left\lfloor\tfrac{1}{2}n_i-\tfrac{1}{2}n_2\right\rfloor$ and reach the case V.2.

{\bf Sum Rate:}
The sum rate for the cases V.1 and V.2 is $R_\Sigma\geq 2n_1-4$ and $R_\Sigma\geq n_i+n_2-2$, respectively. 

As in \cite{Niesen-Ali}, we have ignored the $\log(32/\delta)$ terms from the decoding lemma, and add a reduction in the overall sum-rate of $2\log(32/\delta)$. Together with a bit-gap of at most 4, we get the overall gap $2\log(128/\delta)$.
\end{proof}
\subsection{Upper Bounds}
\label{UB_LTDM}
\begin{theorem}[]

The sum rate for the symmetric LTD-IMAC system model can be bounded from above by 

\begin{equation}
R_\Sigma \leq \min\{D_1,D_2,D_3,D_4,D_5\}
\end{equation}
with
\begin{IEEEeqnarray}{rCl}
D_1&:=& 2\max((n_1-n_i)^+,n_i)+\min((n_1-n_i)^+,n_i),\IEEEyessubnumber\label{ub:D1}\\
D_2&:=& \tfrac{2}{3}(2\max(n_1,n_i)+(n_1-n_i)^+),\IEEEyessubnumber\label{ub:D2}\\
D_3&:=& 2n_1,\IEEEyessubnumber\label{ub:D3}\\
D_4&:=& \max(2n_2,2(n_1-n_i)^+,2n_i),\IEEEyessubnumber\label{ub:D4}\\
D_5&:=& \max(n_1,n_i)+\max(n_2,(n_1-n_i)^+)\IEEEyessubnumber\label{ub:D5}.
\end{IEEEeqnarray}

\label{upBound}
\end{theorem}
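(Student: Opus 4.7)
The plan is to prove each of the five inequalities $R_\Sigma\le D_i$ independently, by choosing appropriate genie side-information for each receiver and bounding the resulting conditional entropies via Fano's inequality. Because the LTD channel is deterministic, per-letter entropies reduce to counting effective bit-levels in $\mathbb{F}_2$, so the arguments parallel those used for the LDM but must account for the convolution structure induced by the lower-triangular matrix $\mathbf{H}$. In every case the starting identity is $N R_\Sigma \le \sum_{j=1}^{2} H(\mathbf{y}^j_{[1:N]} \mid \mathcal{G}_j) + N\varepsilon_N$, where $\mathcal{G}_j$ is the genie provided to cell $j$ and $\varepsilon_N \to 0$.

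The bound $D_3 = 2n_1$ is the trivial cut-set: no genie at all, $H(\mathbf{y}^j)\le n_1$ since $\mathbf{y}^j \in \mathbb{F}_2^{n_1}$, and summing over the two cells yields the claim. The bound $D_4 = 2\max(n_2,(n_1-n_i)^+,n_i)$ is established by proving three individual bounds, one for each term inside the max: $2n_i$ arises in the strong-interference regime by giving each receiver its own private levels as a genie, leaving only the $n_i$ interference-level bits to contribute entropy; $2(n_1-n_i)^+$ captures the purely interference-free top levels (valid in weak interference by bounding the residual after cancelling the interference portion); and $2n_2$ is the MAC-restriction imposed by the weaker direct link $\mathbf{\bar{x}}_{k2}$. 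Since each is a valid bound in its regime, their maximum is a universally valid upper bound.

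For $D_1$ I would adapt the classical Etkin--Tse--Wang half-interference argument to the MAC-MAC setting: provide receiver $1$ with the common interfering signal $\mathbf{H}_2^1(\mathbf{\bar{x}}^c_{21}\oplus \mathbf{\bar{x}}^c_{22})$ (and symmetrically for receiver $2$), so that after cancellation the residual is a clean two-user MAC whose entropy is counted by the effective bit-levels $\max((n_1-n_i)^+,n_i)$, while the genie itself contributes $\min((n_1-n_i)^+,n_i)$; this recovers the $2\max+\min$ form and specializes in the weak regime to the bound of Theorem~\ref{ld-bound}. For $D_5$ the strategy is analogous but asymmetric, with one cell receiving a common-part genie and the other a private-part genie, producing the $\max(n_1,n_i)+\max(n_2,(n_1-n_i)^+)$ structure. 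For $D_2$, which carries the $\tfrac{2}{3}$ prefactor, a cyclic averaging argument in the spirit of the $k$-user IC and X-channel upper bounds is needed: write three variants of the sum-rate inequality with cyclically permuted genie assignments across the two cells (and a virtual third ``copy'' obtained by swapping common and private roles), add them, and divide by three; the $\tfrac{2}{3}$ emerges naturally from the averaging.

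The main obstacle will be $D_2$. The $\tfrac{2}{3}$ prefactor does not arise from any single genie split and demands a careful symmetrization whose validity is non-obvious in the IMAC because each cell carries two superposed users rather than one, so the cyclic permutation acts on a richer signal structure. Moreover, unlike the LDM where $\mathbf{H}$ is a pure shift, the LTD-matrix convolution must be handled so that the entropy bound on the residual after genie subtraction remains tight: the bits of the fine channel gains enter the residual through $\mathbf{H}$, and naive level-counting would overcount by a constant number of bit-levels. Bookkeeping which lower-triangular sub-blocks survive each cancellation, and verifying that the surviving blocks indeed have rank equal to their column count, is the delicate technical step; this part would extend the column-independence analysis of Lemma~\ref{ltd-weak-decoding-lemma} from the achievability side to the converse.
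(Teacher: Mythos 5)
Your sketch points in the right general direction (one genie/conditional-entropy argument per bound), but it omits the single device the paper relies on throughout, and it contains one concrete error. The central difficulty of the IMAC converse is that the interference each cell causes is a \emph{sum} of two independent signals, $\mathbf{H}_k^l(\mathbf{\bar{x}}^c_{k1}\oplus\mathbf{\bar{x}}^c_{k2})$. Conditioning $\mathbf{y}^l$ on both transmitters of cell $k$ therefore yields only a single negative term $-H(\mathbf{H}_k^l(\mathbf{\bar{x}}^c_{k1}\oplus\mathbf{\bar{x}}^c_{k2}))$, which dominates $\max\{H(\mathbf{\bar{x}}^c_{k1}),H(\mathbf{\bar{x}}^c_{k2})\}$ but not their sum; with your ETW-style genie you can cancel only one of the two common signals appearing at the other receiver. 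The paper's fix is to multiply the Fano inequality by $2$ (for $D_1$) or $3$ (for $D_2$) so that the negative term appears twice and can be split via
$-2H(\mathbf{H}(\mathbf{u}\oplus\mathbf{v}))\le -H(\mathbf{H}(\mathbf{u}\oplus\mathbf{v})\mid\mathbf{u})-H(\mathbf{H}(\mathbf{u}\oplus\mathbf{v})\mid\mathbf{v})=-H(\mathbf{u})-H(\mathbf{v})$,
using that the unit lower-triangular $\mathbf{H}$ is a bijection (which also disposes of your worry about constant-level overcounting from the convolution: $H(\mathbf{H}\mathbf{x})=H(\mathbf{x})$ exactly). The factor $\tfrac{2}{3}$ in $D_2$ then emerges from writing $3I_k=2I_k+I_k$, bounding the extra copy by $H(\mathbf{\bar{x}}_{k1}^n\oplus\mathbf{\bar{x}}_{k2}^n)$, and cancelling $H(\mathbf{\bar{x}}_{k1}^n\oplus\mathbf{\bar{x}}_{k2}^n)-2H(\mathbf{H}_k^l((\mathbf{\bar{x}}^c_{k1})^n\oplus(\mathbf{\bar{x}}^c_{k2})^n))\le n(n_1-n_i)^+$ by the same splitting. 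No cyclic permutation over a ``virtual third copy'' is involved, and your proposed symmetrization is left unsubstantiated at precisely the point you yourself flag as the main obstacle.

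Separately, your argument for $D_3$ fails exactly where that bound is active: $D_3=2n_1$ is the binding constraint only for $\alpha\ge\tfrac{3}{2}$, i.e.\ $n_i>n_1$, and there the received vector occupies $\max(n_1,n_i)=n_i$ levels, so ``no genie, $H(\mathbf{y}^j)\le n_1$'' actually gives only $2\max(n_1,n_i)$. One must hand each receiver the other cell's inputs and use $H((\mathbf{y}^j)^n\mid\mathbf{\bar{x}}_{l1}^n,\mathbf{\bar{x}}_{l2}^n)\le nn_1$, since only after the interference is removed do the remaining direct signals fit into $n_1$ levels. Your regime-by-regime plan for $D_4$ is logically admissible, but asserting that ``$2n_2$ is the MAC-restriction of the weaker link'' is not a derivation; the paper obtains $D_4$ in one chain by lower-bounding the subtracted interference entropy by that of a single constituent, $H(\mathbf{H}_k^l((\mathbf{\bar{x}}^c_{k1})^n\oplus(\mathbf{\bar{x}}^c_{k2})^n))\ge H(\mathbf{H}_k^l(\mathbf{\bar{x}}^c_{k1})^n)$, and counting what survives. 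In short, the proposal identifies the right targets but is missing the doubling/tripling-and-splitting mechanism that makes the cancellations work for two users per cell, and without it the proofs of $D_1$ and $D_2$ (and the stated proof of $D_3$) do not close.
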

\begin{proof}

Considering Fano's inequality and the Data Processing inequality one can establish the following bounds:
\begin{IEEEeqnarray*}{rCl}
\IEEEeqnarraymulticol{3}{l}{
n(R^1_{11}+R^1_{21}+R^2_{21}+R^2_{22})}\\
& \leq & I(\mathbf{\bar{x}}_{11}^n,\mathbf{\bar{x}}_{12}^n;(\mathbf{y}^1)^n)+I(\mathbf{\bar{x}}_{21}^n,\mathbf{\bar{x}}_{22}^n; (\mathbf{y}^2)^n) +n(\epsilon_{n,12}+\epsilon_{n,34})\\
& = & H((\mathbf{y}^1)^n)-H((\mathbf{y}^1)^n| \mathbf{\bar{x}}_{11}^n,\mathbf{\bar{x}}_{12}^n)+H((\mathbf{y}^2)^n)\\
&&-\: H((\mathbf{y}^2)^n|\mathbf{\bar{x}}_{21}^n,\mathbf{\bar{x}}_{22}^n)+n(\epsilon_{n,12}+\epsilon_{n,34}).
\end{IEEEeqnarray*}

We denote $(\mathbf{y}^j)^{n,\uparrow}:=(\mathbf{y}^j)^{n}_{[\eta+1:n_1]}$ and $(\mathbf{y}^j)^{n,\downarrow}:=(\mathbf{y}^j)^{n}_{[1:\eta]}$ with $\eta:=\max((n_1-n_i)^+,n_i)$ and show that 
\begin{IEEEeqnarray*}{rCl}
\IEEEeqnarraymulticol{3}{l}{
2 n(R_\Sigma - \epsilon_{n,\Sigma})}\\
 &\leq& 2 H((\mathbf{y}^1)^n)-2 H ((\mathbf{y}^1)^n| \mathbf{\bar{x}}_{11}^n,\mathbf{\bar{x}}_{12}^n) +2 H(\mathbf{y}_{12}^n)\\
 &&-\: 2H((\mathbf{y}^2)^n|\mathbf{\bar{x}}_{21}^n,\mathbf{\bar{x}}_{22}^n)\\
& \leq & 2 H((\mathbf{y}^1)^{n,\downarrow})+2 H((\mathbf{y}^1)^{n,\uparrow})-2H(\mathbf{H}_2^1 ((\mathbf{\bar{x}}_{21}^c)^n\oplus (\mathbf{\bar{x}}_{22}^c)^n))\\
&&+\: 2 H((\mathbf{y}^2)^{n,\downarrow})+2 H((\mathbf{y}^2)^{n,\uparrow})-2H(\mathbf{H}_1^2((\mathbf{\bar{x}}_{11}^c)^n\oplus (\mathbf{\bar{x}}_{12}^c)^n))\\ 
& \overset{(a)}{\leq} & 4n\max((n_1-n_i)^+,n_i) + H((\mathbf{y}^1)^{n,\uparrow})+ H((\mathbf{y}^2)^{n,\uparrow})\\ 
& \leq & 4n\max((n_1-n_i)^+,n_i) + 2n\min((n_1-n_i)^+,n_i)
\end{IEEEeqnarray*}
where (a) follows from $H((\mathbf{y}^j)^{n,\downarrow})\leq n\eta$ and \begin{IEEEeqnarray*}{rCl}
\IEEEeqnarraymulticol{3}{l}{
H((\mathbf{y}^{1})^{n,\uparrow})-2H(\mathbf{H}_1^2((\mathbf{\bar{x}}^c_{11})^n \oplus (\mathbf{\bar{x}}^c_{12})^n))}\\
& \leq & H((\mathbf{y}^{1})^{n,\uparrow}) -H(\mathbf{H}_1^2((\mathbf{\bar{x}}^c_{11})^n \oplus (\mathbf{\bar{x}}^c_{12})^n)|(\mathbf{\bar{x}}^c_{11})^n)\\
&&-\:H(\mathbf{H}_1^2((\mathbf{\bar{x}}^c_{11})^n \oplus (\mathbf{\bar{x}}^c_{12})^n)|(\mathbf{\bar{x}}^c_{12})^n) \\
& = & H((\mathbf{y}^{1})^{n,\uparrow})  -H((\mathbf{\bar{x}}^c_{11})^n)-H((\mathbf{\bar{x}}^c_{12})^n)\\
& \leq & 0.
\end{IEEEeqnarray*}
Here we used in the second last step, that the $\mathbf{H}$ matrices are lower uni-triangular matrices. This means that they are invertible, hence are bijective mappings. By symmetry is also holds that 
\begin{equation}
H((\mathbf{y}^{2})^{n,\uparrow})-2H(\mathbf{H}_2^1((\mathbf{\bar{x}}^c_{21})^n \oplus (\mathbf{\bar{x}}^c_{22})^n))\leq 0.
\end{equation}
Dividing both sides by $2n$ and taking $n\rightarrow \infty$ yields the upper bound \eqref{ub:D1}.

We now establish the bound \eqref{ub:D2}. Therefore we show that 
\begin{IEEEeqnarray*}{rCl}
\IEEEeqnarraymulticol{3}{l}{
3n(R^1_{11}+R^1_{21}+R^2_{21}+R^2_{22}-\epsilon)}\\
& \leq & 3I(\mathbf{\bar{x}}_{11}^n,\mathbf{\bar{x}}_{12}^n;(\mathbf{y}^1)^n)+3I(\mathbf{\bar{x}}_{21}^n,\mathbf{\bar{x}}_{22}^n; (\mathbf{y}^2)^n)\\
& = & 2H((\mathbf{y}^1)^n)-2H((\mathbf{y}^1)^n| \mathbf{\bar{x}}_{11}^n,\mathbf{\bar{x}}_{12}^n)+2H((\mathbf{y}^2)^n)\\
&&-\: 2H((\mathbf{y}^2)^n|\mathbf{\bar{x}}_{21}^n,\mathbf{\bar{x}}_{22}^n)+ H(\mathbf{\bar{x}}_{11}^n\oplus \mathbf{\bar{x}}_{12}^n)+H(\mathbf{\bar{x}}_{21}^n\oplus \mathbf{\bar{x}}_{22}^n).\\
& \leq & 2H((\mathbf{y}^1)^n)+2H((\mathbf{y}^2)^n)+2n(n_1-n_i)^+
\end{IEEEeqnarray*} 
with \begin{IEEEeqnarray*}{rCl}
\IEEEeqnarraymulticol{3}{l}{
H(\mathbf{\bar{x}}_{i1}^n\oplus \mathbf{\bar{x}}_{i2}^n)-2H(\mathbf{H}_i^{i^c}((\mathbf{\bar{x}}^c_{i1})^n \oplus (\mathbf{\bar{x}}^c_{i2})^n))}\\
& \leq & H(\mathbf{\bar{x}}_{i1}^n\oplus \mathbf{\bar{x}}_{i2}^n) -H(\mathbf{H}_i^{i^c}((\mathbf{\bar{x}}^c_{i1})^n \oplus (\mathbf{\bar{x}}^c_{i2})^n)|(\mathbf{\bar{x}}^c_{i1})^n)\\
&&-\:H(\mathbf{H}_i^{i^c}((\mathbf{\bar{x}}^c_{i1})^n \oplus (\mathbf{\bar{x}}^c_{i2})^n)|(\mathbf{\bar{x}}^c_{i2})^n) \\
& = & H(\mathbf{\bar{x}}_{i1}^n\oplus \mathbf{\bar{x}}_{i2}^n) -H((\mathbf{\bar{x}}^c_{i1})^n)-H((\mathbf{\bar{x}}^c_{i2})^n)\\
&\leq & H((\mathbf{\bar{x}}_{i1}^n\oplus \mathbf{\bar{x}}_{i2}^n)_{[1:(n_1-n_i)^+]})\\
& \leq & n(n_1-n_i)^+
\end{IEEEeqnarray*}
for $i\in \{1,2\}$ in all\footnote{Note that the index $i$ at the signals identifies the MAC-cell and has nothing to do with the $i$ at $n_i$ which stands for {\it interference.}} $\mathbf{\bar{x}}_{i1}^n$ and $\mathbf{\bar{x}}_{i2}^n$ and therefore
\begin{IEEEeqnarray*}{rCl}
\IEEEeqnarraymulticol{3}{l}{
n(R_\Sigma-\epsilon)}\\
&\leq & \tfrac{2n}{3}(2\max(n_1,n_i)+(n_1-n_i)^+),
\end{IEEEeqnarray*} 
which shows the upper bound \eqref{ub:D2} for $n\rightarrow \infty$.

We now establish the bound \eqref{ub:D3}.
We can show that
\begin{IEEEeqnarray*}{rCl}
\IEEEeqnarraymulticol{3}{l}{
n(R^1_{11}+R^1_{21}+R^2_{21}+R^2_{22}-\epsilon)}\\
& \leq & I(\mathbf{\bar{x}}_{11}^n,\mathbf{\bar{x}}_{12}^n;(\mathbf{y}^1)^n,\mathbf{\bar{x}}_{21}^n,\mathbf{\bar{x}}_{22}^n)+I(\mathbf{\bar{x}}_{21}^n,\mathbf{\bar{x}}_{22}^n; (\mathbf{y}^2)^n,\mathbf{\bar{x}}_{11}^n,\mathbf{\bar{x}}_{12}^n) \\
& = & I(\mathbf{\bar{x}}_{11}^n,\mathbf{\bar{x}}_{12}^n;(\mathbf{y}^1)^n|\mathbf{\bar{x}}_{21}^n,\mathbf{\bar{x}}_{22}^n)+I(\mathbf{\bar{x}}_{21}^n,\mathbf{\bar{x}}_{22}^n; (\mathbf{y}^2)^n|\mathbf{\bar{x}}_{11}^n,\mathbf{\bar{x}}_{12}^n) \\
& = & H((\mathbf{y}^1)^n|\mathbf{\bar{x}}_{21}^n,\mathbf{\bar{x}}_{22}^n)+H((\mathbf{y}^2)^n|\mathbf{\bar{x}}_{11}^n,\mathbf{\bar{x}}_{12}^n)\\
& \leq & 2nn_1.
\end{IEEEeqnarray*}
Dividing by $n$ shows the upper bound \eqref{ub:D3} for $n\rightarrow \infty$.

Now, we show the upper bound \eqref{ub:D4}.
One can show that:
\begin{IEEEeqnarray*}{rCl}
\IEEEeqnarraymulticol{3}{l}{
n(R^1_{11}+R^1_{21}+R^2_{21}+R^2_{22}-\epsilon_{n,\Sigma})}\\
& \leq & H((\mathbf{y}^1)^n)-H((\mathbf{y}^1)^n| \mathbf{\bar{x}}_{11}^n,\mathbf{\bar{x}}_{12}^n)\\
&&+\: H((\mathbf{y}^2)^n)-H((\mathbf{y}^2)^n|\mathbf{\bar{x}}_{21}^n,\mathbf{\bar{x}}_{22}^n)\\
& = & H((\mathbf{y}^1)^n)-H(\mathbf{H}_2^1 ((\mathbf{\bar{x}}_{21}^c)^n \oplus (\mathbf{\bar{x}}_{22}^c)^n))\\
&&+\: H((\mathbf{y}^2)^n)-H(\mathbf{H}_1^2((\mathbf{\bar{x}}_{11}^c)^n\oplus (\mathbf{\bar{x}}_{12}^c)^n))\\
& \leq & H((\mathbf{y}^1)^n)-H(\mathbf{H}_2^1 (\mathbf{\bar{x}}_{21}^c)^n )\\
&&+\: H((\mathbf{y}^2)^n)-H(\mathbf{H}_1^2(\mathbf{\bar{x}}_{11}^c)^n)\\
& \leq & 2n\max(n_2,(n_1-n_i)^+,n_i)
\end{IEEEeqnarray*}
Dividing by $n$ and letting $n\rightarrow \infty$ shows the bound \eqref{ub:D4}.

We now establish the bound \eqref{ub:D5}.
We can show that
\begin{IEEEeqnarray*}{rCl}
\IEEEeqnarraymulticol{3}{l}{
 n(R_\Sigma- \epsilon_{n,\Sigma})}\\
 &\leq&  H((\mathbf{y}^1)^n)- H ((\mathbf{y}^1)^n| \mathbf{\bar{x}}_{11}^n,\mathbf{\bar{x}}_{12}^n) + H((\mathbf{y}^{2})^n)\\
 &&-\: H((\mathbf{y}^2)^n|\mathbf{\bar{x}}_{21}^n,\mathbf{\bar{x}}_{22}^n)\\
 &\leq&  H((\mathbf{y}^1)^n)- H ((\mathbf{y}^1)^n| \mathbf{\bar{x}}_{11}^n,\mathbf{\bar{x}}_{12}^n,\mathbf{\bar{x}}_{22}^n) + H((\mathbf{y}^{2})^n)\\
 &&-\: H((\mathbf{y}^2)^n|\mathbf{\bar{x}}_{21}^n,\mathbf{\bar{x}}_{22}^n)\\
  &\leq &  H((\mathbf{y}^1)^n)- H (\mathbf{\bar{x}}_{21}^n) + H((\mathbf{y}^{2})^n)\\
 &&-\: H(\mathbf{\bar{x}}_{11}^n\oplus \mathbf{\bar{x}}_{12}^n)\\
   &\leq&  H((\mathbf{y}^1)^n)+ n\max(n_2,(n_1-n_i)^+)\\
& \leq & n\max(n_1,n_i)+n\max(n_2,(n_1-n_i)^+).
\end{IEEEeqnarray*}
Dividing both sides by $n$ and taking $n\rightarrow \infty$ yields the desired upper bound. 

\end{proof}
\begin{corollary}[]
If the weaker user is strong enough to support full multi-user gain, the sum rate for the symmetric LTD-IMAC system model can be bounded from above by 
\begin{equation}
R_\Sigma \leq \begin{cases}
2(n_1-\tfrac{1}{2}n_i) & \text{for  } 0 \leq \alpha \leq \tfrac{1}{2}\\
2(\tfrac{1}{2}n_1+\frac{1}{2}n_i) & \text{for } \tfrac{1}{2} \leq \alpha \leq \tfrac{3}{5}\\
2(n_1-\tfrac{1}{3}n_i) & \text{for } \tfrac{3}{5} \leq \alpha \leq 1\\
\tfrac{4}{3}n_i & \text{for } 1 \leq \alpha \leq \tfrac{3}{2}\\
2n_1 & \text{for } \tfrac{3}{2} \leq \alpha \leq \infty.
\end{cases}
\end{equation}
\label{upBoundweak}
\end{corollary}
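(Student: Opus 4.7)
The plan is to obtain the corollary as a direct specialization of Theorem~\ref{upBound}. That theorem yields $R_\Sigma \leq \min\{D_1,D_2,D_3,D_4,D_5\}$ without any restriction on $n_2$, so the task reduces to (i) identifying the range of $n_2$ for which the ``full multi-user gain'' hypothesis rules out $D_4$ and $D_5$ as minimizers, and (ii) a case analysis over the five $\alpha$-regimes to decide which of $D_1,D_2,D_3$ is tightest and to simplify it.

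First I would translate ``weaker user strong enough to support full multi-user gain'' into the quantitative requirement that $n_2 \geq \max((n_1-n_i)^+, n_i)$. Under this condition, both $2n_2$ and the $n_2$-terms inside the maxima of $D_4$ and $D_5$ are dominated by $D_1$ (and by $D_2,D_3$ where relevant): $D_4 = 2\max((n_1-n_i)^+,n_i) \leq D_1$, and $D_5 = \max(n_1,n_i)+\max(n_2,(n_1-n_i)^+) \geq D_1$ is never smaller than the active bound in any of the five sub-intervals, as a direct numerical check shows. Thus under the hypothesis $R_\Sigma \leq \min\{D_1,D_2,D_3\}$.

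Next comes the regime-by-regime simplification. For $0\leq\alpha\leq \tfrac12$ we have $n_1-n_i \geq n_i$, so $D_1 = 2(n_1-n_i)+n_i = 2n_1-n_i$, which is smaller than $D_2 = 2n_1-\tfrac23 n_i$ and $D_3=2n_1$. For $\tfrac12\leq\alpha\leq\tfrac35$ we have $n_i \geq n_1-n_i$ and $n_i \leq \tfrac35 n_1$, so $D_1 = 2n_i+(n_1-n_i) = n_1+n_i$, and the comparison $n_1+n_i \leq 2n_1-\tfrac23 n_i$ reduces precisely to $n_i\leq \tfrac35 n_1$, so $D_1$ is still active. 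For $\tfrac35\leq\alpha\leq 1$ the inequality flips and $D_2 = 2n_1-\tfrac23 n_i$ becomes the minimum. For $1\leq\alpha\leq\tfrac32$, we have $(n_1-n_i)^+=0$, so $D_1=2n_i$, $D_2=\tfrac43 n_i$, $D_3=2n_1$; since $n_i\leq \tfrac32 n_1$ in this range, $D_2\leq D_3\leq D_1$, giving $\tfrac43 n_i$. Finally for $\alpha\geq \tfrac32$, $D_3=2n_1\leq \tfrac43 n_i = D_2$ so $D_3$ wins. Collecting these five expressions yields the piecewise bound in the statement.

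The only delicate point is the accounting at the transition values $\alpha\in\{\tfrac12,\tfrac35,1,\tfrac32\}$: one must verify that the two candidate minima agree at the boundary, which is easy ($2n_1-n_i$ and $n_1+n_i$ coincide at $\alpha=\tfrac12$, etc.), and that the imposed ``full multi-user gain'' lower bound on $n_2$ is compatible with each regime. No new information-theoretic argument is required beyond Theorem~\ref{upBound}; the main work is the algebraic case split.
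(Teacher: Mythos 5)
Your proposal is correct and takes essentially the same route as the paper's one-line proof: specialize Theorem~\ref{upBound} and identify the active bound among $D_1,\dots,D_5$ in each $\alpha$-regime, and your regime-by-regime algebra ($D_1$ active for $\alpha\le\tfrac35$, $D_2$ for $\tfrac35\le\alpha\le\tfrac32$, $D_3$ for $\alpha\ge\tfrac32$) checks out. One small slip: when arguing that $D_4$ is not the minimizer you write $D_4\le D_1$, whereas the inequality you actually need is $D_4\ge D_1$, which requires the sharper per-regime full-gain thresholds from the achievability analysis (e.g.\ $n_2\ge n_1-\tfrac12 n_i$ for $\alpha\le\tfrac12$, not merely $n_2\ge\max((n_1-n_i)^+,n_i)$) --- but this is immaterial here, since the corollary only asserts an upper bound and $D_4,D_5$ may simply be dropped from the minimum.
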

\begin{proof}
Follows immediately from Theorem \ref{UB_LTDM} by investigation of the active bounds in the corresponding regimes.
\end{proof}

\begin{remark}
The upper bounds for the weak interference asymmetric cases can be found in \cite{Fritschek2015a}. Upper bounds for weak symmetric cases correspond to the specific achievable schemes. 
Since the model can be split in the weak interference regime, the symmetric upper bounds can be split and mixed as well and show the asymmetric bounds as well. 
\end{remark}


\section{Transfer from LTD-IMAC to G-IMAC}
\label{Proof G-IMAC}
\subsection{Achievability for the G-IMAC}
In this part, we prove the achievability for the Gaussian IMAC. We will show, that the lower-triangular scheme directly guides the constant-gap capacity achieving Gaussian scheme. The analysis is done in the same fashion as in \cite{Niesen-Ali}. We assume perfect knowledge of the channel gains $h^j_{ik}$. However, the same techniques as in \cite{Niesen-Ali} can be applied to show, that a $\max$ $n^j_{ik}$-bit quantisation of $h^j_{ik}$ is sufficient for the achievability. 
Remember that the input signals are constructed such that 
\begin{IEEEeqnarray}{rCl}
x_{11}&:=&h_{12}^2 u_{11}\IEEEyessubnumber\\
x_{12}&:=&h_{11}^2 u_{12}\IEEEyessubnumber\\
x_{21}&:=&h_{22}^1 u_{21}\IEEEyessubnumber\\
x_{22}&:=&h_{21}^1 u_{22}\IEEEyessubnumber.
\end{IEEEeqnarray}

Observe, that this results in \eqref{Gauss_Model_g}, where the new channel gains are $g_{ik}^j\in(1,4]$.
Due to this modulation, the first two bits of $u_{ik}$ are set to zero i.e.,
\begin{equation}
u_{ik}:=\sum_{j=3}^{n_{i1}} [u_{ik}]_j 2^{-j},
\end{equation}
where $[u_{ik}]_j\in \{0,1\}$ represents the bits of the binary expansion of $u_{ik}$. This ensures that $|u_{ik}|\leq \tfrac{1}{4}$ and therefore 
$|g^j_{ik}u_{ik}|\leq 1$.
Moreover, the $u_{ik}$-inputs are chosen such that the corresponding $[u_{ik}]_j$-bits obey the design criteria of the LTD scheme in \ref{LTD-achiev.}. In particular, the places where the binary vectors of the LTD model are forced to be zero, need to be zero in the Gaussian scheme as well.
Therefore, the $u_{i1}$-inputs will be also decomposed into common and private signal parts i.e.,

\begin{equation}
u_{i1}=u_{i1}^P+u_{i1}^C.
\end{equation}
Where the private part $u_{i1}^P:=u_{i1}^{P_1}+u_{i1}^{P_2}$ is decomposed again into two parts for the weak interference regime i.e., $\tfrac{n_i}{n_1}\leq \tfrac{1}{2}$.

In the following, we analyse the receiver one exemplary. By symmetry, all results for receiver one apply for receiver two as well.
The channel equation for receiver one is
\begin{IEEEeqnarray}{rCl}
y^1 &=& g_{11}^1 2^{n^1_{11}}u_{11}+g_{12}^1 2^{n_{12}^1} u_{12}\\
&&+\: g_{2}^1 2^{n_2^1} (u_{21}^C+  u_{22}^C) +(g_{2}^12^{n_2^1}u_{21}^P+ z^1).
\end{IEEEeqnarray}
Observe that the private part $u_{21}^P$ of transmitter two is grouped with the Gaussian noise. The purpose of the private part is, that it is received below the noise floor of the unintended receiver. By the specific structure of the LTD scheme, the private part is given by 
\begin{equation}
u_{i1}^P:=\sum_{j=n_{2}^1+3}^{n_{i1}} [u_{i1}]_j 2^{-j},
\end{equation}
where the two first bits are set to zero. Therefore we have $|2^{n_2^1}u_{21}^P|\leq \tfrac{1}{4}$ and $|g_{2}^1 2^{n_2^1}u_{21}^P|\leq 1$. It follows, that the last parts of the received signal
$(g_{2}^12^{n_2^1}u_{21}^P+ z^1)$ can be treated as noise. Moreover, one can see that the two interfering signals from receiver two $u_{21}^C$ and $u_{22}^C$ are received with the same channel gain and therefore align at receiver one.
Once again following the notation of \cite{Niesen-Ali}, we can write the received signal parts above noise as
\begin{IEEEeqnarray}{rCl}
s_{11}&:=&2^{n^1_{11}}u_{11}\IEEEyessubnumber\\
s_{12}&:=&2^{n_{12}^1} u_{12}\IEEEyessubnumber\\
s_{2}^1&:=& 2^{n_2^1} (u_{21}^C+  u_{22}^C)\IEEEyessubnumber.
\end{IEEEeqnarray}
The decoder at receiver one tries to find estimates $\hat{s}_{11}$, $\hat{s}_{12}$, $\hat{s}_{2}^1$ for the received signal parts. Therefore, it is only interested in the sum of both interfering signals. The goal is to find the specific $\hat{s}_{11}$, $\hat{s}_{12}$, $\hat{s}_{2}^1$ which minimize the distance to the received signal, i.e.
\begin{equation}
|y_1-g_{11}^1\hat{s}_{11}-g_{12}^1\hat{s}_{12}-g_{2}^1\hat{s}_{2}^1|.
\end{equation}
An error can occur, if the distance between $y_1$ and any other triple $(\hat{s}_{11}$, $\hat{s}_{12}$, $\hat{s}_{2}^1)$ has a smaller distance than the noise. Therefore we need to investigate the minimum distance between the received signal parts and any other triple, which is
\begin{IEEEeqnarray}{rCl}
d&:=&\min_{(s_{11},s_{12},s_{2}^1)\neq (\hat{s}_{11},\hat{s}_{12},\hat{s}_{2}^1)} \\
&&|g_{11}^1(s_{11}-\hat{s}_{11})-g_{12}^1(s_{12}-\hat{s}_{12})-g_{2}^1(s_{2}^1-\hat{s}_{2}^1)|.\IEEEnonumber
\end{IEEEeqnarray}
Observe that the structure of the problem is exactly the same as in \cite[p.~4869]{Niesen-Ali}, although the network model  is different\footnote{In particular the coarse channel gain of the aligning signals, and therefore the form of $s_2^1$}.
For the interference regime where $\alpha\geq \tfrac{1}{2}$, we can use a Lemma which was proved in \cite{Niesen-Ali}.

\begin{lemma}[{\cite[Lemma~9]{Niesen-Ali}}]
Let $\delta\in (0,1]$ and $n_1$, $n_2$, $n_i \in \mathbb{N}$, $n_1 \geq n_2$ and $\tfrac{n_i}{n_1}\geq \tfrac{1}{2}$. Assume $R^P_{11}$, $R^{C}_{11}$, $R_{12}$, $R_{21}$, $R^P_{22}$, $R^{C}_{22}$ $\in \mathbb{Z}_+$ satisfy,
\begin{IEEEeqnarray*}{rCl}
R^{C}_{11}+R^C_{12} +R_{21}^C + R^P_{11}&\leq &  n_{1}-\log\left(\tfrac{c}{\delta}\right)\\
R_{12}^C+R_{21}^C + R^P_{11}&\leq&  n_{2}-\log\left(\tfrac{c}{\delta}\right)\\
R_{21}^C+R^P_{11}&\leq&  n_i-6
\end{IEEEeqnarray*} 
and
\begin{IEEEeqnarray*}{rCl}
R^{C}_{21}+R_{22}^C+R^{C}_{11}+ R^P_{22}&\leq&  n_1-\log\left(\tfrac{c}{\delta}\right)\\
R_{22}^C+R^{C}_{11}+ R^P_{22}&\leq&  n_2-\log\left(\tfrac{c}{\delta}\right)\\
R^{C}_{11}+ R^P_{22}&\leq&  n_i-6
\end{IEEEeqnarray*} 
where $c:=13104$ and $R_{ik}$ is the rate of the signal $u_{ik}$. Then, the bit allocation of the LTD-IMAC applied to the Gaussian IMAC results in a minimum constellation distance $d\geq32$ at each receiver for all channel gains ($h_{mk}\in(1,2]^{2\times 2}$) except for a set $B\subset (1,2]^{2 \times 2}$ of Lebesgue measure $\mu(B)\leq \delta$.
\label{decoding_lemma_g_strong}
\end{lemma}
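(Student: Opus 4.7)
The plan is to translate the minimum-distance problem at each receiver into a Diophantine approximation question on the channel gains $(g_{11}^1,g_{12}^1,g_2^1)$ and $(g_{21}^2,g_{22}^2,g_1^2)$, and then invoke the Khintchine--Groshev theorem exactly in the fashion of \cite[Lemma~9]{Niesen-Ali}, with the key twist being the IMAC alignment structure $s_2^1=2^{n_2^1}(u_{21}^C+u_{22}^C)$ that collapses four transmit-side unknowns into three channel-gain coefficients. First I would expand every signal along its LTD-allocated bit positions, writing $u_{ik}=\sum_j [u_{ik}]_j 2^{-j}$ with zero bits outside the ranges prescribed by the common/private split. Substituting into the receiver-1 distance expression and grouping by channel gain gives
\[
d = \min_{(m_1,m_2,m_3)\neq 0}\bigl|g_{11}^1\, m_1\,2^{n_{11}^1-N} + g_{12}^1\, m_2\,2^{n_{12}^1-N} + g_2^1\, m_3\,2^{n_2^1-N}\bigr|,
\]
where $N$ is the deepest bit used, and $m_1,m_2,m_3$ are signed integers whose magnitudes are bounded by the lengths of the allocated bit ranges; crucially $m_3$ absorbs both $u_{21}^C-\hat u_{21}^C$ and $u_{22}^C-\hat u_{22}^C$ because they share the channel gain $g_2^1$.

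Next I would apply the Khintchine--Groshev theorem, which states that for any $\delta\in(0,1]$ there is an exceptional set $B\subset(1,4]^{3}$ with $\mu(B)\leq\delta$ such that for every triple of channel gains outside $B$ and every nonzero integer triple $(\mu_1,\mu_2,\mu_3)$,
\[
|g_{11}^1\mu_1+g_{12}^1\mu_2+g_2^1\mu_3|\;\geq\;\frac{c'\,\delta}{\max_i|\mu_i|^{2+\epsilon}},
\]
for an absolute constant $c'$. Factoring out the common power of two and applying this bound, one obtains
\[
d\;\geq\;\frac{c'\,\delta\,2^{-N}}{(\max_i|m_i|\cdot 2^{\max(n_{11}^1,n_{12}^1,n_2^1)-N})^{2+\epsilon}}.
\]
Choosing the deepest level $N$ to match the private signal of the intended cell and bounding $|m_i|$ by $2$ raised to the relevant rate allocation, the three decoding inequalities of the lemma are precisely what is needed to make the exponent in the denominator be dominated by the numerator: the $n_1$-condition controls the combined extent of $R_{11}^C+R_{12}^C+R_{21}^C+R_{11}^P$ at the strongest direct link, the $n_2$-condition controls the same at the weaker direct link $g_{12}^1$, and the $n_i$-condition confines the aligned interference level. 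After elementary algebra this yields $d\geq 32$, i.e.\ the required $5$ bit margin above the noise standard deviation.

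The third step is a bookkeeping audit of the constant: the factor $c=13104$ decomposes as the Khintchine--Groshev constant $c'$, times a factor accounting for the alignment doubling of $|m_3|$ (since two interferers feed the same coefficient), times the $2^5=32$ minimum-distance requirement, times a small buffer absorbed by the $-6$ slack in the $n_i$-conditions. Finally, the second block of inequalities in the lemma, governing receiver $2$, follows by the symmetry of the IMAC model and an identical argument applied to $(g_{21}^2,g_{22}^2,g_1^2)$, with the overall exceptional set taken as the union of the two exceptional sets (still of measure at most $\delta$ after adjusting constants).

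The main obstacle I anticipate is not the Diophantine step itself, which is essentially the three-gain version of \cite[Lemma~9]{Niesen-Ali}, but rather the careful identification of the integer coefficients $m_1,m_2,m_3$ under the IMAC alignment: one must ensure that $m_3$ remains a single integer (bounded by the common-signal rate plus one bit for the carry from adding $u_{21}^C$ and $u_{22}^C$), rather than splitting into two independent coefficients, for otherwise the Khintchine--Groshev bound would require a four-dimensional rather than three-dimensional generic-gain argument and the third decoding condition $R_{21}^C+R_{11}^P\leq n_i-6$ would no longer suffice. Once the alignment is correctly accounted for, the remaining arithmetic matches the X-channel template of \cite{Niesen-Ali} and yields the stated constant without further difficulty.
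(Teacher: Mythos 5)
Your overall strategy is the right one, and it matches both the source of this lemma and the way the paper itself handles the analogous weak-interference case: the paper does not reprove this statement but imports it from \cite{Niesen-Ali} after observing that, once the two interferers are grouped under the shared gain $g_2^1$ via $s_2^1=2^{n_2^1}(u_{21}^C+u_{22}^C)$, the minimum-distance problem at each receiver is a linear form in exactly three channel gains with integer coefficients bounded by the allocated bit ranges --- precisely the structure of the X-channel lemma. Your identification of the alignment collapse (four transmit unknowns, three coefficients, one extra carry bit for $m_3$) is the essential structural point, and your handling of the second receiver by symmetry with a union of the two outage sets is also how the paper's Appendix~\ref{Proof_Gaussian_Weak_Case} proceeds for the weak-interference variant.

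The one genuine gap is the form of the Diophantine input. You invoke the classical (symmetric) Khintchine--Groshev bound $|g_{11}^1\mu_1+g_{12}^1\mu_2+g_2^1\mu_3|\geq c'\delta/\max_i|\mu_i|^{2+\epsilon}$ and then assert that the three decoding inequalities ``are precisely what is needed.'' They are not, under that bound: a denominator of the form $\max_i|\mu_i|^{2+\epsilon}$ produces a condition of the type $(2+\epsilon)\max_i(\,\cdot\,)\leq n_1-\ldots$, not the \emph{sum} conditions $R_{11}^C+R_{12}^C+R_{21}^C+R_{11}^P\leq n_1-\log(c/\delta)$ stated in the lemma. What is actually required is the asymmetric, non-asymptotic version (\cite[Lemma~14]{Niesen-Ali}, whose two-variable specialization is restated as Lemma~\ref{groshev} in the paper's appendix), in which one first rescales each signal as $2^{n_1}u_k=A_kq_k$ with offset $A_k=2^{n_k-R_k}$ and range $Q_k=2^{R_k}$, bounds the measure of the bad-gain slab for each fixed nonzero integer tuple, and then takes a union over the roughly $2^{\sum_k R_k}$ admissible tuples; it is this product-of-ranges structure, i.e.\ a bound of the form $\mu(B)\leq C\beta\prod_k\min\{Q_k,\ldots\}$, that converts into the sums of rates in the decoding conditions and produces the explicit constant $c=13104$. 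With that substitution your argument goes through; without it, the step from the generic lower bound to the stated inequalities does not close.
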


Since the structure of the conditions is the same as in the LTD-IMAC case (Lemma~\ref{general-ach-ltd-lem}), we see that the G-IMAC can achieve the sum rate of Lemma~\ref{ltd-upBound1} (with appropriate adjustment of the constants) with small probability of error.
However, note that we cannot apply this Lemma to our weak interference case. This is due to the private part, which is composed of the two private signals $u^{P_1}_{11},u^{P_2}_{11}$. We provide the proof for an adjusted Lemma in Appendix~\ref{Proof_Gaussian_Weak_Case}. 
Moreover, Lemma~\ref{decoding_lemma_g_strong} only provides conditions for the decoding error to be small. As in \cite{Niesen-Ali}, it can be shown that an outer code over the modulated channel results in a vanishing error probability. Due to the structural similarities between the X-channel and the IMAC, all results in \cite{Niesen-Ali} regarding vanishing error probability also apply for the IMAC. In particular, for an outer code with rate $R'_{ik}$ and a modulation rate of $R_{ik}$ it holds that
\begin{IEEEeqnarray}{rCl}
R'_{ik}&=& I(u_{ik};\hat{s}_{11}, \hat{s}_{12}, \hat{s}_{2}^1)\IEEEnonumber\\
&=&I(s_{ik};\hat{s}_{11}, \hat{s}_{12}, \hat{s}_{2}^1)\IEEEnonumber\\
&\geq& R_{ik}-1.5,
\label{outer_modulation}
\end{IEEEeqnarray} which uses the bound
\begin{equation*}
H(s_{11}, s_{12}, s_{2}^1|\hat{s}_{11}, \hat{s}_{12}, \hat{s}_{2}^1)\leq 1.5,
\end{equation*} proven in \cite{Niesen-Ali}, under usage of the conclusions of Lemma~\ref{decoding_lemma_g_strong}.
This can be shown for each signal. The sum of the lower bounds then shows, that an outer code can achieve the previous sum rate with a vanishing error probability within a constant gap of six bits.

\section{On The Difference between the LD Model and the LTD Model}

Let us look at an specific example for both models. Assume we have a symmetrical channel, such that
\begin{IEEEeqnarray*}{rCl}
 n_{11}^1&=&n_{21}^2=n_1=11\times\Delta \text{ bits}\\
 n_{12}^1&=&n_{22}^2=n_2=10\times \Delta \text{ bits}\\
  n_{21}^1&=&n_{22}^1=n_{11}^2=n_{12}^2=n_i=5\times\Delta \text{ bits}.
\end{IEEEeqnarray*}
The parameters are $\alpha_1=\alpha_2=\tfrac{5}{11}$ and $\beta_1=\beta_2=\tfrac{10}{11}$. Figure~\ref{Fig17} shows the achievable strategy for the LD model. We partition the common part of the signal in $\Delta$-bit partitions such that we have maximal direct bit-rate while minimizing the aligning interference. This results in an achievable bit-rate of $8\Delta$ (sum rate of $16\Delta$) bits using an orthogonal alignment scheme, where bit-levels are used independently. Lets assume the optimal input distribution in the IMAC setting is uniform. If $X\sim \text{Unif[0,1]}$ we have that the binary expansion $X=\sum_{n=1}^\infty x_n 2^{-n}$ yields i.i.d. Bern$(\tfrac{1}{2})$ $x_n$. Which would mean that the bits of the binary received vector $\mathbf{y}$ are also i.i.d. Bern($\tfrac{1}{2}$). Hence, we could show the following converse bound
\begin{IEEEeqnarray*}{rCl}
n(\sum R-\epsilon)&\leq& I(\mathbf{x}_{11}^n,\mathbf{x}_{12}^n;\mathbf{y}_1^n)+I(\mathbf{x}_{21}^n,\mathbf{x}_{22}^n;\mathbf{y}_2^n)\\
&\leq & H(\mathbf{y}_1^n)-H(\mathbf{y}_1^n|\mathbf{x}_{11}^n,\mathbf{x}_{12}^n)\\
&&+H(\mathbf{y}_2^n)-H(\mathbf{y}_2^n|\mathbf{x}_{21}^n,\mathbf{x}_{22}^n)\\
&\leq & H(\mathbf{y}_1^{\uparrow,n})+H(\mathbf{y}_1^{\downarrow,n})-H(\mathbf{y}_1^n|\mathbf{x}_{11}^n,\mathbf{x}_{12}^n)\\
&&+H(\mathbf{y}_2^{\uparrow,n})+H(\mathbf{y}_2^{\downarrow,n})-H(\mathbf{y}_2^n|\mathbf{x}_{21}^n,\mathbf{x}_{22}^n).
\end{IEEEeqnarray*}
Now we can upper bound $H(\mathbf{y}_1^{\downarrow,n})$ and $H(\mathbf{y}_2^{\downarrow,n})$ by $n6\Delta$ bits.
Moreover, we can split 
\begin{IEEEeqnarray*}{rCl}
H(\mathbf{y}_1^{\uparrow,n})-H(\mathbf{y}_2^n|\mathbf{x}_{21}^n,\mathbf{x}_{22}^n)&=&\\
\sum_{\Delta} H(\mathbf{y}_{\Delta,1}^{\uparrow,n})-H(\mathbf{y}_{\Delta,2}|\mathbf{x}_{21,\Delta}^n,\mathbf{x}_{22,\Delta}^n)
\end{IEEEeqnarray*}
in $\Delta$ partitions, which is possible due to the bits of $\mathbf{y}$ being independent as assumed. Now one can condition the term $H(\mathbf{y}_{\Delta,2}|\mathbf{x}_{21,\Delta}^n,\mathbf{x}_{22,\Delta}^n)$ alternately on $\mathbf{x}_{11,\Delta}^n$ and $\mathbf{x}_{12,\Delta}^n$, starting with the latter, which results in 
\begin{equation*}
H(\mathbf{y}_1^{\uparrow,n})-H(\mathbf{y}_2^n|\mathbf{x}_{21}^n,\mathbf{x}_{22}^n)\leq n2\Delta.
\end{equation*}
The same can be done with $H(\mathbf{y}_2^{\uparrow,n})-H(\mathbf{y}_1^n|\mathbf{x}_{11}^n,\mathbf{x}_{12}^n)$ resulting in an overall bound of \begin{equation*}
R_{\Sigma}\leq 16\Delta \text{ bits}.
\end{equation*}
However, without any $\mathbf{y}$ distribution assumption we can get the bound of theorem~\ref{ld-bound}
\begin{equation}
R_\Sigma \leq n^1_{11}+n^2_{21}-\frac{n^2_1}{2}-\frac{n^1_2}{2}
\end{equation}
and therefore $R_\Sigma \leq 17\Delta$ bits. Those $17\Delta$ bits are achievable by using the LTD model and an achieving strategy for our example is given in Fig.~\ref{Fig18}. The difference in the LTD model is, that a dependence of the bit-levels at the receiver is introduced by taking the binary expansion of the channel gain into account.
\begin{figure}\centering
\includegraphics[scale=1]{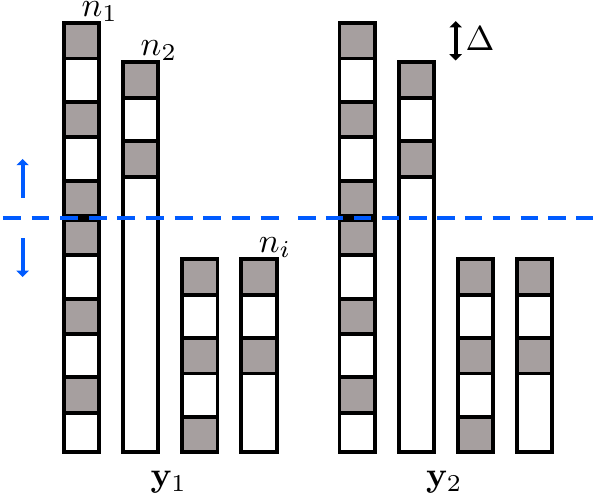}
\caption{Achievable strategy for the LDM in the exemplary setting. We have illustrated the parts of $\mathbf{y}_1^{\uparrow,n}$, $\mathbf{y}_2^{\uparrow,n}$ and $\mathbf{y}_1^{\downarrow,n}$, $\mathbf{y}_2^{\downarrow,n}$ with blue arrows. Every grey box corresponds to a $\Delta-$bit allocation.}
\label{Fig17}
\end{figure}

\begin{figure}\centering
\includegraphics[scale=1]{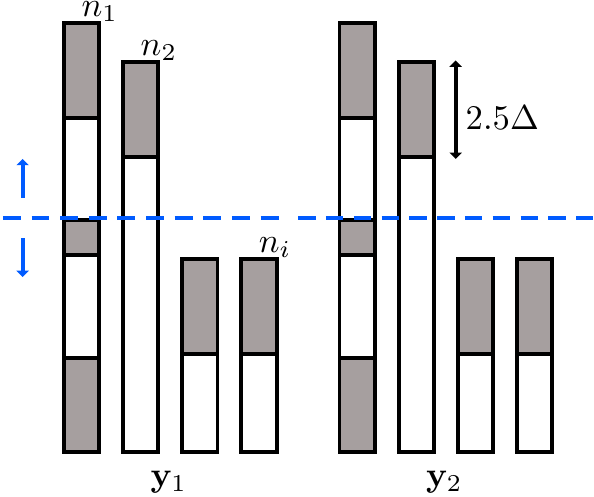}
\caption{Achievable strategy for the LTDM in the exemplary setting. We have illustrated the parts of $\mathbf{y}_1^{\uparrow,n}$, $\mathbf{y}_2^{\uparrow,n}$ and $\mathbf{y}_1^{\downarrow,n}$, $\mathbf{y}_2^{\downarrow,n}$ with blue arrows. The aligning blocks have $2.5 \Delta$ bits. The block in the middle, under the blue split, has $\Delta$ bits.}
\label{Fig18}
\end{figure}

\section{Conclusions}
\begin{figure}
\centering
\includegraphics[scale=0.38]{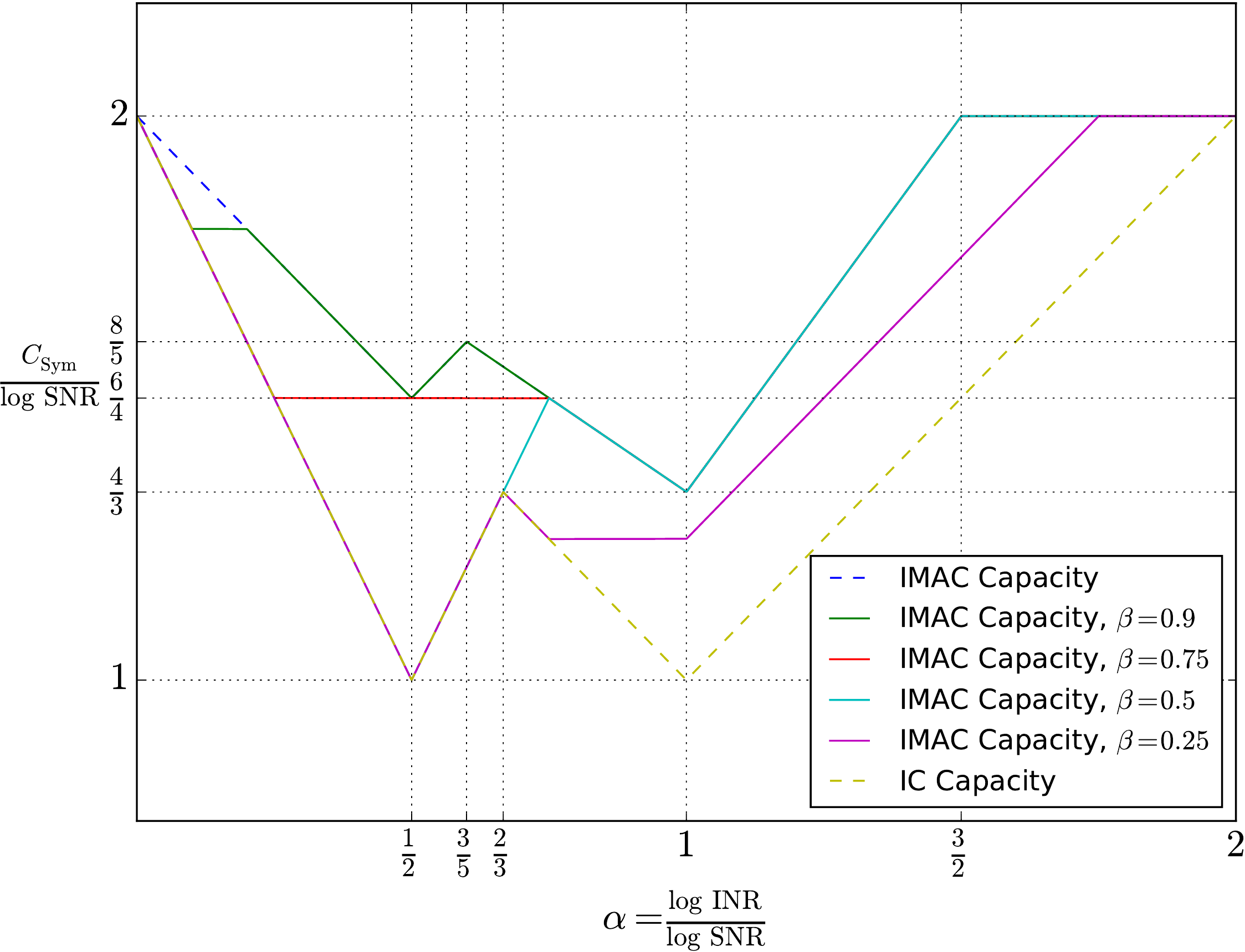}
\caption{The GDoF ``W'' curve for the Gaussian IMAC, for various $\beta$, which are the defined as the channel strength ratio between the two direct links $\mbox{SNR}_{i2}\,=\,\mbox{SNR}_{i1}^{\beta}$, or equivalently $\beta=\tfrac{n_2}{n_1}$ if fine gains are neglected. All curves are based on a symmetric channel. One can see that $\beta$ controls the multi-user gain dependent on $\alpha$. Moreover, one can see that for certain parameter ranges, the capacity is independent of differences in $\alpha$, which results in horizontal lines. The first two (from the left) show $2\beta$, the third shows $1+\beta$.}
\label{last_fig}
\end{figure}

In this paper we have investigated the Gaussian interfering multiple access channel (G-IMAC). We used the linear deterministic model (LDM) of the IMAC (coined LD-IMAC) as a first approximation to gain new insights. These insights show that with the help of interference alignment techniques in the signal-scale, upto half of the interference can be aligned and the other half can be therefore made available for communication. Moreover, we have shown upper bounds which coincide with the achievable rates for certain interference-to-signal ratios (see Fig.~\ref{ld-achiev-uB}) and are within a certain gap of all other points. We conjectured that the gap is due to the over simplification of the LDM. Subsequently, lattice codes were used to convert the bit-level alignment schemes from the LD-IMAC to the G-IMAC. However, the gap towards the upper bounds consisted due to the fact that the lattice codes schemes were modelled after the bit-alignment strategies of the LD-IMAC, therefore inheriting the suboptimal structure. To overcome this gap, an approximation was needed which lies between the standard G-IMAC and the LD-IMAC. The new approximation model of \cite{Niesen-Ali}, the lower triangular deterministic model (LTDM), was a promising approach. Instead of setting the fine channel gain to one (as in the LDM case), this new model integrated the channel gain by another binary expansion. Taking the fine channel gain into account enables a new class of achievable schemes which are not limited to orthogonal bit alignment. It turned out, that this was also the limiting factor in the previous LDM achievable schemes. We have shown that the IMAC approximated by the LTDM (LTD-IMAC) can achieve the previous upper bounds in the whole interference regime, within a {\it constant} gap. Moreover, techniques from \cite{Niesen-Ali} could be modified in a way to show a constant gap capacity approximation of the G-IMAC, thereby porting the LTDM schemes to the Gaussian model. As a by-product this shows, that the GDoF of the IMAC is indeed as pictured in Fig.~\ref{last_fig}. This shows that in the IMAC considerable gains can be achieved via signal scale alignment methods. Above $\tfrac{3}{2}$ interference-to-signal ratio, the harmful effects of interference can be completely cancelled. Note that we only treated the case of equal interference strength at the receivers. This assumption is only for simplification of the analysis. However, one needs a difference in the ratio of both direct link gains in comparison to the interference link gains in order to enable the signal scale alignment and get multi-user gain. We call this difference the shift-property.
\appendices
\section{Proof of Theorem \ref{mytheoremx} and \ref{General_Bound_G_IMAC}}
\label{Proof_of_General_Bound_G_IMAC}

\begin{proof}
A part of this proof follows closely the method of \cite{Bresler2008} for the Gaussian 2-user IC and applies it to the IMAC setting. The mutual information terms of the Gaussian channel will be transformed into a deterministic form in two steps. Each step will introduce a bit penalty which contributes to the overall gap between both bounds. Step 1 will transform the average power constraint to a peak power constraint. Step 2 truncates the signals at noise level and removes the noise. 

\paragraph{Step 1: average power constraint to peak power constraint.}

The channel equation for the Gaussian IMAC is
\begin{IEEEeqnarray*}{rCl}
y^1 &=& h_{11}^1 2^{n^1_{11}}x_{11}+h_{12}^1 2^{n_{12}^1} x_{12}\\
&& +\: h_{21}^1 2^{n_{21}^1} x_{21}+ h_{22}^1 2^{n_{22}^1} x_{22} +z^1\\
y^2 &=& h_{11}^2  2^{n_{11}^2} x_{11}+h_{12}^2 2^{n_{12}^2} x_{12}\\
&&+\: h_{21}^2 2^{n_{21}^2} x_{21}+h_{22}^2 2^{n_{22}^2}x_{22}+z^2,
\end{IEEEeqnarray*}

Recall that we assume without loss of generality that the Gaussian IMAC inputs have a unit average power constraint. We have to split the input signals in two parts, where one part is not exceeding the unit peak power constraint. We can write the binary expansion of the input signals as

\begin{equation*}
x_{ik}=\sum_{b=-\infty}^\infty [x_{ik}]_b2^{-b}.
\end{equation*}
Now we can write the part which exceeds the peak power constraint as
\begin{equation*}
\hat{x}_{ik}:=sign(x_{ik})\sum_{b=-\infty}^0 [x_{ik}]_b2^{-b}
\end{equation*}
and let the remaining part be
\begin{equation*}
\bar{x}_{ik}:=x_{ik}-\hat{x}_{ik}=sign(x_{ik})\sum_{b=1}^\infty [x_{ik}]_b2^{-b}.
\end{equation*}
Therefore, we can rewrite the channel equations such that all signals have a peak power constraint.
\begin{IEEEeqnarray*}{rCl}
\bar{y}^1 &=& h_{11}^1 2^{n^1_{11}}\bar{x}_{11}+h_{12}^1 2^{n_{12}^1} \bar{x}_{12} + h_{21}^1 2^{n_{21}^1} \bar{x}_{21}\\
&&+\: h_{22}^1 2^{n_{22}^1} \bar{x}_{22} +z^1\\
\bar{y}^2 &=& h_{11}^2  2^{n_{11}^2} \bar{x}_{11}+h_{12}^2 2^{n_{12}^2} \bar{x}_{12}+ h_{21}^2 2^{n_{21}^2} \bar{x}_{21}\\
&&+\:h_{22}^2 2^{n_{22}^2}\bar{x}_{22}+z^2,
\end{IEEEeqnarray*}
and let 
\begin{IEEEeqnarray*}{rCl}
\hat{y}^1 &=& y^1-\bar{y}^1\\
\hat{y}^2 &=& y^2-\bar{y}^2,
\end{IEEEeqnarray*}
be the receiver-side which exceeds the constraints. Each transmitter encodes a codeword $x_{ik}^n(w_{ik})$ based on the message $w_{ik}$ with message rate $R_{ik}$ for cell $i$ and user $k$. We can therefore write the mutual information term for cell $i$ in the following way 

\begin{IEEEeqnarray*}{rCl}
\IEEEeqnarraymulticol{3}{l}{
I(w_{i1},w_{i2};(y^i)^n)}\\
&\overset{(a)}{\leq}& I(w_{i1},w_{i2};(\bar{y}^i)^n,(\hat{y}^i)^n)\\
&=& I(w_{i1},w_{i2};(\bar{y}^i)^n)+I(w_{i1},w_{i2};(\hat{y}^i)^n|(\bar{y}^i)^n)\\
&\leq &I(w_{i1},w_{i2};(\bar{y}^i)^n) + H((\hat{y}^i)^n)\\
&\overset{(b)}{\leq} & I(w_{i1},w_{i2};(\bar{y}^i)^n) + \sum_{i=1}^2 H(\hat{x}_{i1}^n)+ H(\hat{x}_{i2}^n)\\
&\overset{(c)}{\leq} & I(w_{i1},w_{i2};(\bar{y}^i)^n) + 8n\\
\end{IEEEeqnarray*}
where (a) follows with the data processing inequality on $(y^i)^n=(\bar{y}^i)^n+(\hat{y}^i)^n=f((\bar{y}^i)^n,(\hat{y}^i)^n)$, (b) follows since $(\hat{y}^i)^n=\hat{x}_{i1}^n+\hat{x}_{i2}^n$ and (c) is a consequence of Lemma 6 in \cite{Bresler2008}.

\paragraph{Step 2: truncate at noise level and remove noise}
Recall that the input is pure made of peak-power constraint signals due to step 1. Moreover, the channel gain is represented as $h2^n$, where $h\in (1,2]$, $n\in\mathbb{N}$. 
\begin{equation*}
x_{ik}=sign(x_{ik})\sum_{b=1}^\infty [x_{ik}]_b2^{-b}.
\end{equation*}
We will now split the terms $h_{ik}2^{n_{ik}^j}x_{ik}$ in parts above noise level and below noise level. The part above noise level can be written as
\begin{equation*}
h_{ik}2^{n_{ik}^j}sign(x_{ik})\sum_{b=1}^{n_{ik}^j} [x_{ik}]_b2^{-b}.
\end{equation*}
The part below noise level can be bounded from above by
\begin{equation*}
|h_{ik}2^{n_{ik}^j}\sum_{b=n_{ik}^j+1}^{\infty} [x_{ik}]_b2^{-b}|\leq 2^{n_{ik}^j+1}2^{-n_{ik}^j}\leq 2.
\label{bound_by_two}
\end{equation*}

Therefore, we can write the truncated output equations without noise as

\begin{equation*}
\tilde{y}^j=\sum_i\sum_k\left\lfloor2^{n_{ik}^j}\sum_{b=1}^{n_{ik}^j} [x_{ik}]_b2^{-b} \right\rfloor,
\end{equation*}
where $i,k\in\{1,2\}$.
Following the proof method, we define

\begin{IEEEeqnarray*}{rCl}
\epsilon^j &:=&\hat{y}^j-\tilde{y}^j=\sum_i\sum_k \left[ h_{ik}2^{n_{ik}^j}\sum_{b=n_{ik}^j+1}^{\infty} [x_{ik}]_b2^{-b} \right.\\
 &&\:+ \left. \vphantom{h_{ik}2^{n_{ik}^j}\sum_{b=n_{ik}^j+1}^{\infty} [x_{ik}]_b2^{-b}} (h_{ik}^j-1)\sum_{b=1}^{n_{ik}^j} [x_{ik}]_b2^{-b}+ \text{frac}\left( 2^{n_{ik}^j}\sum_{b=1}^{n_{ik}^j} [x_{ik}]_b2^{-b}\right)\right]\\
 &&+\:z^j\\
&=& \sum_i\sum_k x_{ik}^*+z^j,
\end{IEEEeqnarray*}
where the first part in the sum accounts for the terms below noise level, the second part includes the fractional channel gain $h$ and the last part is for the fractional terms due to the floor function. 

Now we can transform the mutual information terms into the final linear deterministic mutual information terms

\begin{IEEEeqnarray*}{rCl}
\IEEEeqnarraymulticol{3}{l}{
I(w_{i1},w_{i2};(\bar{y}^i)^n)}\\
&\leq & I(w_{i1},w_{i2};(\tilde{y}^j)^n,(\epsilon^j)^n)\\
&=&I(w_{i1},w_{i2};(\tilde{y}^j)^n)+I(w_{i1},w_{i2};(\epsilon^j)^n|(\tilde{y}^j)^n)\\
&=&I(w_{i1},w_{i2};(\tilde{y}^j)^n)+h((\epsilon^j)^n|(\tilde{y}^j)^n)\\
&&-\:h((\epsilon^j)^n|(\tilde{y}^j)^n,w_{i1},w_{i2})\\
&\leq & I(w_{i1},w_{i2};(\tilde{y}^j)^n)+h((\epsilon^j)^n)-h(z^j)\\
&=& I(w_{i1},w_{i2};(\tilde{y}^j)^n)+I(x_{11}^*,x_{12}^*,x_{21}^*,x_{22}^*;(\epsilon^j)^n)\\
&\overset{(a)}{<}& I(w_{i1},w_{i2};(\tilde{y}^j)^n)+3.1n,
\end{IEEEeqnarray*}
where (a) is due to the fact that $(x_{11}^*,x_{12}^*,x_{21}^*,x_{22}^*) \mapsto \epsilon^j$ forms a 4-user MAC channel. 
With (\ref{bound_by_two}) and $|(h_{ik}^j-1)x_{ik}|\leq 1$, one can see that $|x_{ik}^*|\leq 4$ and therefore $\tfrac{1}{n}I(x_{11}^*,x_{12}^*,x_{21}^*,x_{22}^*;(\epsilon^j)^n) \leq \tfrac{1}{2}\log (1+4(16))+\epsilon_n<3.1n$ for $n \rightarrow \infty$.

Therefore, we have shown that we can bound the Gaussian IMAC mutual information terms from above with deterministic mutual information terms within a constant gap of 11.1 bits.
\begin{IEEEeqnarray*}{rCl}
\IEEEeqnarraymulticol{3}{l}{
I(w_{i1},w_{i2};(y^i)^n)}\\
&\overset{\text{Step }1}{\leq} &I(w_{i1},w_{i2};(\bar{y}^i)^n) + 8n\\
&\overset{\text{Step }2}{\leq}& I(w_{i1},w_{i2};(\tilde{y}^j)^n)+11.1n.
\end{IEEEeqnarray*}
Note that the LD-IMAC and the LTD-IMAC models are restricted to positive inputs. However, capacity is only a function of the magnitude of the channel gains, and additional negative inputs would only result in an additional constant number of bits. Moreover, addition over $\mathbb{F}_2$ also costs only a constant number of bits (the missing carry over) and can be neglected for high SNR regimes as well. We can therefore upper bound the mutual information of the Gaussian model, with the terms of the deterministic model plus a constant number of bits. Furthermore, we can upper bound the deterministic model terms with Theorem~\ref{UB_LTDM}, which yields an upper bound for the Gaussian model.

\begin{remark}
An alternative way to prove those bounds is to  directly bound the Gaussian mutual information terms by using smart genies and the fact that conditional differential entropy is maximised by Gaussian random variables. In particular, the genie must be chosen such that it mimics the cuts in the proof of Theorem~\ref{UB_LTDM}. The main challenges here are the first two bounds, $D_1$ and $D_2$. Both bounds rely on a cut of the received {\it sum} of signals $y$. We remark that a simple genie which provides for example the interference of the users in cell 1 at receiver 2 plus noise (i.e. $s_1=h_{11}^2  2^{n_{11}^2} x_{11}+h_{12}^2 2^{n_{12}^2} x_{12}+z^2$), to receiver 1 is not sufficient for a good bound. Instead, one needs to provide a genie, where the sum of both signals has the same relative shift as the received signals. Lets say we provide the following genie information to receiver 1: $s_1=a x_{11}+b x_{12}+z^2$, then $a$ and $b$ need to obey $h_{11}^1 2^{n^1_{11}}b=h_{12}^1 2^{n_{12}^1}a$. This property allows an elimination of an additional term inside the log variance term resulting from the Gaussian conditional entropy minus noise. Now one can scale $a$ and $b$ appropriately as in the proof of Theorem~\ref{UB_LTDM}. Since the genie information is now a shifted version of the interference terms, one needs to use the same trick as in the deterministic case to show the bound.
\end{remark}

\end{proof}
\section{Bound on Alignment Structure Rate Term}
\label{BoundOnInterferenceRateTerm}
We look into the lattice decoding bounds and develop lower bounds on the maximum achievable rates $R_{I_i}$ for the possible rate expressions inside the alignment structure for cell $i$ and bit-level $l$. The achievable rate for the alignment structure is divided into three parts. We have two common signal parts, which are also received at cell $j$ and one private signal part which is just received in cell $i$. The common signal parts need to obey two different decoding bounds, where one bound is for the decodability in cell $i$ and one is for the decodability in cell $j$.

\subsection{Common Signal Parts}
For the direct path we have the decoding bound \eqref{Decodingbound} with an equivalent noise
\begin{IEEEeqnarray*}{rCl}
N_i(l)&=&1+\sum_{\mbox{used levels}}\theta_{l+1}\\
&=& \mbox{SNR}_{i1}^{1-l(1-\beta_i)}+\sum_{\substack{m=1\\m  \text{ odd}}}^{\lfloor L_j \rfloor } \tfrac{\mbox{INR}_j^i}{\mbox{SNR}_{j1}^{(m-1)(1-\beta_j)}}-\tfrac{\mbox{INR}_j^i}{\mbox{SNR}_{j1}^{m(1-\beta_j)}},
\label{noise_term}
\end{IEEEeqnarray*}
because every partition, starting at level l, is used atleast once, the middle terms of the sum vanish and we have $\mbox{SNR}_{i1}^{1-l(1-\beta_i)}-1$ left. Moreover, due to the overlapping of the interference, we have a sum over the odd bit-levels of the interference affected part. Together with the general noise of 1, we get the expression above. The decoding bound is therefore given as
\begin{IEEEeqnarray*}{rCl}
R_{\text{IC}_i}(l)&\leq&\log \left(1+ \frac{\mbox{SNR}_{i1}^{1-(l-1)(1-\beta_i)}-\mbox{SNR}_{i1}^{1-l(1-\beta_i)}}{\mbox{SNR}_{i1}^{1-l(1-\beta_i)}+\nu}\right)\\
&=&\bar{R}_{\text{IC}_i^{'}}(l),
\end{IEEEeqnarray*}
where we denote $\nu=\sum_{\substack{m=1\\m  \text{ odd}}}^{\lfloor L_j \rfloor } \tfrac{\mbox{INR}_j^i}{\mbox{SNR}_{j1}^{(m-1)(1-\beta_j)}}-\tfrac{\mbox{INR}_j^i}{\mbox{SNR}_{j1}^{m(1-\beta_j)}}$.
Now we can lower bound $\bar{R}_{\text{I}_i}(l)$ and show that
\begin{IEEEeqnarray*}{rCl}
\bar{R}_{\text{IC}_i^{'}}(l)&=& \log \left(1+ \frac{\mbox{SNR}_{i1}^{1-(l-1)(1-\beta_i)}-\mbox{SNR}_{i1}^{1-l(1-\beta_i)}}{\mbox{SNR}_{i1}^{1-l(1-\beta_i)}+\nu}\right)\\
&=& \log \left(\frac{\mbox{SNR}_{i1}^{1-(l-1)(1-\beta_i)}+\nu}{\mbox{SNR}_{i1}^{1-l(1-\beta_i)}+\nu}\right)\\
&>& \log \left(\frac{\mbox{SNR}_{i1}^{1-(l-1)(1-\beta_i)}}{\mbox{SNR}_{i1}^{1-l(1-\beta_i)}+\nu}\right)\\
&>& \log \left(\frac{\mbox{SNR}_{i1}^{1-(l-1)(1-\beta_i)}}{2\mbox{SNR}_{i1}^{1-l(1-\beta_i)}}\right)\\
&=& \log \mbox{SNR}_{i1}^{1-\beta_i}-1,
\end{IEEEeqnarray*}
where we used that $\mbox{SNR}_{i1}^{1-l(1-\beta_i)}>\nu$.
The decoding bound is different in the interference path, because we need to decode the sum of two bit-levels. A sum of $K$ signals needs to obey the decoding bound \eqref{Decodingbound2} with an equivalent noise of 

\begin{IEEEeqnarray*}{rCl}
N_j(l)&=&1+\sum_{\mbox{used levels}}\theta_{l+1}\\
&=& \tfrac{\mbox{INR}_i^j}{\mbox{SNR}_{i1}^{l(1-\beta_i)}}+\sum_{\substack{m=l+1\\m  \text{ odd}}}^{\lfloor L_i \rfloor } \tfrac{\mbox{INR}_i^j}{\mbox{SNR}_{i1}^{(m-1)(1-\beta_i)}}-\tfrac{\mbox{INR}_i^j}{\mbox{SNR}_{i1}^{m(1-\beta_i)}}\\
&=& \tfrac{\mbox{INR}_i^j}{\mbox{SNR}_{i1}^{l(1-\beta_i)}}(1+\sum_{\substack{m=1\\m  \text{ odd}}}^{\lfloor L_i \rfloor } \mbox{SNR}_{i1}^{(1-m)(1-\beta_i)}-\mbox{SNR}_{i1}^{-m(1-\beta_i)})\\
&=& \tfrac{\mbox{INR}_i^j}{\mbox{SNR}_{i1}^{l(1-\beta_i)}}(1+\nu_2).
\label{noise_term}
\end{IEEEeqnarray*}
Note that an signal directed to level $l$ with an received power of $\theta_l$ at cell $i$ gets received at cell $j$ with power
\begin{IEEEeqnarray*}{rCl}
\theta_l\tfrac{|h_{ik}^j|^2}{|h_{ik}^i|^2}&=& \tfrac{\mbox{INR}_i^j}{\mbox{SNR}_{i1}^{(l-1)(1-\beta_i)}}-\tfrac{\mbox{INR}_i^j}{\mbox{SNR}_{i1}^{l(1-\beta_i)}}\\
&&=\tfrac{\mbox{INR}_i^j}{\mbox{SNR}_{i1}^{l(1-\beta_i)}}(\mbox{SNR}_{i1}^{(1-\beta_i)}-1).
\end{IEEEeqnarray*}
We therefore have the second decoding bound with

\begin{IEEEeqnarray*}{rCl}
R_{\text{CI}_i}(l)&\leq&\log \left(\frac{1}{2}+ \frac{\mbox{SNR}_{i1}^{(1-\beta_i)}-1}{1+\nu_2}\right)\\
&=&\bar{R}_{\text{IC}_i^{''}}(l).
\end{IEEEeqnarray*}

We can now lower bound the rate $\bar{R}_{\text{IC}_i''}(l)$ and show that

\begin{IEEEeqnarray*}{rCl}
\bar{R}_{\text{IC}_i^{''}}(l) &=& \log \left(\frac{1}{2}+ \frac{\mbox{SNR}_{i1}^{(1-\beta_i)}-1}{1+\nu_2}\right)\\
&>& \log \left(1+ \frac{\mbox{SNR}_{i1}^{(1-\beta_i)}-1}{1+\nu_2}\right)-1\\
&=& \log \left( \frac{\mbox{SNR}_{i1}^{(1-\beta_i)}+\nu_2}{1+\nu_2}\right)-1\\
&>& \log \left( \frac{\mbox{SNR}_{i1}^{(1-\beta_i)}}{1+\nu_2}\right)-1\\
&>& \log \left(\mbox{SNR}_{i1}^{(1-\beta_i)}\right)-\log(2)-1\\
&=& \log  \mbox{SNR}_{i1}^{(1-\beta_i)}-2
\end{IEEEeqnarray*}
where we used the fact that $\nu_2<1$ since $\mbox{SNR}_{i1}>1$ and the weak interference regime.

Note that we need to obey both decoding bounds such that our common signal parts are decodable at the legitimate receiver, and also as part of the lattice sum at the unintended receiver. The latter is important for the successive decoding scheme in which we need to subtract the interference sum to be able to decode the following bit-levels. We therefore use the minimum of both and the achievable rate is therefore 
\begin{equation}
\bar{R}_{\text{IC}_i}(l)=\min\{\bar{R}_{\text{IC}_i^{'}}(l),\bar{R}_{\text{IC}_i^{''}}(l)\}>\log  \mbox{SNR}_{i1}^{(1-\beta_i)}-2.
\end{equation}

\subsection{Private Signal Parts}
Here we just need to look into the direct path where we have the decoding bound \eqref{Decodingbound} with an equivalent noise
\begin{IEEEeqnarray*}{rCl}
N_i(l)&=&1+\sum_{\mbox{used levels}}\theta_{l+1}\\
&=& \tfrac{\mbox{INR}_j^i}{\mbox{SNR}_{j1}^{l(1-\beta_j)}}(1+\nu_3),
\label{noise_term}
\end{IEEEeqnarray*}
where 
\begin{equation}
\nu_3=\sum_{\substack{m=1\\m  \text{ odd}}}^{\lfloor L_j \rfloor } \mbox{SNR}_{j1}^{(1-m)(1-\beta_j)}-\mbox{SNR}_{j1}^{-m(1-\beta_j)}.
\end{equation}
Note that this is the equivalent of $N_j(l)$ in the bound for $R_{IC_i^{''}}$. Moreover, the bound is similar, except that we now just need to decode one codeword and therefore get the bound
\begin{IEEEeqnarray*}{rCl}
R_{\text{IP}_i}(l)&\leq&\log \left(1+ \frac{\mbox{SNR}_{j1}^{(1-\beta_j)}-1}{1+\nu_3}\right)\\
&=&\bar{R}_{\text{IP}_i}(l).
\end{IEEEeqnarray*}
Now we can lower bound $\bar{R}_{\text{I}_i}(l)$ by
\begin{IEEEeqnarray*}{rCl}
\bar{R}_{\text{IP}_i}(l)
&>& \log  \mbox{SNR}_{j1}^{(1-\beta_j)}-1,
\end{IEEEeqnarray*}
using the same steps as for $\bar{R}_{\text{IC}_i}$.
This rate depends on the ratio of the direct signals of cell $j$, since this ratio gives the size of the interference alignment blocks and therefore the size of the interference-free slots which cell $i$ can use for communication.

\remark{We did not discuss the case where $\lfloor Li \rfloor \neq Li$, which results in remainder terms for certain cases in the alignment structures. The transmission power is not affected, but the noise term would be slightly different. However, the noise term would only change in the $\nu$-terms but still obey our conditions. This means that all results above apply to these cases as well.}

\section{Proof of Lemma 4}
\label{Proof of Lemma groshev-det-weak}
\begin{figure}
\centering
\includegraphics[scale=1]{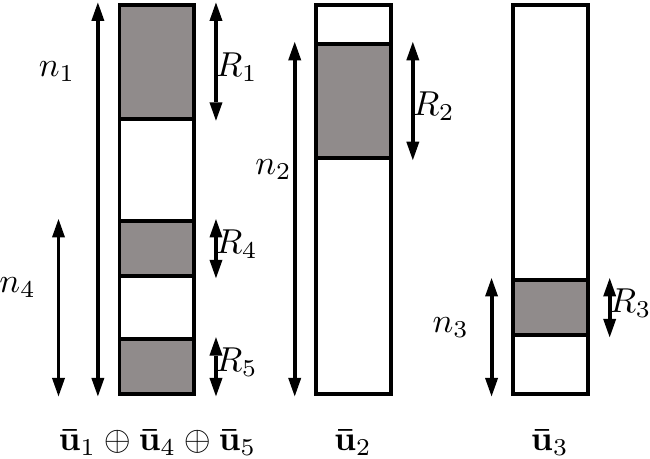}
\caption{Illustration of the received bit-vectors at one receiver. A blank region corresponds to bits which are set to zero, i.e. unused bits. And grey regions corresponds to used bit-levels. Observe that there are 5 used regions, which corresponds to the weak interference case and our proposed scheme.}
\label{analysis-groshev-det-weak-pic}
\end{figure}
We have three bit vectors $\mathbf{\bar{u}}_1^{'}=\mathbf{\bar{u}}_1\oplus \mathbf{\bar{u}}_4 \oplus\mathbf{\bar{u}}_5$, $\mathbf{\bar{u}}_2$ and $\mathbf{\bar{u}}_3$, which are multiplied by the matrices $\mathbf{\bar{G}}_1$, $\mathbf{\bar{G}}_2$ and $\mathbf{\bar{G}}_3$, respectively. Here, $\mathbf{\bar{u}}_1^{'}$ corresponds to the wanted received signal $x_{i1}$, $\mathbf{\bar{u}}_2$ to $x_{i2}$ and $\mathbf{\bar{u}}_3$ to the sum of both interference bit vectors. One can view the signals $\mathbf{\bar{u}}_4$ and $\mathbf{\bar{u}}_5$ as the private parts of $\mathbf{\bar{u}}_1^{'}$. We use the same framework as in \cite{Niesen-Ali} to analyse this setting, i.e we use the notation
\begin{IEEEeqnarray*}{rCl}
\mathcal{U}(n^-,n^+)&:=&\lbrace \mathbf{\bar{u}}\in \{0,1\}^{n_1}: \bar{u}_i=0\\
&\forall& i\in \{1,\ldots,n_1-n^-\} \cup \{n_1-n^++1,\ldots , n_1\} \rbrace
\end{IEEEeqnarray*} 
where $n^-$ and $n^+$ are nonnegative integers such that $n^-\geq n^+$. Such that we can consider bit vectors ($\mathbf{\bar{u}}_1$, $\mathbf{\bar{u}}_2$, $\mathbf{\bar{u}}_4$, $\mathbf{\bar{u}}_3$, $\mathbf{\bar{u}}_5$) in
\begin{IEEEeqnarray*}{rCl}
\mathcal{U}&:=&\mathcal{U}(n_1,n_1-R_1)\times \mathcal{U}(n_2,n_2-R_2)\times\\
&&\times \mathcal{U}(n_4,n_4-R_4)\times \mathcal{U}(n_3,n_3-R_3)\times\mathcal{U}(R_5,0),
\end{IEEEeqnarray*} with $n_1\geq n_2 \geq n_4 \geq n_3$, see Fig~\ref{analysis-groshev-det-weak-pic}.
Note that due to our scheme, we have a few assumptions on $R_{k^{'}}$, $k'\in \{1,\ldots,5\}$, namely that $\mathbf{\bar{u}}_4$, $\mathbf{\bar{u}}_3$ and $\mathbf{\bar{u}}_5$ do not overlap, and $\mathbf{\bar{u}}_1$ and $\mathbf{\bar{u}}_2$ do not overlap with those three signals, too. This yields the following equations
\begin{IEEEeqnarray*}{rCl}
\label{det_scheme_rate_conditions}
R_1+R_4+R_3+R_5 &\leq &  n_1\\
R_2+R_4+R_3+R_5 &\leq &  n_2\\
R_4+R_3+R_5&\leq & n_4\\
R_3+R_5&\leq &  n_3.
\end{IEEEeqnarray*}
We introduce the rate $R_\text{Fix}:=R_3+R_4+R_5$ to denote the sum of the bit-level rates which are fixed by the scheme, i.e. do not overlap. Therefore, we have that 
\begin{IEEEeqnarray*}{rCl}
R_1 &\leq &  n_1-R_\text{Fix}\\
R_2+n_1-n_2 &\leq &  n_1-R_\text{Fix}
\end{IEEEeqnarray*} and it follows that
$n(\mathbf{\bar{u}}_k)\leq n_1-R_\text{Fix}$, for $k\in \{1,2\}$, $\mathbf{\bar{u}}_k \neq 0$ and $n(\mathbf{\bar{u}})$ which gives the smallest index $i$, such that $\bar{u}_i=1$, and $n(\mathbf{0})=\infty$. This means we have that
\begin{equation}
\min\{ n(\mathbf{\bar{u}}_4), n(\mathbf{\bar{u}}_5),n(\mathbf{\bar{u}}_3) \}> n_1-R_\text{Fix}.
\label{privatebits}
\end{equation}
We can now remove the dependence of the outage set on the signals $\mathbf{\bar{u}}_4$, $\mathbf{\bar{u}}_5$, $\mathbf{\bar{u}}_3$. Note that we have $n(\mathbf{\bar{G}}\mathbf{\bar{u}})=n(\mathbf{\bar{u}})$, since the matrices $\mathbf{\bar{G}}$ are unit lower triangular and therefore down-shift (to bigger indices) the signal $\mathbf{\bar{u}}$\footnote{We have $=$ instead of $\leq$, due to the unit diagonal.}. We therefore have that
\begin{equation}
\mathbf{\bar{G}}_1(\mathbf{\bar{u}}_1 \oplus \mathbf{\bar{u}}_4 \oplus \mathbf{\bar{u}}_5) \oplus \mathbf{\bar{G}}_2 \mathbf{\bar{u}}_2 \oplus\mathbf{\bar{G}}_3 \mathbf{\bar{u}}_3= \mathbf{0}
\end{equation} only if
\begin{IEEEeqnarray*}{rCl}
n\left(\mathbf{\bar{G}}_1\mathbf{\bar{u}}_1 \oplus \mathbf{\bar{G}}_2 \mathbf{\bar{u}}_2 \right) &=& n \left( \mathbf{\bar{G}}_1(\mathbf{\bar{u}}_4 \oplus \mathbf{\bar{u}}_5) + \mathbf{\bar{G}}_3 \mathbf{\bar{u}}_3 \right)\\
&\geq & \min\{ n ( \mathbf{\bar{G}}_1(\mathbf{\bar{u}}_4 \oplus \mathbf{\bar{u}}_5)), n(\mathbf{\bar{G}}_3 \mathbf{\bar{u}}_3)\}\\
&=& \min \{n (\mathbf{\bar{u}}_4 \oplus \mathbf{\bar{u}}_5), n(\mathbf{\bar{u}}_3) \}\\
&=&\min \{ \min\{ n (\mathbf{\bar{u}}_4), n(\mathbf{\bar{u}}_5)\}, n(\mathbf{\bar{u}}_3) \}\\
&=&\min \{ n (\mathbf{\bar{u}}_4), n(\mathbf{\bar{u}}_5), n(\mathbf{\bar{u}}_3) \}\\
&>& n_1-R_\text{Fix}.
\end{IEEEeqnarray*}
Also, we have for $(\mathbf{\bar{u}}_4,\mathbf{\bar{u}}_3,\mathbf{\bar{u}}_5)\neq (\mathbf{0},\mathbf{0},\mathbf{0})$ that 
\begin{equation*}
\mathbf{\bar{G}}_1(\mathbf{\bar{u}}_1 \oplus \mathbf{\bar{u}}_4 \oplus \mathbf{\bar{u}}_5) \oplus \mathbf{\bar{G}}_2 \mathbf{\bar{u}}_2 \oplus\mathbf{\bar{G}}_3 \mathbf{\bar{u}}_3= \mathbf{0}
\end{equation*} can hold only if $(\mathbf{\bar{u}}_1,\mathbf{\bar{u}}_2)\neq (\mathbf{0},\mathbf{0})$, again due to the non-overlapping property of the signals $\mathbf{\bar{u}}_4,\mathbf{\bar{u}}_3$ and $\mathbf{\bar{u}}_5$.
Define the sets 
\begin{IEEEeqnarray*}{rCl}
&B'&(\mathbf{\bar{u}}_0,\mathbf{\bar{u}}_1,\mathbf{\bar{u}}_2 ):= \lbrace (g_0,g_1,g_2)\in (1,2]^3: \\
&&n\left(\mathbf{\bar{G}}_0\mathbf{\bar{u}}_0 \oplus\mathbf{\bar{G}}_1\mathbf{\bar{u}}_1 \oplus \mathbf{\bar{G}}_2 \mathbf{\bar{u}}_2 \right)
> n_1-R_{\text{Fix}} \rbrace
\end{IEEEeqnarray*} and 
\begin{IEEEeqnarray*}{rCl}
\mathcal{U}&:=&\mathcal{U}(n_0,n_0-R_0)\times \mathcal{U}(n_1,n_1-R_1)\times \mathcal{U}(n_2,n_2-R_2).
\end{IEEEeqnarray*} with the outage set 
\begin{equation*}
B\subset B' := \bigcup_{(\mathbf{\bar{u}}_0, \mathbf{\bar{u}}_1,\mathbf{\bar{u}}_2)\in \mathcal{U} \setminus \{(\mathbf{0},\mathbf{0},\mathbf{0})\} } B'(\mathbf{\bar{u}}_0,\mathbf{\bar{u}}_1,\mathbf{\bar{u}}_2 ).
\end{equation*}
Observe that this corresponds to our case for $\mathbf{\bar{u}}_0=\mathbf{0}$ and $R_0=0$.
It was shown in \cite[Proof of Lemma 11]{Niesen-Ali}, that the measure of the set 
\begin{equation*}
B\subset B' := \bigcup_{(\mathbf{\bar{u}}_0, \mathbf{\bar{u}}_1,\mathbf{\bar{u}}_2)\in \mathcal{U} \setminus \{(\mathbf{0},\mathbf{0},\mathbf{0})\} } B'(\mathbf{\bar{u}}_0,\mathbf{\bar{u}}_1,\mathbf{\bar{u}}_2 )
\end{equation*}can be bounded by 
\begin{equation}
\mu(B) \leq \delta
\end{equation}
if
\begin{IEEEeqnarray*}{rCl}
R_1+R_0+R_2+R_{\text{Fix}} &\leq &  n_1-\log(16/\delta)\\
R_0+R_2+R_{\text{Fix}} &\leq &  n_0-\log(16/\delta)\\
R_2+R_{\text{Fix}} &\leq & n_2.
\end{IEEEeqnarray*}
If we plug-in $\mathbf{\bar{u}}_0=\mathbf{0}$ and go through the proof, one can see that this reduces the number of cases from four to one, which results in a lower gap and just two conditions, namely
\begin{IEEEeqnarray*}{rCl}
R_1+R_2+R_{\text{Fix}} &\leq &  n_1-\log(4/\delta)\\
R_2+R_{\text{Fix}} &\leq & n_2.
\end{IEEEeqnarray*}
Together with our conditions for the fixed, non-overlapping parts, we have that 
\begin{IEEEeqnarray*}{rCl}
R_1+R_2+R_4+R_3+R_5 &\leq &  n_1-\log(4/\delta)\\
R_2+R_4+R_3+R_5 &\leq &  n_2\\
R_4+R_3+R_5&\leq & n_4\\
R_3+R_5&\leq &  n_3.
\end{IEEEeqnarray*}
Now we can set 
\begin{IEEEeqnarray*}{rCl}
&R_1:= R^c_{k1}\qquad\qquad\qquad &n_1:=n^k_{k1}\\
&R_2:= R^c_{k2}\qquad\qquad\qquad &n_2:=n^k_{k2}\\
&R_3:= \max\{R^c_{l2},R^c_{l1}\}\quad\  &n_4:=(n^k_{k1}-n^k_l)\\
&R_4:= R^{p2}_{k1}\qquad\qquad\qquad &n_3:=n^l_k\\
&R_5:= R^{p1}_{k1}.\qquad\qquad\qquad
\end{IEEEeqnarray*} and replace $\delta$ with $\delta/2$ to account for the measure of the overall outage set (over both receivers) which concludes the proof of Lemma~\ref{ltd-weak-decoding-lemma}.
\section{Proof of The Decoding Lemma for the Weak Interference Case}
\label{Proof_Gaussian_Weak_Case}
In order to get a similar result as in Lemma \ref{decoding_lemma_g_strong} for the weak interference case, we need to prove a modified version of \cite[Lemma 12]{Niesen-Ali}, which is
\begin{lemma}
\label{decoding_general}
Let $n_3$, $n_1$, $n_2 \in \mathbb{Z}_+$ such that $n_1\geq n_2 \geq n_3$ and $\tfrac{n_3}{n_1}<\tfrac{1}{2}$, and let $R_1,R_2,R_3,R_4,R_5\in \mathbb{Z}_+$. Define the event 
\begin{IEEEeqnarray*}{rCl}
B&(&u_1,u_2,u_3,u_4,u_5):=
 \lbrace (g_1,g_2,g_3)\in (1,4]^3:\\
 &&\quad|g_2u_2+g_1(u_1+u_4+u_5)+g_3 u_3|\leq 2^{5-n_1} \rbrace
\end{IEEEeqnarray*} 
and set 
\begin{equation}
B:=\bigcup_{(u_1,u_2,u_3,u_4,u_5)\in \mathcal{U}\setminus\{(0,0,0,0)\}} B(u_1,u_2,u_3,u_4,u_5).
\end{equation}
For any $\delta \in (0,1]$ satisfying 
\begin{IEEEeqnarray*}{rCl}
R_1+R_2+R_5+R_4+R_3&\leq &  n_1-6-\log(1008/\delta)\\
R_2+R_5+R_4+R_3&\leq &  n_2-6\\
R_5+R_4+R_3&\leq &  (n_1-n_3)\\
R_4+R_3&\leq &  n_3
\end{IEEEeqnarray*} 
we have $\mu(B)\leq \delta$. 
\end{lemma}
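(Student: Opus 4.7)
The proof plan follows the template of \cite[Lemma 12]{Niesen-Ali}, which handles an analogous three-signal outage calculation for the Gaussian X-channel, and extends it to accommodate the five-signal structure arising from the split private part in our weak-interference scheme. The novelty is that, beyond the two codewords $u_1, u_2$ that sit on overlapping bit-scales and the aligned interference $u_3$, the achievable scheme produces two further private components $u_4, u_5$ that live on bit-ranges which are \emph{disjoint} from each other and from $u_3$; this disjointness is exactly what is encoded in the last two rate inequalities $R_5+R_4+R_3\leq n_1-n_3$ and $R_4+R_3\leq n_3$, and it is the Gaussian counterpart of the non-overlap property exploited in Appendix~\ref{Proof of Lemma groshev-det-weak}.

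The first step is to set up the outage event as a Diophantine condition: for fixed $(u_1,u_2,u_3,u_4,u_5)\in\mathcal{U}\setminus\{0\}$, the bad set
$B(u_1,u_2,u_3,u_4,u_5)=\{(g_1,g_2,g_3)\in(1,4]^3:|g_2 u_2+g_1(u_1+u_4+u_5)+g_3 u_3|\leq 2^{5-n_1}\}$
is analysed by slicing. When $u_2\neq 0$, I would integrate out $g_2$ first, obtaining a slice of measure at most $2^{5-n_1}/|u_2|\leq 2^{5-n_1+n_2-R_2+O(1)}$; when $u_2=0$ and $u_3\neq 0$ I would slice in $g_3$ using the bound $|u_3|\geq 2^{-n_3+R_3+O(1)}$ coming from the top-bit scale of $u_3$; when $u_2=u_3=0$ and one of $u_1,u_4,u_5$ is nonzero, I would slice in $g_1$, using that the non-overlap of $u_4,u_5$ in the private region forces $|u_1+u_4+u_5|\geq 2^{-n_1+\max(R_1,R_4+R_3,R_4+R_5+R_3)+O(1)}$ whenever the sum is nonzero (this is where condition~(c) and (d) are really used).

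Next, the union bound over $(u_1,u_2,u_3,u_4,u_5)\in\mathcal{U}\setminus\{0\}$ contributes a factor at most $2^{R_1+R_2+R_3+R_4+R_5}$, one bit per coordinate per scale. Multiplying the cardinality by the per-slice measure and matching scales yields, in the worst case (slicing in $g_2$),
$\mu(B)\leq 2^{R_1+R_2+R_3+R_4+R_5}\cdot 2^{5-n_1+n_2-R_2+O(1)}\cdot 2^{-(n_2-R_2-R_3-R_4-R_5)}$,
which simplifies after cancellation to $2^{R_1+R_2+R_3+R_4+R_5-n_1+O(1)}$. Imposing condition~(a) $R_1+R_2+R_3+R_4+R_5\leq n_1-6-\log(1008/\delta)$ makes this bound at most $\delta$, and the remaining slicing subcases (handled by conditions~(b)--(d) in the same way) yield the same tail estimate up to constants absorbed into the $\log(1008/\delta)$ factor.

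The main obstacle is a careful bookkeeping of the worst-case scaling across the several slicing subcases, so that a \emph{single} gap of the form $\log(c/\delta)$ covers all of them: the Niesen--Ali argument handles four signals and needs $c=13104$, and adding the fifth signal with the disjoint-scale structure forces us to re-examine each slice and verify that the non-overlap inequalities (c), (d) replace the missing Diophantine freedom that a fifth fully generic signal would have required. Once this verification is carried out, the constant $c=1008$ in the statement comes out of the combinatorics of the union bound together with the absolute-value bounds $|g_i|\leq 4$, exactly as in \cite[Lemma 12]{Niesen-Ali}. Combining this with the previously proven Lemma~\ref{decoding_lemma_g_strong} and the modulation-to-outer-code argument around \eqref{outer_modulation} then delivers the achievability claim for the weak-interference regime.
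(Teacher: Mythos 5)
Your proposal diverges from the paper's proof in its core mechanism, and the divergence leaves a genuine gap. The paper does not slice in whichever gain multiplies a nonzero component and then union-bound over all tuples. It first exploits the non-overlap encoded in the last two rate conditions to pack $A_3q_3^{+}+q_4^{+}+A_5q_5^{+}$ into a single integer with $R_3+R_4+R_5$ bits (Lemma~\ref{fitting_lemma}), deduces $|g_1(q_4+A_5q_5)+g_3A_3q_3|\leq 2^{R_3+R_4+R_5+2}$, and absorbs this via the triangle inequality into an enlarged threshold $\beta=2^{R_3+R_4+R_5+6}$. This collapses the three-gain, five-signal event into the two-gain event $\{|g_2A_2q_2+g_1A_1q_1|\leq\beta\}$, to which the non-asymptotic two-parameter Khintchine--Groshev lemma of \cite{Niesen-Ali} is applied; the constant $1008$ and the second condition (which arises from normalizing $\beta'\leq 1$ in the case $A_2\leq A_1$, not from a ``$u_2=0$'' slicing case) are inherited directly from that lemma.

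The concrete gap in your route is that ``number of tuples times per-slice measure'' cannot deliver a constant $c$. Slicing in $g_2$ for a fixed tuple with $q_2\neq 0$ gives measure $O(\beta/(A_2|q_2|))$, and summing over $q_2$ produces a harmonic factor $\sum_{1\leq|q_2|\leq Q_2}1/|q_2|=\Theta(R_2)$, i.e.\ an SNR-dependent loss of $\log R_2$ bits in the rate condition rather than a constant. The Groshev-type lemma avoids this because for fixed $q_2$ only $O(A_2|q_2|)$ values of $q_1$ yield a strip that actually intersects the cube $(1,4]^2$ (the ratio $q_1/(A_2q_2)$ must lie in a bounded window); this geometric counting is exactly what your union bound omits. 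Two further symptoms: your displayed estimate does not cancel as claimed --- the exponent works out to $R_1+R_2+2(R_3+R_4+R_5)-n_1+O(1)$, double-counting the private rates --- and your ``slice in $g_1$'' case is vacuous when the only nonzero difference sits in the least-significant bits of $u_4$, since there $|u_1+u_4+u_5|$ can equal $2^{-n_1}$ and the resulting slice bound $2\cdot 2^{5-n_1}/2^{-n_1}=2^6$ exceeds the width of $(1,4]$. To repair the argument you need the paper's reduction-to-two-gains step and then the Groshev lemma invoked as a black box, rather than a re-derivation by naive counting.
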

Where the set $\mathcal{U}$ is defined as
\begin{IEEEeqnarray*}{rCl}
\mathcal{U}(n^-,n^+)&:=&\lbrace u\in [-1,1]: [u]_i=0\\
&\forall& i\in \{1,\ldots,n_1-n^-\} \cup \{n_1-n^++1,\ldots\} \rbrace
\end{IEEEeqnarray*} 
and 
\begin{IEEEeqnarray*}{rCl}
\mathcal{U}&:=&\mathcal{U}(n_1,n_1-R_1)\times \mathcal{U}(n_2,n_2-R_2)\times\\
&\times&\mathcal{U}((n_1-n_3),(n_1-n_3)-R_5)\times \\
&\times&\mathcal{U}(n_3,n_3-R_3)\times\mathcal{U}(R_4,0).
\end{IEEEeqnarray*} 
This means that $\mathcal{U}$ represents the set of possible inputs, if constraint to the specific bit-structure of the LTDM scheme.
\begin{proof}
The proof follows closely the one from \cite{Niesen-Ali}. We will therefore just point out the main differences and refer the reader to the original proof for a more detailed exposition. First of all, note that due to the specific scheme in the LTD model we have the equations
\begin{IEEEeqnarray*}{rCl}
R_5+R_4+R_3&\leq &  (n_1-n_i)\\
R_4+R_3&\leq &  n_i.
\end{IEEEeqnarray*} 
This is because the parts $R_5$, $R_4$, $R_3$ are constructed such that they do not overlap (except the aligning interference signals), see Fig.~\ref{rateallo}.
We now represent the set $B(u_1,u_2,u_3,u_4,u_5)$ as
\begin{IEEEeqnarray*}{rCl}
&B&(u_1,u_2,u_3,u_4,u_5):= \lbrace (g_1,g_2,g_3)\in (1,4]^3: \\
&&|g_22^{n_1}u_2+g_12^{n_1}(u_1+u_4+u_5)+2^{n_1}g_3 u_3|\leq 2^{5} \rbrace
\end{IEEEeqnarray*} and decomposes the shifted signals as
\begin{IEEEeqnarray}{rCl}
2^{n_1}u_2&=& A_2q_2\IEEEyessubnumber\\
2^{n_1}u_1&=& A_1 q_1\IEEEyessubnumber\\
2^{n_1}u_3&=& A_3 q_3\IEEEyessubnumber \\
2^{n_1}u_4 &=& q_4\IEEEyessubnumber\\
2^{n_1}u_5 &=& A_5q_5\IEEEyessubnumber
\end{IEEEeqnarray}
with a bit-offset at the receiver of $A_{k}:=2^{n_k-R_k}$ for $k\in{1,2,3,5}$ and a used number of bits 
\begin{equation*}
q_k \in \{-Q_k,-Q_k+1,\ldots,Q_k-1,Q_k\}
\end{equation*}
for $k\in\{1,2,3,4,5\}$ and $Q_k:=2^{R_k}$. Observe that our set $B$ has the same structure as the one in \cite{Niesen-Ali}, but with an additional variable
attached to the aligning part $g_1$. To solve it in the same fashion, we would need to proof a generalization of {\cite[Lemma~14]{Niesen-Ali}}. However,
we can circumvent this by exploiting the structure of the weak interference case. We can use that the private allocations of the stronger user are not overlapping with the aligned interference parts.
This enables a bound on those parts, which simplifies the analysis and lets us apply {\cite[Lemma~14]{Niesen-Ali}} as it is. The following Lemma introduces this idea.

\begin{lemma}
\label{fitting_lemma}
For $g_1,g_2\in (1,4]$, $A_{k}$ and $q_k$ as defined above, it holds that
\begin{equation}
|g_1(q_4+A_5q_5)+g_3 A_3 q_3| \leq 2^{R_3+R_4+R_5+2}.
\end{equation}
\end{lemma}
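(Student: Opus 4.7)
The plan is to prove Lemma~\ref{fitting_lemma} by a triangle-inequality estimate that exploits the bit-level structure of the integer-valued quantities $q_4$, $A_5 q_5$ and $A_3 q_3$. First I would apply the triangle inequality
\[
\bigl|g_1(q_4 + A_5 q_5) + g_3 A_3 q_3\bigr|\;\le\;|g_1|\,|q_4| + |g_1|\,A_5|q_5| + |g_3|\,A_3|q_3|,
\]
and substitute the uniform bounds $|g_1|,|g_3|\le 4 = 2^{2}$ coming from $g_{ik}^j\in(1,4]$, the magnitude bounds $|q_k|\le Q_k = 2^{R_k}$ for $k\in\{3,4,5\}$, and the shift factors $A_3=2^{n_3-R_3}$ and $A_5=2^{(n_1-n_3)-R_5}$ that are read off from the memberships $u_3\in\mathcal U(n_3,n_3-R_3)$ and $u_5\in\mathcal U(n_1-n_3,(n_1-n_3)-R_5)$.

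Next I would invoke the two rate hypotheses $R_3+R_4\le n_3$ and $R_3+R_4+R_5\le n_1-n_3$ from Lemma~\ref{decoding_general}. In terms of integer supports, these tell me that $q_4$ occupies bit-positions $[0,R_4)$, that $A_3q_3$ sits in $[n_3-R_3,n_3)$, and that $A_5q_5$ sits in $[(n_1-n_3)-R_5,n_1-n_3)$, with the two inequalities together guaranteeing that these three windows are pairwise disjoint and all contained in $[0,n_1-n_3)$. Because the windows are separated by zero bits, no carry propagation can occur, so the sum $q_4+A_3q_3+A_5q_5$ has at most $R_3+R_4+R_5$ significant bits and hence magnitude at most $2^{R_3+R_4+R_5}$. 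Combining with the coefficient bound $|g_k|\le 2^2$ then yields the target inequality $\bigl|g_1(q_4+A_5q_5)+g_3A_3q_3\bigr|\le 2^{R_3+R_4+R_5+2}$.

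The main obstacle will be the coefficient mismatch between $g_1$ and $g_3$: since $g_1$ scales two of the three terms but $g_3$ scales only the third, the triple cannot be factored as a single coefficient times a single integer. I would resolve this by writing $g_3A_3q_3 = g_1A_3q_3 + (g_3-g_1)A_3q_3$ with $|g_3-g_1|<3<2^2$; the first piece then joins $g_1(q_4+A_5q_5)$ in a single product $g_1\cdot(q_4+A_3q_3+A_5q_5)$ to which the disjoint-support argument applies directly, while the correction term is absorbed into the same $+2$ budget by reusing the estimate $|A_3q_3|\le 2^{n_3}\le 2^{R_3+R_4+R_5}$ that follows from the hypotheses. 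Once this mismatch is handled and the disjointness of the three bit-windows is verified via the two rate constraints, the rest of the argument is a routine substitution.
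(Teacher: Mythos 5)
Your first two paragraphs are exactly the paper's proof: triangle inequality, the uniform bound $|g_1|,|g_3|\le 4=2^{2}$, and the observation that the non-overlapping bit-windows of $q_4$, $A_3q_3$ and $A_5q_5$ (guaranteed by $R_3+R_4\le n_3$ and $R_3+R_4+R_5\le n_1-n_3$) force $|q_4|+A_5|q_5|+A_3|q_3|\le 2^{R_3+R_4+R_5}$, which is precisely the identity $A_3q_3^{+}+q_4^{+}+A_5q_5^{+}=q_6^{+}$ with $R_6=R_3+R_4+R_5$ used in the paper. So the proposal is correct and takes essentially the same route. One caveat: your third paragraph is unnecessary --- the triangle inequality already handles the two different coefficients, so there is no ``mismatch'' to repair --- and the repair you sketch there would not work as stated, since $|A_3q_3|\le 2^{n_3}$ does \emph{not} imply $|A_3q_3|\le 2^{R_3+R_4+R_5}$ (the hypothesis $R_3+R_4\le n_3$ points the other way), so the correction term $(g_3-g_1)A_3q_3$ could not be absorbed into the $+2$ budget; fortunately the main argument never needs it.
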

\begin{proof}
Observe that due to the specific structure of the parts $u_3$, $u_4$, $u_5$ in $\mathcal{U}$ which is 
\begin{IEEEeqnarray*}{rCl}
u_5&=&b_12^{-(n_1-(n_1-n_3))-1}+\cdots+b_{R_5}2^{-(n_1-((n_1-n_3)-R_5))}\\
u_3&=&b_12^{-(n_1-n_3)-1}+\cdots+b_{R_3}2^{-(n_1-(n_3-R_3))}\\
u_4&=&b_12^{-(n_1-R_4)-1}+\cdots+b_{R_4}2^{-n_1}
\end{IEEEeqnarray*}
with $b_i\in\{0,1\}$
and the LTD scheme constraints such that $R_5\leq (n_1-2n_3)$, $R_3\leq \tfrac{n_3}{2}$ and $R_4\leq \tfrac{n_3}{2}$ we can rewrite the sum
\begin{equation}
\label{Fitting}
A_3 q_3^+ +q_4^+ +A_5q_5^+=q_6^+
\end{equation}
where $R_6:=R_3+R_4+R_5$ and $q_k^+ \in\{0,\ldots,Q_k-1,Q_k\}$ for $k\in\{3,4,5,6\}$ and $Q_k:=2^{R_k}$.
In particular this means that 
\begin{IEEEeqnarray}{rCl}
\label{priv_decod_cond}
R_5+R_4+R_3&\leq &  (n_1-n_3)\\
R_4+R_3&\leq &  n_3,
\end{IEEEeqnarray}
establishing two of the inequalities.
We can now show that 
\begin{IEEEeqnarray*}{rCl}
|g_1(q_4+A_5q_5)+g_3 A_3 q_3| &\leq& |g_1q_4|+|g_1A_5q_5|+|g_3 A_3 q_3|\\
&&\:\leq |g_1||q_4|+|g_1||A_5q_5|+|g_3| |A_3 q_3|\\
&&\:\leq 2^2(q_4^++A_5q_5^++A_3 q_3^+)\\
&&\:= 2^{R_3+R_4+R_5+2},
\end{IEEEeqnarray*}

where we used \eqref{Fitting}.\end{proof}
We can further rewrite $B$ using the triangle inequality and in addition use Lemma~\ref{fitting_lemma} such that

\begin{IEEEeqnarray*}{rCl}
\IEEEeqnarraymulticol{3}{l}{
B(u_1,u_2,u_3,u_4,u_5)}\\
&&=\{|g_2A_2q_2+g_1(A_1 q_1+q_4+A_5q_5)+g_3 A_3 q_3|\leq 2^{5}\}\\
&&\subseteq \: \{|g_2A_2q_2+g_1A_1 q_1|\leq 2^{5} + |g_1(q_4+A_5q_5)+g_3 A_3 q_3|\}\\
&&\subseteq \: \{|g_2A_2q_2+g_1A_1 q_1|\leq 2^{5} + 2^{R_3+R_4+R_5+2}\}\\
&&\subseteq \: \{|g_2A_2q_2+g_1A_1 q_1|\leq \beta \}\\
&&:= B^{'}(q_2,q_1)
\end{IEEEeqnarray*}

where $\beta:=2^{R_3+R_4+R_5+6}$.
Now we can apply Groshev's theorem. In particular, we need a generalisation of the theorem for an asymmetric and non-asymptotic setting, which was proven in \cite[Lemma~14]{Niesen-Ali}. In particular we need just the special case of two parameters of the Lemma:
\begin{lemma}
\label{groshev}
Let $\beta \in (0,1]$, $A_2 \in \mathbb{N}$, and $Q_1,Q_2 \in \mathbb{N}$. Define the event 
\begin{equation*}
B(q_1,q_2):=\{ (g_1,g_2)\in (1,4]^2 : |g_1q_1+A_2g_2q_2| < \beta\}
\end{equation*}
and set 
\begin{equation*}
B:= \bigcup_{\substack{q_1,q_2 \in \mathbb{Z}\\q_1,q_2 \neq \mathbf{0},\\|q_k| \leq Q_k \forall k}} B(q_1,q_2).
\end{equation*}

Then
\begin{equation}
\mu(B)\leq 1008\beta \left( \min \left\{Q_2, \frac{Q_1}{A_2}\right\} \right).
\end{equation}

\end{lemma}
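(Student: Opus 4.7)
The plan is to reduce the measure estimate to a controlled double sum by first eliminating trivially empty contributions and then using a geometric constraint to truncate the remaining sums. First, I would observe that because $\beta \leq 1$ and $g_1, g_2 > 1$ together with $A_2 \geq 1$, any index pair $(q_1,q_2)$ in which $q_1$ and $q_2$ have the same sign (including the cases where exactly one of them is zero) contributes an empty set: $|g_1 q_1 + A_2 g_2 q_2|$ equals $g_1 |q_1| + A_2 g_2 |q_2|$, which is strictly larger than $1 \geq \beta$. Hence the union defining $B$ may be restricted to indices with $q_1 q_2 < 0$.

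For each such pair with $q_1, q_2 \neq 0$, I would bound $\mu(B(q_1,q_2))$ via Fubini. Fixing $g_2 \in (1,4]$, the slice $\{g_1 \in (1,4] : |g_1 q_1 + A_2 g_2 q_2| < \beta\}$ is an interval in $g_1$ of length at most $2\beta/|q_1|$; integrating over the length-$3$ interval $g_2 \in (1,4]$ gives $\mu(B(q_1,q_2)) \leq 6\beta/|q_1|$, and the symmetric argument gives $6\beta/(A_2|q_2|)$. Combining yields the pointwise bound $\mu(B(q_1,q_2)) \leq 6\beta/\max\{|q_1|, A_2|q_2|\}$.

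The key geometric observation is that nonemptiness of $B(q_1,q_2)$ forces $g_1 q_1 \approx -A_2 g_2 q_2$, so $|q_1|/(A_2|q_2|)$ must coincide with a value of $g_2/g_1$, which itself lies in the bounded interval $(1/4, 4)$. Thus for each fixed $|q_2| \in [1, Q_2]$ only the values $|q_1| \in [A_2|q_2|/4,\, 4 A_2|q_2|]$ make nontrivial contributions. Splitting this window at $|q_1| = A_2|q_2|$ and using the better of the two bounds on each half collapses the sum over $|q_1|$ to an absolute constant times $\beta$: the upper half contributes $\sum 6\beta/|q_1| \leq 6\beta \log 4$, and the lower half contributes at most $\tfrac{3}{4}A_2|q_2| \cdot 6\beta/(A_2|q_2|) = O(\beta)$. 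Summing over $|q_2| \in [1, Q_2]$ and accounting for the two relevant sign patterns of $(q_1,q_2)$ yields $\mu(B) \leq C\beta Q_2$ for some absolute constant $C$. The symmetric argument with the roles of $q_1$ and $A_2 q_2$ interchanged (fix $|q_1|$, let $|q_2|$ range over $[|q_1|/(4A_2), 4|q_1|/A_2]$) gives $\mu(B) \leq C\beta \cdot Q_1/A_2$, so taking the minimum and carefully tracking all numerical factors (the $6$ from the Fubini slab, the $\log 4$ from the harmonic sum, the sign-count factor, and the endpoint rounding) produces the constant $1008$ stated in the lemma.

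The main obstacle is the geometric range restriction in the third step: without it, a naive application of $\sum_{q_1} 6\beta/|q_1|$ would generate an unwanted $\log Q_1$ factor and ruin the bound. The fact that $(1,4]^2$ is a bounded strip of aspect ratio $4$ is precisely what forces the ratio $|q_1|/(A_2|q_2|)$ into a window of constant multiplicative width, thereby collapsing what would otherwise be a logarithmic sum into a universal constant. The remainder of the argument is bookkeeping.
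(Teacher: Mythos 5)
Your proof is correct. Note first that the paper does not actually prove this lemma: it is imported verbatim as the two-variable special case of Lemma~14 of the cited Niesen--Maddah-Ali paper. Your argument is a valid self-contained proof, and it follows the same underlying strategy as the source: a Fubini slab bound $\mu(B(q_1,q_2))\leq 6\beta/\max\{|q_1|,A_2|q_2|\}$ for each index pair, the observation that $\beta\leq 1$ kills all same-sign pairs, and the near-cancellation constraint that confines the surviving pairs to the window $A_2|q_2|/4\leq |q_1|\leq 4A_2|q_2|$ (which follows rigorously from $g_1|q_1|<A_2g_2|q_2|+\beta\leq 4A_2|q_2|+1$ and integrality, pinning down the endpoints your ``$|q_1|/(A_2|q_2|)\approx g_2/g_1$'' heuristic suggests). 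This window of bounded multiplicative width is indeed the crucial point that collapses the would-be $\log Q_1$ into an absolute constant. Two minor bookkeeping remarks: the harmonic sum over the upper half of the window is $\sum_{k=N}^{4N}1/k\leq 1/N+\ln 4$, so your ``$6\beta\log 4$'' omits an endpoint term of at most $6\beta$; and in the symmetric direction (fixing $|q_1|$ and summing over $|q_2|$) the count of integers in the window needs a ``$+1$'' that is absorbed because the lower half is nonempty only when $|q_1|>A_2$. Neither affects the conclusion: a careful tally along your lines gives a constant on the order of $50$, comfortably below the stated $1008$ (which originates from the more general multi-parameter lemma in the source), and taking the minimum of the two one-sided bounds finishes the proof.
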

Now we can continue with the proof of Lemma \ref{decoding_general}. We need to normalize 
\begin{equation*}
|g_2A_2q_2+g_1A_1 q_1|< \beta
\end{equation*} in order to fit Lemma~\ref{groshev}. We first assume that $A_1 \leq A_2$.
Define \begin{IEEEeqnarray*}{rCl}
A'_1&:=&1\\
A'_2&:=&\tfrac{A_2}{A_1}=2^{-n_1+n_2-R_2+R_1}\\
\beta'&:=& \tfrac{\beta}{A_1} = 2^{R_1+R_3+R_4+R_5+6-n_1}
\end{IEEEeqnarray*}
Now, observe that we need $\beta' \in (0,1]$ and we therefore have the inequality 
\begin{equation}
\label{n_1_INEQ}
R_1+R_3+R_4+R_5\leq n_1 -6.
\end{equation}

Since we know that $n_1 \geq n_2$, we also know 
\begin{equation*}
Q_1=2^{R_1}\geq 2^{R_1+n_2-n_1}=A'_2Q_2.
\end{equation*}
We therefore have
\begin{equation*}
\mu(B')\leq 1008 \beta' Q_2.
\end{equation*}
Substituting the definitions yields
\begin{equation}
\mu(B')\leq 1008 \cdot 2^{R_1+R_2+R_2+R_3+R_4+R_5+6-n_1}.
\end{equation}
Combined with \eqref{n_2_INEQ}, this shows that if
\begin{IEEEeqnarray*}{rCl}
R_1+R_2+R_5+R_4+R_3&\leq &  n_1-6-\log(1008/\delta),\\
R_1+R_5+R_4+R_3&\leq &  n_1-6,
\end{IEEEeqnarray*}
we have $\mu(B')\leq \delta.$

The case for $A_2\leq A_1$ is done in the same fashion with appropriate index switching. For $\beta'$ it gives the condition
\begin{equation}
\label{n_2_INEQ}
R_2+R_3+R_4+R_5\leq n_2 -6,
\end{equation} which is stronger than \eqref{n_1_INEQ}. Furthermore, it yields the bound
\begin{equation*}
\mu(B')\leq 1008 \beta' \frac{Q_2}{A_1'},
\end{equation*}which gives again
\begin{equation}
\mu(B')\leq 1008 \cdot 2^{R_1+R_2+R_3+R_4+R_5+6-n_1}.
\end{equation}
We therefore have that the above equations, together with \eqref{priv_decod_cond} yields
\begin{IEEEeqnarray*}{rCl}
R_1+R_2+R_5+R_4+R_3&\leq &  n_1-6-\log(1008/\delta),\\
R_2+R_5+R_4+R_3&\leq &  n_2-6,\\
R_5+R_4+R_3&\leq &  (n_1-n_3),\\
R_4+R_3&\leq &  n_3,
\end{IEEEeqnarray*}
which results in $\mu(B')\leq \delta.$ This completes the proof of Lemma~\ref{decoding_general} and it can be used to show the following result: 

\begin{lemma}
\label{decoding_lemma_g_weak}
Let $\delta\in (0,1]$ and $n_1$, $n_2$, $n_i \in \mathbb{N}$, $n_1 \geq n_2 \geq n_i$ and $\tfrac{n_i}{n_1}< \tfrac{1}{2}$. Assume $R^P_{11}$, $R^{C}_{11}$, $R_{12}$, $R_{21}$, $R^P_{22}$ $R^{C}_{22}$ $\in \mathbb{Z}_+$ satisfy,
\begin{IEEEeqnarray*}{rCl}
R^{C}_{11}+R^C_{12} +R_{21}^C + R^{P_1}_{11}+ R^{P_2}_{11}&\leq &  n_{1}-\log\left(\tfrac{c}{\delta}\right)\\
R^C_{12} +R_{21}^C + R^{P_1}_{11}+ R^{P_2}_{11}&\leq&  n_{2}-6\\
R_{21}^C + R^{P_1}_{11}+ R^{P_2}_{11}&\leq&  (n_1-n_i)\\
R_{21}^C+ R^{P_2}_{11}&\leq&  n_i
\end{IEEEeqnarray*} 

and
\begin{IEEEeqnarray*}{rCl}
R^{C}_{21}+R_{22}^C+R^{C}_{11}+ R^{P_1}_{22}+ R^{P_2}_{22}&\leq&  n_1-\log\left(\tfrac{c}{\delta}\right)\\
R_{22}^C+R^{C}_{11}+ R^{P_1}_{22}+ R^{P_2}_{22} &\leq&  n_2-6\\
R^{C}_{11}+ R^{P_1}_{22}+ R^{P_2}_{22}&\leq&  (n_1-n_i)\\
R^{C}_{11}+R^{P_2}_{22}&\leq&  n_i
\end{IEEEeqnarray*} 
where $c:=2016$ and $R_{ik}$ is the rate of the signal $u_{ik}$. Then, the bit allocation of the LTD-IMAC applied to the Gaussian IMAC results in a minimum constellation distance $d\geq32$ at each receiver for all channel gains ($g_{ik}^j\in(1,2]^{2\times 2}$) except for a set $B\subset (1,2]^{2 \times 2}$ of Lebesgue measure $\mu(B)\leq \delta$.
\end{lemma}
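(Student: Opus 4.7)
The plan is to reduce the lemma to two applications of Lemma~\ref{decoding_general}, one at each receiver, and combine the resulting outage sets by a union bound — structurally identical to how Lemma~\ref{decoding_lemma_g_strong} uses its Groshev-type analogue in the regime $\alpha \geq 1/2$. The only new ingredient in the weak regime is the second private layer that sits above the aligned interference, and Lemma~\ref{decoding_general} was built specifically to accommodate it, so what remains is the organizational task of matching variables and rates.

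I will first work at receiver~$1$. A decoding failure occurs only when some nonzero difference tuple $(\Delta s_{11}, \Delta s_{12}, \Delta s_2^1)$ produces a received discrepancy $|g_{11}^1 \Delta s_{11} + g_{12}^1 \Delta s_{12} + g_2^1 \Delta s_2^1| < 32$. Expanding the bit-decomposition dictated by the LTD-IMAC scheme — writing $u_{11} = u_{11}^C + u_{11}^{P_1} + u_{11}^{P_2}$, keeping only the common layer of the weak user $u_{12}$, and using the modulation \eqref{modulation} to align $u_{21}^C + u_{22}^C$ — and factoring out the top shift $2^{n_{11}^1}$, this is exactly the bad event $B$ of Lemma~\ref{decoding_general} with the identifications
\[
(u_1, u_2, u_3, u_4, u_5) \;\leftrightarrow\; \bigl(\Delta u_{11}^C,\; \Delta u_{12},\; \Delta(u_{21}^C + u_{22}^C),\; \Delta u_{11}^{P_2},\; \Delta u_{11}^{P_1}\bigr),
\]
$(R_1,\ldots,R_5) = (R^C_{11}, R^C_{12}, R^C_{21}, R^{P_2}_{11}, R^{P_1}_{11})$, $(n_1, n_2, n_3) = (n_{11}^1, n_{12}^1, n_i)$, and $(g_1, g_2, g_3) = (g_{11}^1, g_{12}^1, g_2^1) \in (1,4]^3$. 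Under this identification the four hypotheses of Lemma~\ref{decoding_general} become exactly the four receiver-$1$ inequalities stated in the lemma. Invoking it with $\delta/2$ in place of $\delta$ then produces an outage set $B_1$ of Lebesgue measure at most $\delta/2$, and outside $B_1$ the minimum constellation distance at receiver~$1$ is at least $32$; the constant $c = 2 \cdot 1008 = 2016$ is exactly the $\log(1008/(\delta/2))$ that the substitution injects into the leading rate inequality.

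The argument at receiver~$2$ is symmetric under the exchange of cells: a second application of Lemma~\ref{decoding_general}, with the corresponding identifications of aligned interference and private layers in cell~$2$, gives an outage set $B_2$ of measure at most $\delta/2$. Setting $B := B_1 \cup B_2$ and applying the union bound yields $\mu(B) \leq \delta$, and outside $B$ both receivers simultaneously enjoy minimum distance at least $32$, as claimed.

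The main obstacle — and the reason the existing \cite[Lemma~12]{Niesen-Ali} does not suffice here — is that the LTD weak-interference scheme carves a private layer between the common part and the aligned interference, in addition to the private layer below the interference. The bit-support of this extra layer is precisely what the sets $\mathcal{U}$ in Lemma~\ref{decoding_general} prescribe, and the non-overlap conditions $R^C_{21} + R^{P_2}_{11} \leq n_i$ and $R^C_{21} + R^{P_1}_{11} + R^{P_2}_{11} \leq n_1 - n_i$ in the statement are exactly what lets the Fitting Lemma (Lemma~\ref{fitting_lemma}) fold that extra contribution into the right-hand side of the Groshev estimate. So the verification work is geometric bookkeeping: checking that our bit allocation lives in the $\mathcal{U}$ sets for the chosen rate quintuple. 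No new Diophantine estimates are needed beyond Lemma~\ref{decoding_general}.
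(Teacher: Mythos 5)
Your proposal is correct and takes essentially the same route as the paper, which proves Lemma~\ref{decoding_general} and then merely asserts that the present lemma follows: your per-receiver application of that lemma with the identification $(R_1,\dots,R_5)=(R^{C}_{11},R^{C}_{12},R^{C}_{21},R^{P_2}_{11},R^{P_1}_{11})$, $(n_1,n_2,n_3)=(n^1_{11},n^1_{12},n_i)$, invoked with $\delta/2$ at each receiver and combined by a union bound, is exactly the intended instantiation. The only caveat is cosmetic and is the paper's own: the first hypothesis of Lemma~\ref{decoding_general} carries an additional additive $-6$ that the stated constant $c=2016$ does not absorb, a bookkeeping slip your write-up inherits rather than introduces.
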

\end{proof}
\section{Verification of Decoding Conditions on LTDM Schemes}

We check the cases
\begin{IEEEeqnarray*}{rCl}
\text{I}&:&\ \alpha \in [0,\tfrac{1}{2}],\ \text{II}:\ \alpha \in (\tfrac{1}{2},\tfrac{3}{5}),\ \text{III}:\ \alpha \in [\tfrac{3}{5},1]\\
\text{IV}&:&\ \alpha \in (1,\tfrac{3}{2}],\ \text{V}:\ \alpha \in (\tfrac{3}{2},\infty)
\end{IEEEeqnarray*} separately. We restrict the following section on cell 1. The corresponding cases for cell 2 can be checked accordingly with adjusted indices.

{\bf Case I.1 ($n^{j\neq i}_{i} \leq \Delta_i$):}
In this case, $R^c_{12}=\min\{\left\lceil\tfrac{1}{2}n_1^2\right\rceil,(n^1_{12}-(n^1_{11}-n^2_1))^+\}=(n^1_{12}-(n^1_{11}-n^2_1))^+=0$. This is in accordance with the discussion above.
For the first restriction we have:
\begin{IEEEeqnarray*}{rCl}
R^c_{11}&+&R^{p1}_{11}+R^{p2}_{11}+R^c_{12}+\max\{R^c_{22},R^c_{21}\}\\
&=& \left\lfloor\tfrac{1}{2}n_1^2\right\rfloor+\left\lceil\tfrac{1}{2}n_2^1\right\rceil+\left\lfloor\tfrac{1}{2}n_2^1\right\rfloor+n_{11}^1-n_1^2-n_2^1\\
&=&n_{11}-\left\lceil\tfrac{1}{2}n_1^2\right\rceil \leq n_{11}
\end{IEEEeqnarray*}
The second one gives:
\begin{IEEEeqnarray*}{rCl}
R^{p1}_{11}&+&R^{p2}_{11}+R^c_{12}+\max\{R^c_{21},R^c_{22}\}\\
&=&\left\lceil\tfrac{1}{2}n_2^1\right\rceil+\left\lfloor\tfrac{1}{2}n_2^1\right\rfloor+n_{11}^1-n_1^2-n_2^1\\
&=&n_{11}^1-n_1^2
\end{IEEEeqnarray*}
which obeys the restriction since $n_{11}^1-n_1^2 \geq n^1_{11}-\Delta_1= n_{12}^1$.

{\bf Case I.2 ( $\tfrac{1}{2}n^{j\neq i}_{i} < \Delta_i < n^{j\neq i}_{i}$):}
Here, $R^c_{12}=\min\{\left\lceil\tfrac{1}{2}n_1^2\right\rceil,(n^1_{12}-(n^1_{11}-n^2_1))^+\}=n^1_{12}-(n^1_{11}-n^2_1)>0$. $R^c$ becomes valuable, and multi-user gain starts increasing.
\begin{IEEEeqnarray*}{rCl}
R^c_{11}&+&R^{p1}_{11}+R^{p2}_{11}+R^c_{12}+\max\{R^c_{22},R^c_{21}\}\\
&=& n^1_{12}-(n^1_{11}-n^2_1)+\left\lfloor\tfrac{1}{2}n_1^2\right\rfloor+\left\lceil\tfrac{1}{2}n_2^1\right\rceil+\left\lfloor\tfrac{1}{2}n_2^1\right\rfloor\\
&&+\:n_{11}^1-n_1^2-n_2^1\\
&=&n_{12}^1+\left\lfloor\tfrac{1}{2}n_1^2\right\rfloor\leq n_{12}^1+\Delta_1 = n_{11}^1.
\end{IEEEeqnarray*}
The next conditions result in:
\begin{IEEEeqnarray*}{rCl}
R^{p1}_{11}&+&R^{p2}_{11}+R^c_{12}+\max\{R^c_{21},R^c_{22}\}\\
&=&n^1_{12}-(n^1_{11}-n^2_1)+\left\lceil\tfrac{1}{2}n_2^1\right\rceil+\left\lfloor\tfrac{1}{2}n_2^1\right\rfloor+n_{11}^1-n_1^2-n_2^1\\
&=&n_{12}^1,
\end{IEEEeqnarray*}
which obeys the restriction since $n_{11}^1-n^2_1 < n_{11}^1-\Delta_1 = n_{12}^1$.

{\bf Case I.3 ($\Delta_i \leq \tfrac{1}{2}n^{j\neq i}_{i}$):}
In this case, $R^c_{12}=\min\{\left\lceil\tfrac{1}{2}n_1^2\right\rceil,(n^1_{12}-(n^1_{11}-n^2_1))^+\}=\left\lceil\tfrac{1}{2}n_1^2\right\rceil$. $R^c$ is fully valuable, and multi-user gain is at maximum.
\begin{IEEEeqnarray*}{rCl}
R^c_{11}&+&R^{p1}_{11}+R^{p2}_{11}+R^c_{12}+\max\{R^c_{22},R^c_{21}\}\\
&=&\left\lfloor\tfrac{1}{2}n_1^2\right\rfloor+\left\lceil\tfrac{1}{2}n_1^2\right\rceil+\left\lceil\tfrac{1}{2}n_2^1\right\rceil+\left\lfloor\tfrac{1}{2}n_2^1\right\rfloor+n_{11}^1-n_1^2-n_2^1\\
&=& n_{11}^1.
\end{IEEEeqnarray*}
The second condition results in
\begin{IEEEeqnarray*}{rCl}
R^{p1}_{11}&+&R^{p2}_{11}+R^c_{12}+\max\{R^c_{21},R^c_{22}\}\\
&=& \left\lceil\tfrac{1}{2}n_1^2\right\rceil+\left\lceil\tfrac{1}{2}n_2^1\right\rceil+\left\lfloor\tfrac{1}{2}n_2^1\right\rfloor+n_{11}^1-n_1^2-n_2^1\\
&=&n_{11}^1-\left\lfloor\tfrac{1}{2}n_1^2\right\rfloor
\end{IEEEeqnarray*}
which obeys the restriction since $n_{11}^1-\left\lfloor\tfrac{1}{2}n_1^2\right\rfloor \leq n_{11}^1-\Delta_1 = n_{12}^1$.
For the last two conditions we have
\begin{IEEEeqnarray*}{rCl}
R^{p1}_{11}&+&R^{p2}_{11}+\max\{R^c_{22},R^c_{21}\}\\
&=&\left\lceil\tfrac{1}{2}n_2^1\right\rceil+\left\lfloor\tfrac{1}{2}n_2^1\right\rfloor+n_{11}^1-n_1^2-n_2^1\\
&=&n^1_{11}-n_1^2.
\end{IEEEeqnarray*}

Furthermore,
\begin{IEEEeqnarray*}{rCl}
R^{p1}_{11}+\max\{R^c_{22},R^c_{21}\}&=&\left\lceil\tfrac{1}{2}n_2^1\right\rceil+\left\lfloor\tfrac{1}{2}n_2^1\right\rfloor=n_2^1.
\end{IEEEeqnarray*}
which is applicable to the cases I.1-3.

{\bf Case II.1 :}
The first sub-case is when $n_2\geq n_i+\left\lfloor\tfrac{1}{2}(n_1-n_i)\right\rfloor$.
Therefore $\left\lfloor\tfrac{1}{2}(n_1-n_i)\right\rfloor\leq (n_2-n_i)$ and $R^c_{k2}:=\left\lfloor\tfrac{1}{2}(n_1-n_i)\right\rfloor$. The second direct signal has enough power to provide the full multi-user gain. 
\begin{IEEEeqnarray*}{rCl}
R^c_{11}&+&R^{p1}_{11}+R^c_{12}+\max\{R^c_{22},R^c_{21}\}\\
&=&3\left\lfloor\tfrac{1}{2}(n_1-n_i)\right\rfloor+ n_i-\left\lfloor\tfrac{1}{2}(n_1-n_i)\right\rfloor\\
&\leq &n_i +2(\tfrac{1}{2}(n_1-n_i))=n_1.
\end{IEEEeqnarray*}
\begin{IEEEeqnarray*}{rCl}
R^{p1}_{11}&+&R^c_{12}+\max\{R^c_{22},R^c_{21}\}\\
&=&2\left\lfloor\tfrac{1}{2}(n_1-n_i)\right\rfloor+ n_i-\left\lfloor\tfrac{1}{2}(n_1-n_i)\right\rfloor\\
&\leq &n_2.
\end{IEEEeqnarray*}
\begin{IEEEeqnarray*}{rCl}
R^{p1}_{11}&+&\max\{R^c_{22},R^c_{21}\}\\
&=&\left\lfloor\tfrac{1}{2}(n_1-n_i)\right\rfloor+ n_i-\left\lfloor\tfrac{1}{2}(n_1-n_i)\right\rfloor\\
&=&n_i.
\end{IEEEeqnarray*}

{\bf Case II.2 :}
The second sub-case is when $n_2<n_i+\left\lfloor\tfrac{1}{2}(n_1-n_i)\right\rfloor$.
Therefore $\left\lfloor\tfrac{1}{2}(n_1-n_i)\right\rfloor>(n_2-n_i)$ and $R^c_{k2}:=(n_2-n_i)$. The weaker user has not enough power to provide the full multi-user gain. 
\begin{IEEEeqnarray*}{rCl}
R^c_{11}&+&R^{p1}_{11}+R^c_{12}+\max\{R^c_{22},R^c_{21}\}\\
&=&2\left\lfloor\tfrac{1}{2}(n_1-n_i)\right\rfloor+ n_i-\left\lfloor\tfrac{1}{2}(n_1-n_i)\right\rfloor+(n_2-n_i)\\
&\leq &\tfrac{1}{2}n_1+n_2-\tfrac{1}{2}n_i\\
&<&\tfrac{1}{2}n_1+n_i+\left\lfloor\tfrac{1}{2}(n_1-n_i)\right\rfloor-\tfrac{1}{2}n_i\leq n_1.
\end{IEEEeqnarray*}
\begin{IEEEeqnarray*}{rCl}
R^{p1}_{11}&+&R^c_{12}+\max\{R^c_{22},R^c_{21}\}\\
&=&n_i-\left\lfloor\tfrac{1}{2}(n_1-n_i)\right\rfloor+(n_2-n_i)+\left\lfloor\tfrac{1}{2}(n_1-n_i)\right\rfloor\\
&=&n_2.
\end{IEEEeqnarray*}
The last condition follows from case II.1. 

{\bf Case III.A.1 (B.1) :}
The cases are active for $n_2>n_1-\tfrac{1}{3}n_i$.
Therefore $\left\lfloor(n_2+\tfrac{2}{3}n_i-n_1)\right\rfloor\geq\left\lfloor\tfrac{1}{3}n_i\right\rfloor$ and $R^c_{k2}:=\left\lfloor\tfrac{1}{3}n_i\right\rfloor$. Full multi-user gain can be achieved. 
\begin{IEEEeqnarray*}{rCl}
R^c_{11}&+&R^{p1}_{11}+R^c_{12}+\max\{R^c_{22},R^c_{21}\}\\
&=&3\left\lfloor\tfrac{1}{3}n_i\right\rfloor+(n_1-n_i)\\
&\leq &n_1.
\end{IEEEeqnarray*}
\begin{IEEEeqnarray*}{rCl}
R^{p1}_{11}&+&R^c_{12}+\max\{R^c_{22},R^c_{21}\}\\
&=&2\left\lfloor\tfrac{1}{3}n_i\right\rfloor+(n_1-n_i)\\
&\leq&n_1-\tfrac{1}{3}n_i<n_2.
\end{IEEEeqnarray*}
\begin{IEEEeqnarray*}{rCl}
R^{p1}_{11}&+&\max\{R^c_{22},R^c_{21}\}\\
&=&\left\lfloor\tfrac{1}{3}n_i\right\rfloor+(n_1-n_i)\\
&\leq &n_1-\tfrac{2}{3}n_i\overset{\left(\tfrac{3}{5}\leq\alpha\right)}{\leq}n_i.
\end{IEEEeqnarray*}
Note that case III.C.1 follows on the same lines for $n_2\geq n_1-\tfrac{2}{3}n_i$ by careful evaluation of the decoding bounds, and re-indexing for $n_2\leq n_i$.

{\bf Case III.B.2 :}
The second sub-case of B is when $n_i<n_2\leq n_1-\tfrac{1}{3}n_i$.
Therefore $R^c_{k1}:=\left\lfloor\tfrac{1}{3}n_i\right\rfloor$ and $R^c_{k2}:=\left\lfloor((n_2-n_i)^+ + \tfrac{5}{3}n_i-n_1)^+\right\rfloor $. 
\begin{IEEEeqnarray*}{rCl}
R^c_{11}&+&R^{p1}_{11}+R^c_{12}+\max\{R^c_{22},R^c_{21}\}\\
&\leq&\tfrac{2}{3}n_i+(n_1-n_i)+n_2-n_i + \tfrac{5}{3}n_i-n_1\\
&\leq &\tfrac{1}{3}n_i+n_2<\tfrac{1}{3}n_i+n_1-\tfrac{1}{3}n_i=n_1.
\end{IEEEeqnarray*}
\begin{IEEEeqnarray*}{rCl}
R^{p1}_{11}&+&R^c_{12}+\max\{R^c_{22},R^c_{21}\}\\
&\leq &\tfrac{1}{3}n_i+(n_1-n_i)+n_2-n_i + \tfrac{5}{3}n_i-n_1\\
&=& n_2.
\end{IEEEeqnarray*}
The last case follows from III.B.1.

{\bf Case III.B.3 :}
The third sub-case is for $n_i\geq n_2>\tfrac{2}{3}n_i$ and $\tfrac{2}{3} < \alpha \leq \tfrac{3}{4}$.
Therefore $R^c_{k1}:=\left\lfloor\tfrac{1}{3}n_i\right\rfloor$ and $R^c_{k2}:=\left\lfloor(\tfrac{5}{3}n_i-n_1)^+\right\rfloor $. We have that
\begin{IEEEeqnarray*}{rCl}
R^c_{11}&+&R^{p1}_{11}+R^c_{12}+\max\{R^c_{22},R^c_{21}\}\\
&\leq &\tfrac{2}{3}n_i+(n_1-n_i)+\tfrac{5}{3}n_i-n_1\\
&= & \tfrac{4}{3}n_i \overset{\alpha \leq \tfrac{3}{4}} {\leq} n_1.
\end{IEEEeqnarray*}
\begin{IEEEeqnarray*}{rCl}
R^{p1}_{11}&+&R^c_{12}+\max\{R^c_{22},R^c_{21}\}\\
&\leq &\tfrac{1}{3}n_i+(n_1-n_i)+\tfrac{5}{3}n_i-n_1\\
&=&n_i.
\end{IEEEeqnarray*}
The last condition follows from B.1 by re-indexing (switching $n_2$ with $n_i$).

{\bf Case III.B.4 :}
The last sub-case of III.B is for $\tfrac{2}{3}n_i \geq n_2>n_1-n_i$ and again $\tfrac{2}{3} < \alpha \leq \tfrac{3}{4}$.
Therefore $R^c_{k1}:=\left\lfloor\tfrac{1}{3}n_i\right\rfloor+ \min \{\left\lfloor (\tfrac{2}{3}n_i-n_2)^+\right\rfloor , \left\lfloor (n_1-\tfrac{4}{3}n_i)^++\tfrac{1}{2}(2n_i-n_2-n_1)^+\right\rfloor \}$ and 
$R^c_{k2}:=(n_2-(n_1-n_i))^+$. For $n_2 \geq 2n_i-n_1$ we have $R^c_{k1}:=\left\lfloor\tfrac{1}{3}n_i\right\rfloor+ \left\lfloor (\tfrac{2}{3}n_i-n_2)^+\right\rfloor $ which results in
\begin{IEEEeqnarray*}{rCl}
R^c_{11}&+&R^{p1}_{11}+R^c_{12}+\max\{R^c_{22},R^c_{21}\}\\
&\leq &\tfrac{2}{3}n_i + (\tfrac{4}{3}n_i-2n_2)^+ +(n_1-n_i)+(n_2-(n_1-n_i))^+\\
&=& 2n_i-n_2\leq n_1.
\end{IEEEeqnarray*}
\begin{IEEEeqnarray*}{rCl}
R^{p1}_{11}&+&R^c_{12}+\max\{R^c_{22},R^c_{21}\}\\
&\leq &\tfrac{1}{3}n_i + (\tfrac{2}{3}n_i-n_2)^+ +(n_1-n_i)+(n_2-(n_1-n_i))^+\\
&=& n_i.
\end{IEEEeqnarray*} For $n_2 < 2n_i-n_1$ we have $R^c_{k1}:=\left\lfloor\tfrac{1}{3}n_i\right\rfloor+ \left\lfloor (n_1-\tfrac{4}{3}n_i)^++\tfrac{1}{2}(2n_i-n_2-n_1)^+\right\rfloor $ which results in 
\begin{IEEEeqnarray*}{rCl}
R^c_{11}&+&R^{p1}_{11}+R^c_{12}+\max\{R^c_{22},R^c_{21}\}\\
&\leq &\tfrac{2}{3}n_i +2(n_1-\tfrac{4}{3}n_i)^++(2n_i-n_2-n_1)^+\\
&& +\:(n_1-n_i)+(n_2-(n_1-n_i))^+\\
&=& n_1.
\end{IEEEeqnarray*}
\begin{IEEEeqnarray*}{rCl}
R^{p1}_{11}&+&R^c_{12}+\max\{R^c_{22},R^c_{21}\}\\
&\leq &\tfrac{1}{3}n_i + (n_1-\tfrac{4}{3}n_i)^++\tfrac{1}{2}(2n_i-n_2-n_1)^+\\
&& +\:(n_1-n_i)+(n_2-(n_1-n_i))^+\\
&=& \tfrac{n_2}{2}+\tfrac{n_1}{2}<n_i.
\end{IEEEeqnarray*}
The last condition follows from III.B.1.

{\bf Case III.A.2 :}
The second sub-case is when $n_i\leq n_2<n_1-\tfrac{1}{3}n_i$ and $\alpha<\tfrac{2}{3}$.
Therefore $\left\lfloor(n_2+\tfrac{2}{3}n_i-n_1)\right\rfloor<\left\lfloor\tfrac{1}{3}n_i\right\rfloor$ and $R^c_{k2}:=\left\lfloor(n_2+\tfrac{2}{3}n_i-n_1)\right\rfloor$. The weaker user has not enough power to provide the full multi-user gain.
\begin{IEEEeqnarray*}{rCl}
R^c_{11}&+&R^{p1}_{11}+R^c_{12}+\max\{R^c_{22},R^c_{21}\}\\
&=&2\left\lfloor\tfrac{1}{3}n_i\right\rfloor+(n_1-n_i)+\left\lfloor(n_2+\tfrac{2}{3}n_i-n_1)\right\rfloor\\
&\leq&\tfrac{2}{3}n_i+(n_1-n_i)+(n_2+\tfrac{2}{3}n_i-n_1)\\
&=&\tfrac{1}{3}n_i+n_2<n_1.
\end{IEEEeqnarray*}
\begin{IEEEeqnarray*}{rCl}
R^{p1}_{11}&+&R^c_{12}+\max\{R^c_{22},R^c_{21}\}\\
&=&\left\lfloor\tfrac{1}{3}n_i\right\rfloor+(n_1-n_i)+\left\lfloor(n_2+\tfrac{2}{3}n_i-n_1)\right\rfloor\\
&\leq& n_2.
\end{IEEEeqnarray*}
The last condition follows from case III.A.1. 

{\bf Case III.C.2 :}
This sub-case is when $n_1-n_i\leq n_2<n_1-\tfrac{2}{3}n_i$ and $\alpha>\tfrac{3}{4}$.
Therefore $(n_2-(n_1-n_i))^+<\left\lfloor\tfrac{1}{3}n_i\right\rfloor$ and $R^c_{k2}:=n_2-(n_1-n_i)$. The weaker user has not enough power to provide the full multi-user gain.
\begin{IEEEeqnarray*}{rCl}
R^c_{11}&+&R^{p1}_{11}+R^c_{12}+\max\{R^c_{22},R^c_{21}\}\\
&=&2\left\lfloor\tfrac{1}{3}n_i\right\rfloor+(n_1-n_i)+n_2-(n_1-n_i)+[n_1-n_2-\tfrac{2}{3}n_i]\\
&\leq& n_1.
\end{IEEEeqnarray*}
\begin{IEEEeqnarray*}{rCl}
&R&^{p1}_{11}+R^c_{12}+\max\{R^c_{22},R^c_{21}\}\\
&=&\left\lfloor\tfrac{1}{3}n_i\right\rfloor+(n_1-n_i)+n_2-(n_1-n_i)+\tfrac{1}{2}[n_1-n_2-\tfrac{2}{3}n_i]\\
&\leq& \tfrac{n_1}{2}+\tfrac{n_2}{2} < \tfrac{1}{2}(n_1-\tfrac{2}{3}n_i)+\tfrac{n_1}{2}\leq n_i.
\end{IEEEeqnarray*}
The last condition follows from case III.C.1.

{\bf Case IV.1 :}
The first sub-case is when $n_2>\tfrac{1}{3}n_i$.
Therefore $R^c_{k2}:=\left\lfloor\tfrac{1}{3}n_i\right\rfloor$. Since we are in the range $\alpha\geq 1$, the private part vanishes. Furthermore, full multi-user gain can be achieved. 
\begin{IEEEeqnarray*}{rCl}
R^c_{11}&+&R^c_{12}+\max\{R^c_{22},R^c_{21}\}\\
&=&3\left\lfloor\tfrac{1}{3}n_i\right\rfloor\\
&\leq&n_i.
\end{IEEEeqnarray*}
\begin{IEEEeqnarray*}{rCl}
R^c_{12}&+&\max\{R^c_{22},R^c_{21}\}\\
&=&2\left\lfloor\tfrac{1}{3}n_i\right\rfloor\\
&\leq &\tfrac{2}{3}n_i\overset{\left(\alpha \leq \tfrac{3}{2}\right)}{\leq}n_1.
\end{IEEEeqnarray*}

{\bf Case IV.2 :}
The second sub-case is when $n_2\leq\tfrac{1}{3}n_i$.
Therefore $R^c_{k2}:=n_2$, and the second term in $R^c_{k1}:=\left\lfloor\tfrac{1}{3}n_i\right\rfloor+\left\lfloor\tfrac{1}{2}(\tfrac{1}{3}n_i-n_2)^+\right\rfloor$ gets activated. The weaker user has not enough power to provide the full multi-user gain.
\begin{IEEEeqnarray*}{rCl}
R^c_{11}&+&R^c_{12}+\max\{R^c_{22},R^c_{21}\}\\
&=&2(\left\lfloor\tfrac{1}{3}n_i\right\rfloor+\left\lfloor\tfrac{1}{2}(\tfrac{1}{3}n_i-n_2)^+\right\rfloor) + n_2\\
&\leq&n_i.
\end{IEEEeqnarray*}
\begin{IEEEeqnarray*}{rCl}
R^c_{12}&+&\max\{R^c_{22},R^c_{21}\}\\
&=&\left\lfloor\tfrac{1}{3}n_i\right\rfloor+\left\lfloor\tfrac{1}{2}(\tfrac{1}{3}n_i-n_2)^+\right\rfloor + n_2\\
&\leq&\tfrac{1}{2}n_i-\tfrac{1}{2}n_2 \overset{\left(\alpha \leq \tfrac{3}{2}\right)}{\leq}\tfrac{3}{4}n_1-\tfrac{1}{2}n_2<n_1.
\end{IEEEeqnarray*}

{\bf Case V.1.1 :}
The first sub-case is when $n_2\geq\tfrac{1}{2}n_1$.
Therefore $R^c_{k2}:=\left\lfloor\tfrac{1}{2}n_1\right\rfloor$. Full multi-user gain can be achieved. 
\begin{IEEEeqnarray*}{rCl}
R^c_{11}&+&R^c_{12}+\max\{R^c_{22},R^c_{21}\}\\
&=&3\left\lfloor\tfrac{1}{2}n_1\right\rfloor\\
&\overset{\alpha\geq \tfrac{3}{2}}{\leq}&n_i.
\end{IEEEeqnarray*}
\begin{IEEEeqnarray*}{rCl}
R^c_{12}&+&\max\{R^c_{22},R^c_{21}\}\\
&=&2\left\lfloor\tfrac{1}{2}n_1\right\rfloor\\
&\leq &n_1.
\end{IEEEeqnarray*}

{\bf Case V.1.2 :}
The second sub-case is when $n_2<\tfrac{1}{2}n_1$ but $2n_1<n_i+n_2$.
Therefore $R^c_{k2}:=n_2$ and $R^c_{k1}:=n_1-n_2$. Full multi-user gain can be achieved. 
\begin{IEEEeqnarray*}{rCl}
R^c_{11}&+&R^c_{12}+\max\{R^c_{22},R^c_{21}\}\\
&=&n_2+2(n_1-n_2)=2n_1-n_2<n_i.
\end{IEEEeqnarray*}
\begin{IEEEeqnarray*}{rCl}
R^c_{12}&+&\max\{R^c_{22},R^c_{21}\}\\
&=&n_1.
\end{IEEEeqnarray*}

{\bf Case V.2 :}
The third sub-case is when $n_2<\tfrac{1}{2}n_1$ and $2n_1 \geq n_i+n_2$.
Therefore $R^c_{k2}:=n_2$ and $R^c_{k1}:=\left\lfloor\tfrac{1}{2}n_i-\tfrac{1}{2}n_2\right\rfloor$. The weaker user has not enough power to provide the full multi-user gain.
\begin{IEEEeqnarray*}{rCl}
R^c_{11}&+&R^c_{12}+\max\{R^c_{22},R^c_{21}\}\\
&=&n_2+2(\left\lfloor\tfrac{1}{2}n_i-\tfrac{1}{2}n_2\right\rfloor)\\
&\leq& n_i.
\end{IEEEeqnarray*}
\begin{IEEEeqnarray*}{rCl}
R^c_{12}&+&\max\{R^c_{22},R^c_{21}\}\\
&=&n_2+\left\lfloor\tfrac{1}{2}n_i-\tfrac{1}{2}n_2\right\rfloor\\
&\leq & \tfrac{1}{2}n_i+\tfrac{1}{2}n_2\\
&\leq& \tfrac{1}{2}n_i+ \tfrac{1}{2}(2n_1-n_i)=n_1.
\end{IEEEeqnarray*}

\bibliographystyle{IEEEtran}
\bibliography{ref}

\begin{thebibliography}{10}
\providecommand{\url}[1]{#1}
\csname url@samestyle\endcsname
\providecommand{\newblock}{\relax}
\providecommand{\bibinfo}[2]{#2}
\providecommand{\BIBentrySTDinterwordspacing}{\spaceskip=0pt\relax}
\providecommand{\BIBentryALTinterwordstretchfactor}{4}
\providecommand{\BIBentryALTinterwordspacing}{\spaceskip=\fontdimen2\font plus
\BIBentryALTinterwordstretchfactor\fontdimen3\font minus
  \fontdimen4\font\relax}
\providecommand{\BIBforeignlanguage}[2]{{%
\expandafter\ifx\csname l@#1\endcsname\relax
\typeout{** WARNING: IEEEtran.bst: No hyphenation pattern has been}%
\typeout{** loaded for the language `#1'. Using the pattern for}%
\typeout{** the default language instead.}%
\else
\language=\csname l@#1\endcsname
\fi
#2}}
\providecommand{\BIBdecl}{\relax}
\BIBdecl

\bibitem{Fritschek2014a}
R.~Fritschek and G.~Wunder, ``Upper bounds and duality relations of the linear
  deterministic sum capacity for cellular systems,'' in \emph{Proc. IEEE
  International Conference on Communications (ICC)}, June 2014, pp. 1884--1889.

\bibitem{Fritschek2014b}
------, ``Enabling the multi-user generalized degrees of freedom in the
  gaussian cellular channel,'' in \emph{Information Theory Workshop (ITW), 2014
  IEEE}, Nov 2014, pp. 107--111.

\bibitem{Fritschek2015a}
------, ``Deterministic imac revisited: Constant-gap capacity in the weak
  interference case,'' in \emph{Proc. IEEE International Conference on
  Communications (ICC)}, London, UK, June 2015, pp. 4211--4216.

\bibitem{Fritschek2015b}
------, ``Constant-gap sum-capacity approximation of the deterministic
  interfering multiple access channel,'' in \emph{2015 IEEE International
  Symposium on Information Theory (ISIT)}, Hong Kong, China, June 2015, pp.
  2643--2647.

\bibitem{Bresler2010}
G.~Bresler, A.~Parekh, and D.~Tse, ``{The Approximate Capacity of the
  Many-to-One and One-to-Many Gaussian Interference Channels},'' \emph{IEEE
  Transactions on Information Theory}, vol.~56, no.~9, pp. 4566--4592, 2010.

\bibitem{Niesen-Ali}
U.~Niesen and M.~Maddah-Ali, ``Interference alignment: From degrees of freedom
  to constant-gap capacity approximations,'' \emph{IEEE Transactions on
  Information Theory}, vol.~59, no.~8, pp. 4855--4888, Aug 2013.

\bibitem{etkin2008}
R.~Etkin, D.~Tse, and H.~Wang, ``Gaussian interference channel capacity to
  within one bit,'' \emph{IEEE Transactions on Information Theory}, vol.~54,
  no.~12, pp. 5534--5562, 2008.

\bibitem{Avestimehr2007}
A.~Avestimehr, S.~Diggavi, and D.~Tse, ``Wireless network information flow: A
  deterministic approach,'' \emph{IEEE Transactions on Information Theory},
  vol.~57, no.~4, pp. 1872--1905, April 2011.

\bibitem{Bresler2008}
G.~Bresler and D.~Tse, ``The two-user gaussian interference channel: a
  deterministic view,'' \emph{European Transactions on Telecommunications},
  vol.~19, no.~4, pp. 333--354, 2008.

\bibitem{4567589}
M.~Maddah-Ali, A.~Motahari, and A.~Khandani, ``Communication over mimo x
  channels: Interference alignment, decomposition, and performance analysis,''
  \emph{IEEE Transactions onInformation Theory}, vol.~54, no.~8, pp.
  3457--3470, Aug 2008.

\bibitem{4418479}
S.~Jafar and S.~Shamai, ``Degrees of freedom region of the mimo x channel,''
  \emph{IEEE Transactions on Information Theory}, vol.~54, no.~1, pp. 151--170,
  Jan 2008.

\bibitem{4567443}
V.~Cadambe and S.~Jafar, ``Interference alignment and degrees of freedom of the
  k -user interference channel,'' \emph{IEEE Transactions on Information
  Theory}, vol.~54, no.~8, pp. 3425--3441, Aug 2008.

\bibitem{Sridharan08}
S.~Sridharan, A.~Jafarian, S.~Vishwanath, S.~A. Jafar, and S.~Shamai, ``A
  layered lattice coding scheme for a class of three user gaussian interference
  channels,'' \emph{Allerton Conf. 2008}, 2008.

\bibitem{Suvarup2011}
S.~Saha and R.~A. Berry, ``Sum-capacity of a class of k-user gaussian
  interference channels within $o(k \log k)$ bits,'' \emph{Allerton Conf.
  2011}, 2011.

\bibitem{Chaaban16}
A.~Chaaban and A.~Sezgin, ``The approximate capacity region of the symmetric
  k-user gaussian interference channel with strong interference,'' \emph{IEEE
  Transactions on Information Theory}, vol.~62, no.~5, pp. 2592--2621, May
  2016.

\bibitem{Suh2008}
C.~Suh and D.~Tse, ``Interference alignment for cellular networks,'' in
  \emph{Proc. Allerton Conference on Communication, Control, and Computing},
  Monticello, IL, 2008.

\bibitem{chaaban2011}
A.~Chaaban, A.~Sezgin, B.~Bandemer, and A.~Paulraj, ``On gaussian multiple
  access channels with interference: Achievable rates and upper bounds,'' in
  \emph{Multiple Access Communications}.\hskip 1em plus 0.5em minus 0.4em\relax
  Springer Berlin Heidelberg, 2011, vol. 6886, pp. 87--96.

\bibitem{Buhler2012}
J.~B\"{u}hler and G.~Wunder, ``The multiple access channel interfering with a
  point to point link: Linear deterministic sum capacity,'' in \emph{IEEE
  International Conference on Communications (ICC)}, Ottawa, Canada, 2012.

\bibitem{Buhler2011}
------, ``On interference alignment and the deterministic capacity for cellular
  channels with weak symmetric cross links,'' in \emph{IEEE International
  Symposium on Information Theory Proceedings (ISIT)}, Saint-Petersburg,
  Russia, 2011.

\bibitem{Loeliger}
H.-A. Loeliger, ``Averaging bounds for lattices and linear codes,'' \emph{IEEE
  Transactions on Information Theory}, vol.~43, no.~6, 1997.

\bibitem{UrezZamir}
U.~Urez and R.~Zamir, ``Achieving $\frac{1}{2}\log(1+snr)$ on the awgn channel
  with lattice encoding and decoding,'' \emph{IEEE Transactions on Information
  Theory}, vol.~50, no.~10, 2004.

\bibitem{ComputeUForward}
B.~Nazer and M.~Gastpar, ``Compute-and-forward: Harnessing interference through
  structured codes,'' \emph{IEEE Transactions on Information Theory}, vol.~57,
  no.~10, pp. 6463--6486, Oct 2011.

\bibitem{Nazer2012}
B.~Nazer, ``Successive compute-and-forward,'' in \emph{Proceedings of the 22nd
  Biennial International Zurich Seminar on Communication (IZS 2012)}, Zurich,
  Switzerland, 2012.

\end{thebibliography}

\end{document}